\definecolor{darkgreen}{rgb}{0,0.5,0}
\pgfplotsset{compat=1.17}
\newtheorem{theorem}{Theorem}[section]
\newtheorem{lemma}[theorem]{Lemma}
\newtheorem{observation}[theorem]{Observation}
\newtheorem{reduction}{Reduction}
\newcommand{\defcal}[1]{\expandafter\newcommand\csname c#1\endcsname{{\mathcal{#1}}}}
\newcommand{\defbb}[1]{\expandafter\newcommand\csname b#1\endcsname{{\mathbb{#1}}}}
\newcommand{\defvec}[1]{\expandafter\newcommand\csname v#1\endcsname{{\mathbf{#1}}}}
\newcounter{calBbCounter}
    \edef\Letter{\Alph{calBbCounter}}
		\edef\letter{\alph{calBbCounter}}
\newcommand{\eps}{\varepsilon}
\newcommand{\nnR}{{\bR_{\geq 0}}}
\newcommand{\email}[1]{{\href{mailto:#1}{#1}}}
\newcommand{\headerRef}[1]{{\texorpdfstring{\ref{#1}}{\ref*{#1}}}}
\newcommand{\characteristic}{{\mathbf{1}}}
\newcommand{\USM}{{\texttt{USM}}}
\newcommand{\RUSM}{{\texttt{RegularizedUSM}}}
\newcommand{\DGDet}{{\texttt{DeterministicDG}}}
\newcommand{\DGRand}{{\texttt{RandomizedDG}}}
\newcommand{\midset}[2]{{#1^{(#2)}}}
\newcommand{\RSet}{{\mathtt{R}}}
\author{Kobi Bodek\thanks{Department of Mathematics and Computer Science, Open University of Israel. E-mail: \email{kobibodek@gmail.com}}
				\and
				Moran Feldman\thanks{Computer Science Department, University of Haifa. E-mail: \email{moranfe@cs.haifa.ac.il}}
}
\title{Maximizing Sums of Non-monotone Submodular and Linear Functions: Understanding the Unconstrained Case}
\begin{document}

\maketitle

\pagenumbering{Alph}
\thispagestyle{empty}

\begin{abstract}
Motivated by practical applications, recent works have considered maximization of sums of a submodular function $g$ and a linear function $\ell$. Almost all such works, to date, studied only the special case of this problem in which $g$ is also guaranteed to be monotone. Therefore, in this paper we systematically study the simplest version of this problem in which $g$ is allowed to be non-monotone, namely the unconstrained variant, which we term \texttt{Regularized Unconstrained Submodular Maximization} ({\RUSM}).

Our main algorithmic result is the first non-trivial guarantee for general {\RUSM}. For the special case of {\RUSM} in which the linear function $\ell$ is non-positive, we prove two inapproximability results, showing that the algorithmic result implied for this case by previous works is not far from optimal. Finally, we reanalyze the known Double Greedy algorithm to obtain improved guarantees for the special case of {\RUSM} in which the linear function $\ell$ is non-negative; and we complement these guarantees by showing that it is not possible to obtain $(1/2, 1)$-approximation for this case (despite intuitive arguments suggesting that this approximation guarantee is natural).

\medskip

\textbf{Keywords:} unconstrained submodular maximization, regularization, double greedy, non-oblivious local search, inapproximability
\end{abstract}

\newpage
\pagenumbering{arabic}
\setcounter{page}{1}

\newtoggle{firstStatementDone}
\section{Introduction} \label{sec:introduction}

The field of submodular optimization has been rapidly developing over the last two decades, partially due to new applications. Some of these applications have also motivated the optimization of composite objective functions that can be represented as the sum of a submodular function $g$ and a linear function $\ell$. Let us briefly discuss two such applications

The first application is \emph{optimization with a regularizer}. To avoid overfitting in machine-learning, it is customary to optimize a function of the form $g - \ell$, where $g$ is the quantity that we would like to maximize and $\ell$ is a (often linear) function that favors small solutions. This function $\ell$ is known as ``regularizer'' in the machine learning jargon, or ``soft constraint'' in the operations research jargon.

The other application we discuss is \emph{optimization with a curvature}. Traditionally, the theoretical study of submodular optimization problems looks for approximation guarantees that apply to all submodular functions, or at least all monotone submodular functions. However, approximation guarantees of this kind are often pessimistic, and do not capture the practical performance of the algorithms analyzed. This has motivated studying how the optimal approximation ratios of various submodular maximization problems depend on various numerical function properties. Historically, the first property of this kind to be defined was the \emph{curvature} property, which was suggested by Conforti and Cornu{\'{e}}jol~\cite{conforti1984submodular} already in $1984$. The curvature measures the distance of the submodular function from being linear, and a strong connection was demonstrated by  Sviridenko et al.~\cite{sviridenko2017optimal} between optimizing a submodular function with a given curvature and optimizing the sum $g + \ell$ of a monotone submodular function $g$ and a linear function $\ell$.

Motivated by the above applications, Sviridenko et al.~\cite{sviridenko2017optimal} also initialized the study of the optimization of $g + \ell$ sums. In particular, they described algorithms with optimal approximation guarantees for this problem when $g$ is a non-negative monotone submodular function, $\ell$ is a linear function and the optimization is subject to either a matroid or a cardinality constraint.\footnote{Technically, Sviridenko et al.~\cite{sviridenko2017optimal} proved optimal approximation guarantees only for the case in which the coefficient $\beta$ of $\ell$ is $1$ (see details below). However, their results were extended to the general case of $\beta \geq 0$ by Feldman~\cite{feldman2021guess}.} Later works obtained faster and semi-streaming algorithms for the same setting~\cite{feldman2021guess,harshaw2019submodular,kazemi2021regularized,nikolakaki2021efficient}. However, in contrast to all these (often tight) results for monotone submodular functions $g$, much less is known about the case of non-monotone submodular functions. In fact, we are only aware of a single previous work that considered $g + \ell$ sums involving such functions~\cite{lu2021regularized}.\footnote{Very recently, another work of this kind appeared as a pre-print~\cite{sun2022maximizing}. However, the main result of~\cite{sun2022maximizing} is identical to the result of~\cite{lu2021regularized}. In particular, it is important to note that the result of~\cite{sun2022maximizing} applies only to non-positive $\ell$ functions, like the result of~\cite{lu2021regularized}, although this is not explicitly stated in~\cite{sun2022maximizing}.}

Given the rarity of results so far for optimizing $g + \ell$ with a function $g$ that is non-monotone, this paper is devoted to a systematic study of the simplest problem of this kind, namely, unconstrained maximization of such sums. Formally, we study the \texttt{Regularized Unconstrained Submodular Maximization} (\RUSM) problem. In this problem, we are given a non-negative submodular function $g \colon 2^\cN \to \nnR$ and a linear function $\ell \colon 2^\cN \to \bR$ over the same ground set $\cN$, and the objective is to output a set $T \subseteq \cN$ maximizing the sum $g(T) + \ell(T)$. Unfortunately, it is not possible to prove standard multiplicative approximation ratios for {\RUSM} (implied, e.g., by Theorem~\ref{thm:negative_inapproximability}). Therefore, we follow previous works, and look in this work for algorithms that output a (possibly randomized) set $T \subseteq \cN$ such that $\bE[g(T) + \ell(T)] \geq \max_{S \subseteq \cN} [\alpha \cdot g(S) + \beta \cdot \ell(S)]$ for some coefficients $\alpha, \beta \geq 0$. For convenience, we say that an algorithm having this guarantee is an $(\alpha, \beta)$-approximation algorithm.\footnote{Some previous works compare their algorithms against $\alpha \cdot g(OPT) + \beta \cdot \ell(OPT)$, where $OPT$ is a feasible set maximizing $g(OPT) + \ell(OPT)$, instead of comparing against $\max_{S \subseteq \cN} [\alpha \cdot g(S) + \beta \cdot \ell(S)]$ like we do in this paper. This distinction is usually of little consequence.}

It is instructive to begin the study of {\RUSM} with the special case in which the objective function $g$ is guaranteed to be monotone (in addition to being non-negative and submodular). We refer below to this special case as ``monotone {\RUSM}''. The work of Feldman~\cite{feldman2021guess} on constrained maximization of $g + \ell$ immediately implies $(1 - e^{-\beta}, \beta)$-approximation for monotone {\RUSM} for every $\beta \in [0, 1]$. Our first result provides a matching inapproximability result.

\begin{restatable}{theorem}{thmMonotoneInapproximability} \label{thm:monotone_inapproximability}
For every $\beta \geq 0$ and $\eps > 0$, no polynomial time algorithm can guarantee $(1 - e^{-\beta} + \eps, \beta)$-approximation for monotone {\RUSM} even when the linear function $\ell$ is guaranteed to be non-positive.
\end{restatable}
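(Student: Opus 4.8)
I would prove the statement in the value-oracle model, where a ``polynomial time'' algorithm may issue at most $\mathrm{poly}(|\cN|)$ value queries to $g$ (so this also rules out genuine polynomial-time algorithms). For each $\beta \ge 0$ and $\eps > 0$ I would exhibit a distribution over instances of {\RUSM}, all sharing one fixed non-positive linear function $\ell$, on which no such algorithm can, in expectation, match the promised bound; averaging, some instance in the support then witnesses that the algorithm is not an $(\alpha,\beta)$-approximation.

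Concretely, fix a large integer $k = k(\eps)$, let $n = |\cN|$ be a sufficiently large polynomial in $k$ (in particular $n \gg e^{\beta} k$), and set $\ell(S) := -\rho|S|$ with the coefficient $\rho := e^{-\beta}/k$; this $\ell$ is linear and non-positive, as required. For $g$ I would take the classical hard instance behind the $(1-e^{-1})$-inapproximability of cardinality-constrained monotone submodular maximization in the value-oracle model: there is a uniformly random planted set $B^* \subseteq \cN$ with $|B^*| = k$ such that $g$ is non-negative, monotone and submodular, $g(B^*) = 1$ after normalization, every set $S$ with $|S \cap B^*|$ near its typical value $|S|\,k/n \approx 0$ satisfies $g(S) = \bar g(|S|) \pm o(1)$ for the size-only profile $\bar g(t) := 1 - (1-1/k)^{t}$, and — by the standard indistinguishability argument (a union bound over the $\mathrm{poly}(n)$ queries, together with concentration of a hypergeometric variable and of the construction's internal randomness) — with probability $1 - o(1)$ the algorithm learns nothing about the location of $B^*$, so its output $T$ satisfies $g(T) \le \bar g(|T|) + o(1)$.

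Now I would compare payoffs. If $|T| \ge 2e^{\beta}k$ then $\ell(T) \le -2$ while $g(T) \le 1$, so $g(T)+\ell(T) < 0$; otherwise $g(T)+\ell(T) \le \bar g(|T|) - \rho|T| + o(1) \le V + o(1)$, where $V := \max_{t \ge 0}[\bar g(t) - \rho t]$. Hence $\bE[g(T)+\ell(T)] \le V + o(1)$, the $o(1)$ also absorbing the low-probability bad event (on which the payoff is at most $g(\cN) = 1$). Substituting $t = sk$ and letting $k \to \infty$ gives $V \to \max_{s \ge 0}[(1-e^{-s}) - e^{-\beta}s]$, whose maximizer is $s = \beta$, so $V = 1 - (1+\beta)e^{-\beta} + O(1/k)$. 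On the other hand, using $S = B^*$ as a witness, the benchmark is at least
\[
  \alpha\, g(B^*) + \beta\, \ell(B^*) \;=\; \bigl(1 - e^{-\beta} + \eps\bigr) - \beta e^{-\beta} \;=\; 1 - (1+\beta)e^{-\beta} + \eps ,
\]
where $\alpha := 1 - e^{-\beta} + \eps$. Thus the benchmark exceeds $\bE[g(T)+\ell(T)]$ by at least $\eps - O(1/k) - o(1) > 0$ once $k$, and then $n$, are large enough, which contradicts the assumed guarantee. (The case $\beta = 0$ is identical with $\rho = 1/k$: there $\bar g(t) \le t/k$ forces $V = 0$, while the benchmark is $\eps\, g(B^*) = \eps$.)

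The main obstacle is the indistinguishability step — arguing that a polynomial-query, possibly adaptive algorithm gains nothing usable about $B^*$ and hence cannot output a set correlated with it. This is exactly the technical heart of the classical $1-e^{-1}$ oracle lower bound, and I would reuse that machinery essentially verbatim. Everything genuinely new is in the bookkeeping: the choice $\rho = e^{-\beta}/k$ and the one-line optimization showing that it places the best ``size-only'' value $V$ exactly on the curve $(1-e^{-\beta})\cdot g(B^*) + \beta\cdot\ell(B^*)$, so that no constant beyond $1-e^{-\beta}$ survives.
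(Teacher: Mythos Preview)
Your argument is correct and lands on exactly the same one-variable optimization as the paper, $\max_{s \ge 0}\bigl[(1-e^{-s}) - e^{-\beta}s\bigr] = 1 - (1+\beta)e^{-\beta}$, compared against the benchmark $1 - (1+\beta)e^{-\beta} + \eps$ obtained from a witness of size $k$. The route, however, is different. The paper does not build the hard distribution by hand; it applies Vondr\'{a}k's symmetry-gap framework (packaged as their Theorem~\ref{thm:symmetry_gap}) to the tiny instance $g(S)=\min\{|S|,1\}$, $\ell(S)=-e^{-\beta}|S|$ on an $n$-element ground set, observes full $S_n$-symmetry, and computes the symmetric optimum $\max_{x\in[0,1]}[1-(1-x)^n - e^{-\beta}xn]$, which tends to the same $1-(1+\beta)e^{-\beta}$ as $n\to\infty$. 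All of the indistinguishability work is hidden inside that black box. Your version is more self-contained and arguably more elementary, since it reuses the classical $1-1/e$ oracle lower bound directly; the paper's version is shorter once the symmetry-gap theorem is in hand and needs only a four-line instance.

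One small caution on your instance description: in the standard coverage/partition construction the hidden randomness is a coloring of $\cN$ into $k$ classes, not a single planted $k$-set, and the statement ``$g(S)=\bar g(|S|)\pm o(1)$ whenever $|S\cap B^*|$ is small'' is not literally true for every such $S$ under a fixed realization (a set disjoint from $B^*$ can still be color-correlated). What you actually need, and what the classical argument gives, is that with probability $1-o(1)$ over the hidden randomness every set \emph{queried by the algorithm} (including its output) satisfies $g(S)=\bar g(|S|)\pm o(1)$. Your proof goes through unchanged with that phrasing.
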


We would like to draw attention to two properties of Theorem~\ref{thm:monotone_inapproximability}. First, for $\beta = 1$ the coefficient of $g$ in the inapproximability proved by the theorem is $1 - 1/e$, matching the optimal approximation ratio for the problem of maximizing a monotone submodular function subject to a matroid constraint. Therefore, in a sense, adding the linear part $\ell$ makes the unconstrained problem as hard as this constrained problem. Interestingly, we get a similar result for {\RUSM} below.

The other noteworthy property of Theorem~\ref{thm:monotone_inapproximability} is that it applies to any $\beta \geq 0$, while the algorithmic result of Feldman~\cite{feldman2021guess} applies only to $\beta \in [0, 1]$. This difference between the results exists because, when $\ell$ can take positive values, setting the coefficient $\beta$ to be larger than $1$ might require the algorithm to output a set $T \subseteq \cN$ obeying $\ell(T) > \max_{S \subseteq \cN} \ell(S)$. However, it turns out that, when $\ell$ is non-positive, the algorithmic result can be extended to match Theorem~\ref{thm:monotone_inapproximability} for every $\beta \geq 0$. To understand how this can be done, we need to discuss the previous work in a bit more detail.

Sviridenko et al.~\cite{sviridenko2017optimal} designed two algorithms for maximizing $g + \ell$ sums, one of which was based on the continuous greedy algorithm of C{\u{a}}linescu et al.~\cite{calinescu2011maximizing}. It is possible modify this algorithm to be based instead on a related algorithm called ``measured continuous greedy'' due to~\cite{feldman2011unified}. In general, this does not lead to any result for maximizing $g + \ell$ sums. However, Lu et al.~\cite{lu2021regularized} recently observed that one can obtain in this way results when $\ell$ is non-positive. In particular, it leads to $(1 - e^{-\beta}, \beta)$-approximation for the special case of monotone {\RUSM} in which $\ell$ is non-positive for any constant $\beta \geq 0$, which settles the approximability of monotone {\RUSM}.

We now get to the study of (not necessarily monotone) {\RUSM}. The only result that is known to date for this problem is $(1/e, 1)$-approximation for the special case in which $\ell$ is non-positive, which was proved by Lu et al.~\cite{lu2021regularized} using the technique discussed above. Our main algorithmic contribution is the first algorithm with a non-trivial approximation guarantee for general {\RUSM}.
\begin{restatable}{theorem}{thmGeneral} \label{thm:general}
For every constant $\beta \in (0, 1]$, let us define $\alpha(\beta) = \beta(1 - \beta) / (1+\beta)$. Then, for every constant $\eps \in (0, \alpha(\beta))$, there exists a polynomial time $(\alpha(\beta) - \eps, \beta - \eps)$-approximation algorithm for {\RUSM}.
\end{restatable}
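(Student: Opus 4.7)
My plan is to design a polynomial-time algorithm based on a variant of the measured continuous greedy of~\cite{feldman2011unified} (as adapted to regularized maximization by Lu et al.~\cite{lu2021regularized}), operated on the multilinear extension and analyzed via an ODE argument, with a final discretization step for polynomial running time. Write $\cN^{+} := \{u \in \cN : \ell(\{u\}) > 0\}$, $\cN^{-} := \cN \setminus \cN^{+}$, and $F(\vx) := G(\vx) + \langle \ell, \vx\rangle$, where $G$ is the multilinear extension of $g$.

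The algorithm evolves $\vx(t)$ from $\vx(0) = \mathbf{0}$ via the ODE $\dot x_u(t) = y_u(t)(1 - x_u(t))$, where $y_u(t) \in \{0,1\}$ is selected at every instant to maximize the instantaneous rate of increase of $F$ --- namely $y_u(t) = \characteristic\bigl[\partial G/\partial x_u|_{\vx(t)} + \ell_u \geq 0\bigr]$. It is run until the stopping time $t^{\star} := -\ln(1-\beta)$, ensuring $\max_u x_u(t^{\star}) \leq 1 - e^{-t^{\star}} = \beta$, and then $\vx(t^{\star})$ is rounded by independent element-wise sampling to produce the output $T$. The coordinate cap $\Pr[u \in T] \leq \beta$ is what delivers the $\beta$ coefficient on $\ell(S)$ for elements of $S \cap \cN^{-}$: since $x_u \leq \beta$ and $\ell_u \leq 0$ imply $x_u \ell_u \geq \beta \ell_u$.

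The bicriteria analysis is driven by the differential inequality
\[
\frac{d}{dt} F(\vx(t)) \;\geq\; \sum_{u \in OPT}(1-x_u)\bigl(\partial G/\partial x_u + \ell_u\bigr) \;\geq\; (1-\beta)\,g(OPT) - G(\vx(t)) + \sum_{u \in OPT}(1-x_u)\,\ell_u,
\]
where the first inequality uses the optimality of $y_u(t)$ in the decision rule, and the second combines (i) the submodular gradient bound $\sum_{u \in OPT}(1-x_u)\,\partial G/\partial x_u \geq G(\vx \vee \characteristic_{OPT}) - G(\vx)$ with (ii) the avoidance lemma $G(\vx \vee \characteristic_{OPT}) \geq (1-\|\vx\|_\infty)\,g(OPT) \geq (1-\beta)\,g(OPT)$, valid for non-negative submodular $g$ since $\|\vx(t)\|_\infty \leq \beta$ throughout. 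Integrating this ODE from $0$ to $t^{\star}$, together with the pointwise bound $(1-x_u)\ell_u \geq (1-\beta)\ell_u^{+} + \ell_u^{-}$ for $u \in OPT$, is expected to yield the fractional guarantee $F(\vx(t^{\star})) \geq \alpha(\beta)\,g(OPT) + \beta\,\ell(OPT)$ with $\alpha(\beta) = \beta(1-\beta)/(1+\beta)$.

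The \emph{main obstacle} is to handle the $\ell$-terms tightly for $u \in OPT \cap \cN^{+}$. For such an element, the algorithm's decision $y_u(t)$ may become $0$ whenever the submodular gradient at $u$ turns sufficiently negative, causing $x_u$ to fall short of $\beta$ and the full $\beta \ell_u$ contribution to be missed; analogously, the algorithm might spend ``budget'' on elements of $\cN^{-} \setminus OPT$ for which the submodular gradient temporarily dominates $|\ell_u|$. The analysis must show that the submodular gain accumulated on the elements the algorithm does commit to compensates the deficit on $OPT \cap \cN^{+}$, and it is from balancing these two competing effects within the ODE that the $1/(1+\beta)$ factor in $\alpha(\beta)$ is expected to emerge. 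Finally, discretizing the ODE into $O(n^{2}/\eps)$ time steps converts the continuous-time analysis into a polynomial-time algorithm with additional $\eps$ losses in both coefficients, giving the claimed $(\alpha(\beta) - \eps,\, \beta - \eps)$-approximation.
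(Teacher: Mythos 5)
Your proposal takes a fundamentally different route from the paper, and it has a genuine gap. The paper's algorithm is not a continuous greedy at all: it is a non-oblivious local search guided by the auxiliary potential $h(S)=\bE[g(S(\beta))]+\beta(1+\beta)\cdot\ell(S)$, where $S(\beta)$ retains each element of $S$ independently with probability $\beta$. It drives $T$ to an approximate local maximum of $h$ and then outputs a random sample of $T(\beta)$. The $1/(1+\beta)$ denominator in $\alpha(\beta)$ arises from adding $\beta$ times the ``adding elements of $\hat{S}\setminus T$ does not help'' inequality to the ``removing elements of $T\setminus\hat{S}$ does not help'' inequality and invoking Lemma~2.2 of~\cite{feige2011maximizing}; it has no ODE origin.

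The gap in your proposal is the one you yourself flag as the ``main obstacle'' and then leave unresolved: you assert that the balancing argument ``is expected to emerge'' without providing it, and that balance is the entire crux. The paper explicitly observes that the measured continuous greedy of Lu et al.~\cite{lu2021regularized} yields a result only when $\ell$ is non-positive (where $x_u\leq\beta$ automatically gives $x_u\ell_u\geq\beta\ell_u$ for every $u$), and your analysis hits exactly that wall for general $\ell$. Concretely, the right-hand side of your differential inequality contains $-G(\vx(t))$ rather than $-F(\vx(t))$, so it does not close into an ODE in $F$; substituting $G=F-L$ brings in $L(\vx(t))=\sum_u\ell_u x_u(t)$, which the trajectory can drive arbitrarily negative by raising coordinates $u\notin OPT$ with $\ell_u\ll 0$ whenever the $g$-gradient temporarily dominates $|\ell_u|$. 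Your rule $y_u=\characteristic[\partial G/\partial x_u+\ell_u\geq 0]$ does nothing to control that pollution, and your pointwise bound $(1-x_u)\ell_u\geq(1-\beta)\ell_u^{+}+\ell_u^{-}$ applies only to $u\in OPT$. Without a new idea here---which is exactly what the paper supplies via its auxiliary-function construction---the integration does not deliver the claimed $(\alpha(\beta),\beta)$ coefficients.
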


We also study in more detail the special cases of {\RUSM} in each $\ell$ is either non-negative or non-positive. The above mentioned result of Lu et al.~\cite{lu2021regularized} for {\RUSM} with a non-positive $\ell$ can be extended (using the ideas of Feldman~\cite{feldman2021guess}) to get $(\beta e^{-\beta}, \beta)$-approximation for the same special case for any $\beta \in [0, 1]$.\footnote{Technically, this result can be extended to any constant $\beta \geq 0$, but this is not interesting since $\beta e^{-\beta}$ is a decreasing function for $\beta \geq 1$.} It is not immediately clear, however, how good this extended result is. For example, one can compare it with the inapproximability result of Theorem~\ref{thm:monotone_inapproximability} (which applies to the current setting as well), but there is a large gap between the above algorithmic and inapproximability results when the $\beta$ coefficient of $\ell$ is relatively large (see Figure~\ref{fig:negative_L}). This gap exists because Theorem~\ref{thm:monotone_inapproximability} holds even in the special case in which $g$ is monotone. Therefore, we prove the following theorem, which provides an alternative inapproximability result designed for the non-monotone case. Since it is difficult to understand the behavior of the expression stated in Theorem~\ref{thm:negative_inapproximability}, we numerically draw it in Figure~\ref{fig:negative_L}, which demonstrates that Theorem~\ref{thm:negative_inapproximability} closes much of the gap left with regard to {\RUSM} with non-positive linear function $\ell$.

\begin{figure}
\begin{center}\begin{tikzpicture}
\begin{axis}[
    xlabel = {coefficient of $\ell$},
    ylabel = {coefficient of $g$},
    xmin=0, xmax=1,
    ymin=0, ymax=1,
		legend cell align=left,
		legend style={at={(1.1,1.1)}}]
 
\addplot [name path = algorithm, blue, mark = *, domain = 0:1] {x * exp(-x)};
\addlegendentry{Algorithmic Guarantee}
 
\addplot [name path = inapproximability_monotone, red, mark = +, domain = 0:1] {1 - exp(-x)};
\addlegendentry{Inapproximability (Theorem~\ref{thm:monotone_inapproximability})}
 
\addplot [name path = inapproximability_nonmonotone, green, mark=o, mark repeat=10] coordinates {
        (0.000000, 0.250030)
        (0.010000, 0.252525)
        (0.020000, 0.255020)
        (0.030000, 0.257515)
        (0.040000, 0.260010)
        (0.050000, 0.262505)
        (0.060000, 0.265000)
        (0.070000, 0.267495)
        (0.080000, 0.269990)
        (0.090000, 0.272485)
        (0.100000, 0.274980)
        (0.110000, 0.277475)
        (0.120000, 0.279966)
        (0.130000, 0.282456)
        (0.140000, 0.284946)
        (0.150000, 0.287435)
        (0.160000, 0.289920)
        (0.170000, 0.292405)
        (0.180000, 0.294887)
        (0.190000, 0.297367)
        (0.200000, 0.299847)
        (0.210000, 0.302322)
        (0.220000, 0.304797)
        (0.230000, 0.307267)
        (0.240000, 0.309737)
        (0.250000, 0.312203)
        (0.260000, 0.314668)
        (0.270000, 0.317129)
        (0.280000, 0.319589)
        (0.290000, 0.322044)
        (0.300000, 0.324497)
        (0.310000, 0.326947)
        (0.320000, 0.329393)
        (0.330000, 0.331838)
        (0.340000, 0.334278)
        (0.350000, 0.336715)
        (0.360000, 0.339150)
        (0.370000, 0.341580)
        (0.380000, 0.344006)
        (0.390000, 0.346430)
        (0.400000, 0.348850)
        (0.410000, 0.351265)
        (0.420000, 0.353677)
        (0.430000, 0.356086)
        (0.440000, 0.358491)
        (0.450000, 0.360891)
        (0.460000, 0.363287)
        (0.470000, 0.365680)
        (0.480000, 0.368068)
        (0.490000, 0.370453)
        (0.500000, 0.372833)
        (0.510000, 0.375208)
        (0.520000, 0.377579)
        (0.530000, 0.379945)
        (0.540000, 0.382308)
        (0.550000, 0.384665)
        (0.560000, 0.387019)
        (0.570000, 0.389367)
        (0.580000, 0.391711)
        (0.590000, 0.394051)
        (0.600000, 0.396385)
        (0.610000, 0.398715)
        (0.620000, 0.401040)
        (0.630000, 0.403360)
        (0.640000, 0.405675)
        (0.650000, 0.407985)
        (0.660000, 0.410290)
        (0.670000, 0.412590)
        (0.680000, 0.414884)
        (0.690000, 0.417174)
        (0.700000, 0.419459)
        (0.710000, 0.421738)
        (0.720000, 0.424012)
        (0.730000, 0.426281)
        (0.740000, 0.428544)
        (0.750000, 0.430802)
        (0.760000, 0.433055)
        (0.770000, 0.435303)
        (0.780000, 0.437545)
        (0.790000, 0.439781)
        (0.800000, 0.442012)
        (0.810000, 0.444238)
        (0.820000, 0.446437)
        (0.830000, 0.448582)
        (0.840000, 0.450673)
        (0.850000, 0.452710)
        (0.860000, 0.454694)
        (0.870000, 0.456626)
        (0.880000, 0.458507)
        (0.890000, 0.460337)
        (0.900000, 0.462117)
        (0.910000, 0.463848)
        (0.920000, 0.465529)
        (0.930000, 0.467163)
        (0.940000, 0.468749)
        (0.950000, 0.470288)
        (0.960000, 0.471781)
        (0.970000, 0.473228)
        (0.980000, 0.474631)
        (0.990000, 0.475989)
        (1.000000, 0.477302)
};
\addlegendentry{Inapproximability (Theorem~\ref{thm:negative_inapproximability})}

\addplot [black!15] fill between [of = algorithm and inapproximability_nonmonotone, soft clip = {domain=0.45:1}];
\addplot [black!15] fill between [of = algorithm and inapproximability_monotone, soft clip = {domain=0:0.45}];

\end{axis}
\end{tikzpicture}\end{center}
\caption{Graphical presentation of the existing results for {\RUSM} with a non-positive linear function $\ell$. The $x$ and $y$ axes represent the coefficients of $\ell$ and $g$, respectively. The algorithmic guarantee drawn is the $(\beta e^{-\beta}, \beta)$-approximation obtainable by generalizing Lu et al.~\cite{lu2021regularized}. The shaded area represents the gap that still exists between the best known approximation guarantee and inapproximability results.} \label{fig:negative_L}
\end{figure}
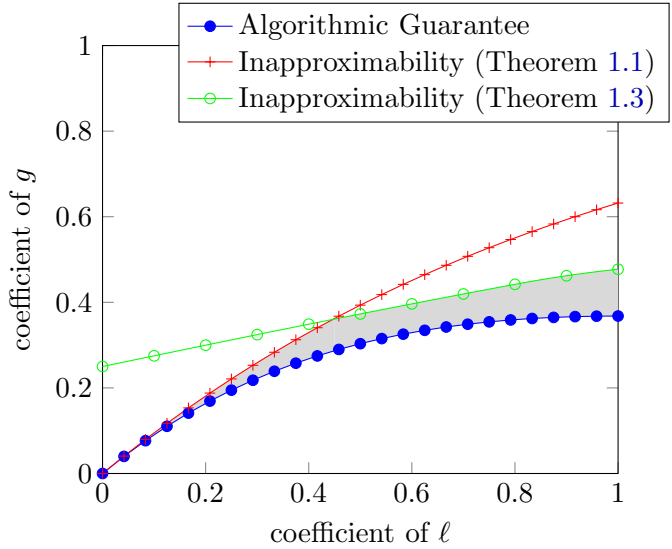

\begin{restatable}{theorem}{thmNegativeInapproximability} \label{thm:negative_inapproximability}
Given a value $\beta \geq 0$, let us define
\[
	\alpha(\beta) = \min_{\substack{t \geq 1\\r \in (0, 1/2]}} \left\{\frac{t + 1 + \sqrt{(t + 1)^2 - 8t r}}{4t} - \frac{r}{t + 1} \cdot \left[1 - \beta - 2\ln\left(\frac{t + 1- \sqrt{(t + 1)^2 - 8t r}}{2}\right)\right] \right\}
	\enspace.
\]
Then, for every $\eps > 0$, no polynomial time algorithm can guarantee $(\alpha(\beta) + \eps, \beta)$-approximation for {\RUSM} even when the linear function $\ell$ is guaranteed to be non-positive.
\end{restatable}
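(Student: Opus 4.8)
The plan is to establish the hardness through the standard information-theoretic machinery for unconstrained submodular maximization in the value-oracle model. For each admissible pair $(t,r)$ I would build a ``symmetric'' instance --- a non-negative submodular function $g$ together with a non-positive linear function $\ell$ on a ground set $\cN$ of size $n$ --- and a ``planted'' instance $(\hat g, \ell)$ obtained from it by a uniformly random relabeling of $\cN$ that hides a single high-value set $S^\star$. The functions are arranged so that $\hat g$ coincides with $g$ on every set an algorithm issuing polynomially many queries is likely to reach; consequently, on the planted instance any randomized polynomial-time algorithm produces --- up to an $o(1)$ failure probability --- the same distribution over output sets $T$ as on the symmetric instance, so that $\bE[\hat g(T)+\ell(T)] \le \max_{S \subseteq \cN}[g(S)+\ell(S)] + o(1)$. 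Since an $(\alpha,\beta)$-approximation would force $\bE[\hat g(T)+\ell(T)] \ge \alpha \cdot \hat g(S^\star) + \beta \cdot \ell(S^\star)$, letting $n \to \infty$ rules out every $\alpha$ exceeding $\alpha_{t,r}(\beta) := \bigl(\max_S[g(S)+\ell(S)] - \beta \cdot \ell(S^\star)\bigr) / \hat g(S^\star)$; because $(t,r)$ is free, the theorem follows by taking the pair minimizing $\alpha_{t,r}(\beta)$. Note that $\beta$ enters $\alpha_{t,r}(\beta)$ only through the benchmark term $\beta \cdot \ell(S^\star)$, which is why it appears linearly in the final expression.

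For the construction I would partition $\cN$ into a small ``special'' part of size proportional to $r$ and a ``background'', and take $g$ to be a non-negative, non-monotone submodular function of coverage type that is symmetric under permutations of the background elements and whose background marginals saturate multiplicatively (and eventually turn negative, which is what makes $g$ non-monotone and lets the bound dip below the monotone bound of Theorem~\ref{thm:monotone_inapproximability} for large $\beta$); the ratio between the ``weight'' of the special part and that of the background is the parameter $t \ge 1$, the planted set $S^\star$ is the image of the special part under the relabeling, and $\ell(S) = -\lambda|S|$ is the uniform non-positive linear function with $\lambda$ calibrated to the desired soft penalty. The restriction $r \in (0,1/2]$ emerges from simultaneously demanding that $g$ remain submodular and that $S^\star$ be large enough to beat every symmetric set. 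It is convenient to substitute $q = \tfrac{t+1-\sqrt{(t+1)^2-8tr}}{2}$, which ranges over $(0,1]$ and satisfies $r = q(t+1-q)/(2t)$; intuitively $q$ is the coverage fraction realized by $S^\star$, and it is exactly the quantity under the logarithm.

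The analytic core is to evaluate the two ingredients of $\alpha_{t,r}(\beta)$. The value $\hat g(S^\star)+\ell(S^\star)$ is read off directly, since $S^\star$ is an extreme point of the construction. The quantity $\max_S[g(S)+\ell(S)]$ on the symmetric instance reduces, by averaging over the symmetry group, to a one-dimensional concave maximization: a density $p$ of chosen background elements trades a concave, multiplicatively saturating gain in $g$ against a linear penalty $-\lambda pn$, and the first-order condition is a transcendental equation whose solution is where the factor $-2\ln(\cdot)$ appears; substituting the optimal $p$ back produces the term $\tfrac{t+1+\sqrt{(t+1)^2-8tr}}{4t}$ and the bracket $1 - \beta - 2\ln(\cdots)$. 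Assembling everything gives precisely the displayed closed form, and a routine compactness argument shows the infimum over $t \ge 1$, $r \in (0,1/2]$ is attained, justifying writing it as a minimum; the universal quantification over $\eps>0$ is handled by taking $n$ large enough that the $o(1)$ error is below $\eps$.

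I expect the main obstacle to be designing $g$ so that three properties hold at once: (i) $g$ is a genuine non-negative submodular set function over the whole admissible parameter range; (ii) $g$ is statistically ``blind'' to the special part --- its symmetrization agrees with it on all sets a query algorithm can realistically reach, which is the step where the random relabeling together with a concentration/union bound over the queried sets does the real work; and (iii) the symmetric optimum computed above equals the claimed expression with no slack, so that the extracted hardness curve is exactly the stated one. Balancing submodularity against blindness is what pins down the parameter region $t \ge 1$, $r \in (0,1/2]$, and what remains is the lengthy but routine verification that the one-dimensional optimization yields the displayed formula.
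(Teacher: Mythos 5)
You have the right high-level framework in mind (a symmetry-gap / value-oracle hardness argument of Vondr\'{a}k type, instantiated through the paper's Theorem~\ref{thm:symmetry_gap}), and you correctly observe that $\beta$ must enter the final bound linearly through the benchmark $\beta\cdot\ell(S^\star)$. That said, there are two concrete problems with the proposal as written, beyond the fact that you never actually exhibit the instance $(g,\ell)$.

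The central issue is that you bound what the algorithm can achieve on the planted instance by $\max_{S\subseteq\cN}[g(S)+\ell(S)]+o(1)$, which is the \emph{discrete} optimum of the small symmetric instance. That is a vacuous bound: any algorithm is trivially bounded by the optimum, and if you plug this into your $\alpha_{t,r}(\beta)$ formula the resulting threshold is essentially $1$ and rules out nothing. The entire point of the symmetry-gap machinery is that the algorithm's expected output value is bounded by the \emph{symmetrized continuous} optimum $\max_{\vx\in[0,1]^\cN}[G(\bar\vx)+L(\bar\vx)]$, which is strictly smaller than the discrete optimum because symmetrization forces a fractional allocation that the multilinear extension cannot round up. This is exactly the quantity that Lemma~\ref{lem:symmetry_gap_negative} evaluates, and it is the source of the $\frac{t+1+\sqrt{(t+1)^2-8tr}}{4t}$ term and the logarithm: the reduced optimization is genuinely two-dimensional (over the common density $z$ on the two ``special'' elements and the common density $w$ on the $2n$ ``blocker'' elements), and one solves the first-order condition in $z$ to get $z^\star=(t+e^{-wn}-1)/(2t)$ before facing the transcendental condition in $w$. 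Your ``one-dimensional concave maximization'' is what remains after this first elimination, but it is not an averaging of the discrete $\max_S[g(S)+\ell(S)]$.

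Second, your account of where the constraint $r\in(0,1/2]$ comes from is wrong. In the paper's instance, $r$ only scales the linear function $\ell(S)=-r\cdot|S\cap\{a_i,b_i\}|$; it does not appear in $g$ at all, so it cannot be a submodularity constraint on $g$. The restriction $r\le 1/2$ is what makes the discriminant $(t+1)^2-8tr$ non-negative and places the relevant root $e^{-wn}=\bigl(t+1-\sqrt{(t+1)^2-8tr}\bigr)/2$ inside $(0,1]$, so that the interior optimum of the $w$-optimization is achieved. Similarly, your described construction (``special'' part of size proportional to $r$ with $S^\star$ equal to the special part) does not match the paper's: the instance is the Oveis Gharan--Vondr\'{a}k matroid instance, with ground set $\{a,b\}\cup\{a_i,b_i\mid i\in[n]\}$, $g(S)=t\cdot(|S\cap\{a,b\}|\bmod 2)+\characteristic[a\notin S]\characteristic[S\cap\{a_i\}\ne\varnothing]+\characteristic[b\notin S]\characteristic[S\cap\{b_i\}\ne\varnothing]$, and the benchmark set is $\{a,b_1\}$ (one special element plus one blocker, not the whole special part). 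Without nailing down this construction, the ``lengthy but routine verification'' you defer is in fact where all the content of Lemma~\ref{lem:symmetry_gap_negative} lives, and the proposal as stated would not reproduce the displayed closed form.
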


It is interesting to note that, for $\beta = 1$, Theorem~\ref{thm:negative_inapproximability} matches the state-of-the-art inapproximability result of Oveis Gharan and Vondr\'{a}k~\cite{ovis_gharan2011submodular} for maximizing a non-negative submodular function subject to matroid constraint. Therefore, at least at the level of the known inapproximability results, {\RUSM} with a non-positive $\ell$ is as hard as maximizing a non-negative submodular function subject to a matroid constraint.

It remains to consider the special case of {\RUSM} with a non-negative $\ell$. Here $g + \ell$ is a non-negative submodular function on its own right, and therefore, {\RUSM} becomes a special case of the well-studied problem of \text{Unconstrained Submodular Maximization} (\USM). The optimal approximation ratio for {\USM} is $1/2$ due to an inapproximability result of Feige et al.~\cite{feige2011maximizing}, and the first algorithm to obtain this approximation ratio was the ``Double Greedy'' algorithm of Buchbinder et al.~\cite{buchbinder2015tight}. Specifically, Buchbinder et al.~\cite{buchbinder2015tight} described two variants of their algorithm, a deterministic variant guaranteeing $1/3$-approximation, and a randomized variant guaranteeing $1/2$-approximation. We refer below to these two variants as {\DGDet} and {\DGRand}, respectively. Interestingly, we are able to show in the next two theorems that the performance of {\DGDet} and {\DGRand} for {\RUSM} is even better than what one would expected based on the guarantees of these algorithms for general {\USM}. 

\begin{restatable}{theorem}{thmDeterministic} \label{thm:deterministic}
When $\ell$ is non-negative, {\DGDet} is an $(\alpha, 1 - \alpha)$-approximation algorithm for {\RUSM} for all $\alpha \in [0, 1/3]$ at the same time (the algorithm is oblivious to the value of $\alpha$).
\end{restatable}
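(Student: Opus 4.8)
The plan is to run {\DGDet} on the set function $f := g + \ell$ and to derive the claimed guarantee from two lower bounds on the value $f$ assigns to the set returned. First, note that $f$ is a valid input for {\USM}: it is submodular, being the sum of the submodular function $g$ and the modular function $\ell$, and it is non-negative, since $g \ge 0$ and a non-negative linear function obeys $\ell(S) = \sum_{u \in S} \ell(\{u\}) \ge 0$ for every $S \subseteq \cN$. Let $\emptyset = X_0 \subseteq X_1 \subseteq \dots \subseteq X_n$ and $\cN = Y_0 \supseteq Y_1 \supseteq \dots \supseteq Y_n$ denote the sets maintained by {\DGDet} as it processes the ground set elements $u_1, \dots, u_n$ in this order; recall that the returned set is $X_n = Y_n$.

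The first and main ingredient is a mild strengthening of the textbook analysis of {\DGDet}. Fix an arbitrary set $S \subseteq \cN$ and, as usual, write $OPT_i := (S \cup X_i) \cap Y_i$, so that $OPT_0 = S$ and $OPT_n = X_n$. The standard per-element case analysis --- distinguishing whether $u_i$ is added to $X$ or removed from $Y$ and whether $u_i \in S$, and using only submodularity of $f$ together with the greedy choice rule --- establishes $f(OPT_{i-1}) - f(OPT_i) \le [f(X_i) - f(X_{i-1})] + [f(Y_i) - f(Y_{i-1})]$ for every $i$. Summing over $i$ and telescoping the three sums yields $f(S) - f(X_n) \le 2 f(X_n) - f(\emptyset) - f(\cN)$; the only deviation from the usual presentation is that I retain the boundary terms $f(\emptyset), f(\cN) \ge 0$ instead of dropping them. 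Rearranging, and using $f(\emptyset) \ge 0$ together with $f(\cN) = g(\cN) + \ell(\cN) \ge \ell(\cN) \ge \ell(S)$ (the last step because a non-negative linear function is maximized by $\cN$), we get
\[
	f(X_n) \ge \tfrac{1}{3}\left(f(S) + f(\emptyset) + f(\cN)\right) \ge \tfrac{1}{3}\left(f(S) + \ell(S)\right) = \tfrac{1}{3}\, g(S) + \tfrac{2}{3}\, \ell(S)
	\enspace.
\]

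The second ingredient is the elementary observation that $f(Y_i)$ is non-decreasing in $i$. Indeed, when {\DGDet} removes $u_i$ from $Y$ it does so because the marginal $f(Y_{i-1} - u_i) - f(Y_{i-1})$ of this removal is strictly larger than the marginal $f(X_{i-1} + u_i) - f(X_{i-1})$ of adding $u_i$ to $X$; since $X_{i-1} \subseteq Y_{i-1} - u_i$, submodularity of $f$ forces these two marginals to sum to a non-negative number, so the removal marginal is itself strictly positive (and when $u_i$ is instead added to $X$, the set $Y$ does not change). Hence $f(X_n) = f(Y_n) \ge f(Y_0) = f(\cN) \ge \ell(\cN) \ge \ell(S)$ for every $S \subseteq \cN$.

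It remains to combine the two bounds by convexity. Fix $\alpha \in [0, 1/3]$ and a set $S \subseteq \cN$. Because $3\alpha \in [0, 1]$, we may write
\[
	\alpha \cdot g(S) + (1 - \alpha) \cdot \ell(S) = 3\alpha \left(\tfrac{1}{3}\, g(S) + \tfrac{2}{3}\, \ell(S)\right) + (1 - 3\alpha)\, \ell(S) \le 3\alpha \cdot f(X_n) + (1 - 3\alpha) \cdot f(X_n) = g(X_n) + \ell(X_n)
	\enspace,
\]
where the first term on the right is bounded using the first ingredient and the second using the second ingredient. Since $S$ is arbitrary, this is exactly the $(\alpha, 1 - \alpha)$-approximation guarantee, and it holds for every $\alpha \in [0, 1/3]$ simultaneously because the output $X_n$ does not depend on $\alpha$. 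The one place where any thought is needed is the observation in the second paragraph: the boundary terms ordinarily discarded in the Double Greedy analysis are precisely what lifts the coefficient of $\ell$ from $1/3$ to $2/3$, and $\alpha \le 1/3$ is exactly the range in which this bound --- paired with the essentially free bound $f(X_n) \ge \ell(\cN)$ --- convexly dominates the benchmark $\alpha g(S) + (1 - \alpha)\ell(S)$.
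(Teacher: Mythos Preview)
Your proof is correct and follows essentially the same approach as the paper. Both arguments rest on the same two ingredients --- the Buchbinder et al.\ per-step inequality (yielding $f(X_n) \ge \tfrac{1}{3}g(S) + \tfrac{2}{3}\ell(S)$ after retaining the boundary terms) and the monotonicity of $f(Y_i)$ (yielding $f(X_n) \ge \ell(S)$) --- and then take the convex combination with weights $3\alpha$ and $1-3\alpha$; the only cosmetic difference is that the paper combines the two per-step inequalities into a single weighted per-step lemma before telescoping, whereas you telescope each one first and combine the resulting global bounds afterward.
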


\begin{restatable}{theorem}{thmRandomized} \label{thm:randomized}
When $\ell$ is non-negative, {\DGRand} is an $(\alpha, 1 - \alpha/2)$-approximation algorithm for {\RUSM} for all $\alpha \in [0, 1/2]$ at the same time (the algorithm is oblivious to the value of $\alpha$).
\end{restatable}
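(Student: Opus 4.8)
The plan is to deduce Theorem~\ref{thm:randomized} from two ``endpoint'' guarantees for {\DGRand} — a $(0,1)$-approximation and a $(1/2,3/4)$-approximation — together with a convexity argument that fills in every $\alpha \in (0,1/2)$. Throughout, let $f = g + \ell$; since $g$ is non-negative submodular and $\ell$ is non-negative linear, $f$ is a non-negative submodular function, and {\DGRand} is run with the objective $f$. Denote by $\emptyset = X_0 \subseteq X_1 \subseteq \dots \subseteq X_n$ and $\cN = Y_0 \supseteq Y_1 \supseteq \dots \supseteq Y_n$ the two sets maintained by the algorithm, by $u_i$ the element processed in iteration $i$, by $a_i = f(X_{i-1} + u_i) - f(X_{i-1})$ and $b_i = f(Y_{i-1} - u_i) - f(Y_{i-1})$ the marginals it computes, and by $T = X_n = Y_n$ its output. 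Fix an arbitrary set $S \subseteq \cN$; because $\ell$ is non-negative, $\ell(S) \leq \ell(\cN) = \max_{R\subseteq\cN}\ell(R)$, a fact used repeatedly.

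First I would prove the (deterministic) inequality $g(T) + \ell(T) \geq \ell(\cN)$, which immediately gives the $(0,1)$-approximation guarantee. The point is that {\DGRand} removes $u_i$ from $Y$ only when $b_i \geq 0$: submodularity of $f$ gives $a_i + b_i \geq 0$, so $b_i < 0$ forces $a_i > 0$ and hence a deterministic addition. Telescoping $f$ along the $Y$-sequence, $f(T) = f(Y_n) = f(\cN) + \sum_{i=1}^n\bigl(f(Y_i) - f(Y_{i-1})\bigr)$, and the only nonzero terms correspond to removal iterations, each contributing the corresponding $b_i \geq 0$; thus $f(T) \geq f(\cN)$. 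Since $g \geq 0$, this yields $g(T) + \ell(T) = f(T) \geq f(\cN) = g(\cN) + \ell(\cN) \geq \ell(\cN) \geq \ell(S)$.

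Next I would show that {\DGRand} is a $(1/2,3/4)$-approximation algorithm. Here I would apply the standard analysis of {\DGRand} for {\USM}~\cite{buchbinder2015tight} to the non-negative submodular function $f$: its per-iteration inequality $\bE[f(OPT_{i-1}) - f(OPT_i) \mid \cF_{i-1}] \leq \tfrac12\bE[(f(X_i) - f(X_{i-1})) + (f(Y_i) - f(Y_{i-1}))\mid\cF_{i-1}]$, where $OPT_i = (S \cup X_i) \cap Y_i$ and $\cF_{i-1}$ is the history up to iteration $i$, summed over all iterations and telescoped, yields $\bE[f(T)] \geq \tfrac12 f(S) + \tfrac14(f(\emptyset) + f(\cN))$ — the same bound one derives for {\USM}, but retaining the slack term that is usually discarded, and noting that its proof treats $S$ merely as a fixed set. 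Plugging in $f(\emptyset) = g(\emptyset) \geq 0$ and $f(\cN) = g(\cN) + \ell(\cN) \geq \ell(\cN) \geq \ell(S)$ gives $\bE[g(T) + \ell(T)] = \bE[f(T)] \geq \tfrac12 g(S) + \tfrac12\ell(S) + \tfrac14\ell(S) = \tfrac12 g(S) + \tfrac34\ell(S)$.

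Finally, fix any $\alpha \in [0,1/2]$ and set $\lambda = 1 - 2\alpha \in [0,1]$. Both bounds above lower bound the same quantity $\bE[g(T)+\ell(T)]$ for the same arbitrary set $S$, so their $\lambda$-weighted average gives
\[
\bE[g(T) + \ell(T)] \geq \lambda\,\ell(S) + (1-\lambda)\left(\tfrac12 g(S) + \tfrac34\ell(S)\right) = \tfrac{1-\lambda}{2}\,g(S) + \tfrac{\lambda+3}{4}\,\ell(S) = \alpha\,g(S) + \left(1 - \tfrac{\alpha}{2}\right)\ell(S)\enspace,
\]
which is the claimed $(\alpha, 1-\alpha/2)$-approximation; since a single run of {\DGRand} satisfies both endpoint bounds, it is oblivious to $\alpha$. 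I do not expect a serious technical obstacle: the content is in seeing this decomposition — that the $(0,1)$ endpoint is essentially free (a telescoping over $Y$ using only $g \geq 0$ and the fact that {\DGRand} never removes an element of negative $Y$-marginal), and that the $(1/2,3/4)$ endpoint is the ordinary {\USM} guarantee with the habitually-dropped $\tfrac14(f(\emptyset)+f(\cN))$ term kept and bounded via $f(\cN) \geq \ell(\cN) = \max_S \ell(S)$. The one place warranting care is confirming that the per-iteration inequality of {\DGRand} is available verbatim here, which it is, since it holds for every non-negative submodular objective and $f = g+\ell$ is one; a naive attempt to run a single combined Double-Greedy potential with a weighted tracking set $\alpha\,g(OPT_i) + (1-\tfrac{\alpha}{2})\ell(OPT_i)$ seems to break down in the iterations where $a_i,b_i>0$, which is what motivates the two-endpoint route.
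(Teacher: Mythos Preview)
Your proof is correct and uses the same two ingredients as the paper --- the Buchbinder et al.\ per-iteration inequality and the deterministic monotonicity $f(Y_i)\geq f(Y_{i-1})$ --- only reordered: the paper combines them into a single weighted per-iteration lemma (Lemma~\ref{lem:step_bound}) and then telescopes, whereas you telescope each ingredient separately to the two endpoint guarantees and then interpolate, which is equivalent by linearity. Incidentally, your closing remark that a ``single combined potential'' breaks down is not quite right: the paper's Lemma~\ref{lem:step_bound} \emph{is} exactly such a per-iteration potential, but with the weights placed on $f(X_i)$ and $f(Y_i)$ rather than on $g$ and $\ell$ inside the tracking set, and it goes through without issue.
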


We conclude this section with an interesting observation. Up to this point, the most well studied $g + \ell$ maximization problem was maximizing the sum of a non-negative monotone submodular function $g$ and a linear function $\ell$ subject to a matroid constraint. When $\ell$ is positive, the optimal approximation guarantee for this problem is $(1 - 1/e, 1)$~\cite{sviridenko2017optimal}, which is natural since $1 - 1/e$ is the optimal approximation ratio for maximizing such a function $g$ subject to a matroid constraint~\cite{nemhauser1978best}. Thus, one might expect to get $(1/2, 1)$-approximation for {\RUSM} with a non-negative $\ell$. However, both Theorems~\ref{thm:deterministic} and~\ref{thm:randomized} fail to prove such a guarantee, and we are able to show that this is not a coincidence.
\begin{restatable}{theorem}{thmPositiveInapproximability} \label{thm:positive_inapproximability}
Even when the linear function $\ell$ is guaranteed to be non-negative, no polynomial time algorithm can guarantee $(1/2, 1)$-approximation for {\RUSM}.
\end{restatable}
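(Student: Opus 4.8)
The plan is to prove an information‑theoretic lower bound in the value‑oracle model, and this restriction to polynomial time is essential: when $\ell$ is non‑negative, $g+\ell$ is itself a non‑negative submodular function, so $\max_{S\subseteq\cN}[g(S)+\ell(S)]\ge\max_{S\subseteq\cN}[\tfrac12 g(S)+\ell(S)]$, and an exponential‑time algorithm that simply returns an exact maximizer of $g+\ell$ already gives a $(1/2,1)$-approximation (in fact a $(1,1)$-approximation). Hence the obstruction must be computational, and I would realize it through the ``symmetry gap''/blurring methodology used by Feige et al.~\cite{feige2011maximizing} for the tightness of $1/2$ for {\USM} (and underlying the other inapproximability results of this paper, Theorems~\ref{thm:monotone_inapproximability} and~\ref{thm:negative_inapproximability}): I would exhibit a family of {\RUSM} instances sharing one fixed non‑negative linear function $\ell$ whose $g$-components are pairwise indistinguishable to any polynomial number of value queries, and on which no single (possibly randomized) output $T$ can have $\bE[g(T)+\ell(T)]$ reach the $(1/2,1)$-benchmark.

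Concretely, I would let the ground set contain a distinguished, publicly known part $H\subseteq\cN$ that carries the linear reward, i.e.\ $\ell(S)=L\cdot|S\cap H|$ for a suitable constant $L>0$ and $\ell\equiv 0$ off $H$, and plant inside $g$ a hidden combinatorial structure (a hidden equipartition of $\cN$, in the spirit of the tight {\USM} instance behind~\cite{feige2011maximizing,buchbinder2015tight}) that is arranged to be ``balanced'' with respect to $H$, so that the symmetry group hiding the planted structure keeps $\ell$ constant on each of its parts — consequently $\ell$ neither reveals the structure nor is revealed by it. The base function on the refined ground set should be designed so that (i) its optimum is attained at a set $S^\star$ that is simultaneously aligned with the planted structure and contains (enough of) $H$, making $\tfrac12 g(S^\star)+\ell(S^\star)$ large; whereas (ii) the symmetrization of the base function — the object a polynomial algorithm effectively optimizes, since it cannot locate the hidden structure — attains its maximum on sets that contain little of $H$, so that every blurred solution available to the algorithm is forced into a genuine trade‑off between $g$-value and $\ell$-value.

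The proof itself would then follow the standard template: (1) describe the instance $(g_\eps,\ell)$ obtained by blurring the base function and check that $g_\eps$ is non‑negative and submodular; (2) lower bound $\max_S[\tfrac12 g_\eps(S)+\ell(S)]$ by exhibiting the aligned set $S^\star$; (3) invoke the symmetry‑gap argument to conclude that any algorithm making polynomially many queries behaves, up to an $\eps$-error, as though it optimized only over the ``symmetric'' fractional points, so $\bE[g_\eps(T)+\ell(T)]\le \max_{x\text{ symmetric}}\big[G_\eps(x)+L\!\sum_{v\in H} x_v\big]+\eps\cdot(\text{scale})$, where $G_\eps$ is the multilinear extension of $g_\eps$; and (4) verify that, for an appropriate choice of $L$ and of the part sizes, this upper bound is strictly below the quantity from step~(2), contradicting a $(1/2,1)$-approximation.

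The main obstacle is engineering the base function so that step~(ii) actually holds. With the naive choice — the directed cut function that yields the $1/2$-hardness of {\USM} — a polynomial algorithm can take all of $H$ for the full linear reward and then ``rebalance'' the remaining coordinates so as to keep the cut value at its symmetric maximum; thus grabbing $H$ is essentially free and the $(1/2,1)$-benchmark is matched (indeed one finds the algorithm's value exceeds the benchmark by a lower‑order term). Defeating this requires a base function whose symmetrization strictly degrades once $|S\cap H|$ grows — i.e.\ a gadget with no ``rebalancing'' slack — while remaining non‑negative and submodular and still exhibiting an optimum‑to‑symmetric gap large enough that the value the algorithm is forced to forgo exceeds the extra $\tfrac12\,\ell(S^\star)$ that the benchmark enjoys over the plain‑submodular optimum. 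Pinning down such a gadget, and calibrating $L$ and the sizes so that the resulting separation is a true constant rather than an $o(1)$ term, is the technical heart of the argument.
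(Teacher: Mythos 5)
Your high-level plan — use the symmetry-gap framework (Theorem~\ref{thm:symmetry_gap}), tangle the linear reward with the hidden structure, and show the symmetric maximizer is forced into a $g$-versus-$\ell$ trade-off — is the right scaffolding, and your diagnosis of why the naive directed-cut gadget fails (``grabbing $H$ is essentially free'') is exactly on point. However, you explicitly leave the gadget unconstructed, and that construction is the entire technical content of this theorem. The paper's gadget lives on $\cN=\{a,b\}\cup\{c_i\mid i\in[n]\}$ with $\ell$ putting weight $1/3$ on each $c_i$ and
\[
	g(S)=2\cdot[(S\cap\{a,b\})\bmod 2]+\characteristic[\{a,b\}\cap S\neq\varnothing]\cdot\characteristic[\{c_i\mid i\in[n]\}\not\subseteq S]\enspace,
\]
a parity-plus-indicator structure closer to the Oveis Gharan--Vondr\'{a}k matroid gadget than to the directed-cut gadget you discuss; finding the right coefficients (the $2$ vs.\ $1$, the weight $1/3$) is a genuine calibration problem, and the resulting separation is only about $3\cdot 10^{-4}$, which is why the theorem is stated as ruling out exactly $(1/2,1)$ rather than some cleaner $(1/2-\delta,1)$.

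There is also a subtle but consequential misreading of the mechanism in your step~(ii). You want the symmetric maximizer to ``contain little of $H$,'' so that the adversarial algorithm forgoes the linear reward. In the paper's construction the opposite happens: the symmetric maximizer of $G+L$ takes $w\approx 1-0.886/n$ on the $c_i$'s, i.e., it claims \emph{more} linear value than the discrete optimum $S^{\star}=\{a\}\cup\{c_1,\dots,c_{n-1}\}$ (which deliberately leaves out one $c_i$). The gap lives entirely on the $g$ side: the discrete optimum can leave out \emph{exactly one} $c_i$ and still avoid the bad event $\{c_i\}\subseteq S$, keeping $g(S^{\star})=3$, whereas the symmetric fractional point must dilute every $c_i$ to the same $w$, so pushing $w$ high enough to collect the linear reward makes the bad event $\Pr[\text{all }c_i\text{ chosen}]=w^n\approx 0.41$ non-negligible and drops the symmetric $g$-value to $\approx 1.46 < \tfrac12\cdot 3$. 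So ``the value the algorithm is forced to forgo'' is degraded $g$, not foregone $\ell$. This is the asymmetry your gadget has to encode, and without pinning it down the argument does not close.
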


\paragraph{Paper Structure.} In Section~\ref{sec:preliminaries} we give a few formal definitions and explain the notation used throughout the paper. Then, we prove our inapproximability result for monotone {\RUSM} (Theorem~\ref{thm:monotone_inapproximability}) in Section~\ref{sec:inapproximability_monotone}. Our results for general {\RUSM}, {\RUSM} with non-positive $\ell$ and {\RUSM} with non-negative $\ell$ can be found in Sections~\ref{sec:general}, \ref{sec:negative} and~\ref{sec:positive}, respectively.
\toggletrue{firstStatementDone}
\section{Preliminaries} \label{sec:preliminaries}

\paragraph{Set Functions and Notation.} Given a set function $f\colon 2^\cN \to \bR$, an element $u \in \cN$ and a set $S \subseteq \cN$, the marginal contribution of $u$ to $S$ with respect to $f$ is $f(u \mid S) \triangleq f(S \cup \{u\}) - f(S)$. A set function $f\colon 2^\cN \to \bR$ is called \emph{submodular} if it satisfies the intuitive property of diminishing returns. More formally, $f$ is submodular if $f(u \mid S) \geq f(u \mid T)$ for every two sets $S \subseteq T \subseteq \cN$ and element $u \in \cN \setminus T$. An equivalent definition of submodularity is that $f$ is submodular if $f(S) + f(T) \geq f(S \cup T) + f(S \cap T)$ for every two sets $S, T \subseteq \cN$.

The set function $f$ is called \emph{monotone} if $f(S) \leq f(T)$ for every two sets $S \subseteq T \subseteq \cN$, and it is called \emph{linear} if there exist values $\{a_u \in \bR \mid u \in \cN\}$ such that $f(S) = \sum_{u \in S} a_u$ for every set $S \subseteq \cN$.\footnote{Linear set functions are also known as \emph{modular} functions.} One can verify that any linear set function is submodular, but the reverse does not necessarily hold. Additionally, given a set $S$, a set function $f$ and an element $u$, we often use $S + u$, $S - u$ and $f(u)$ as shorthands for $S \cup \{u\}$, $S \setminus \{u\}$ and $f(\{u\})$, respectively.

\paragraph{Multilinear extension.} It is often useful to consider continuous extensions of set functions, and there are multiple ways in which this can be done. The proofs of our inapproximability results employ one such extension known as the \emph{multilinear extension} (due to~\cite{calinescu2011maximizing}). Formally, given a set function $f\colon 2^\cN \to \bR$, its multilinear extension is the function $F\colon [0, 1]^\cN \to \bR$ defined, for every vector $\vx \in [0, 1]^\cN$, by $F(\vx) = \bE[f(\RSet(\vx))]$, where $\RSet(\vx)$ is a random subset of $\cN$ including every element $u \in \cN$ with probability $x_u$, independently.

One can verify that, as is suggested by its name, the multilinear extension $F$ is a multilinear function of the coordinates of its input vector. Furthermore, $F$ is an extension of the set function $f$ in the sense that for every set $S \subseteq \cN$ we have $F(\characteristic_S) = f(S)$, where $\characteristic_S$ is the characteristic vector of the set $S$ (i.e., a vector that has the value $1$ in coordinates corresponding to elements of $S$, and the value $0$ in the other coordinates).

\paragraph{Value Oracle.} As is standard in the submodular optimization literature, we assume in this paper that algorithms access their set function inputs only through \emph{value oracles}. A value oracle for a set function $f$ is a black box that given a set $S \subseteq \cN$ returns $f(S)$. One advantage of this convention is that it makes it possible to use information theoretic arguments to prove unconditional inapproximability results (i.e., inapproximability results that are not based on any complexity assumption). Nevertheless, if necessary, these inapproximability results can usually be adapted to apply also to succinctly represented functions (instead of functions accessed via value oracles) at the cost of introducing some complexity assumption~\cite{dobzinski2012query}.
\section{Inapproximability for Monotone Functions} \label{sec:inapproximability_monotone}

In this section we show an inapproximability for monotone {\RUSM} (Theorem~\ref{thm:monotone_inapproximability}). All our inapproximability results in this paper are proved using Theorem~\ref{thm:symmetry_gap}. Since the proof of this theorem is a relatively straightforward adaptation of the symmetry gap framework of Vondr\'{a}k~\cite{vondrak2013symmetry}, we defer it to Appendix~\ref{app:symmetry_gap}. 

\begin{restatable}{theorem}{thmSymmetryGap} \label{thm:symmetry_gap}
Consider an instance $(g, \ell)$ of {\RUSM} consisting of a non-negative submodular function $g \colon 2^\cN \to \nnR$ and a linear function $\ell \colon 2^\cN \to \nnR$, and assume that there exists a group $\cG$ of permutations over $\cN$ such that the equalities $g(S) = g(\sigma(S))$ and $\ell(S) = \ell(\sigma(S))$ hold for all sets $S \subseteq \cN$ and permutations $\sigma \in \cG$. Let $G$ and $L$ be the multilinear extensions of $g$ and $\ell$ respectively, and for every vector $\vx \in [0, 1]^\cN$, let us denote $\bar{\vx} = \bE_{\sigma \in \cG}[\vx]$, i.e., $\bar{\vx}$ is the expected vector $\sigma(\vx)$ when $\sigma$ is picked uniformly at random out of $\cG$. For any two constants $\alpha, \beta \geq 0$, if $\max_{S \subseteq \cN} [\alpha \cdot g(S) + \beta \cdot \ell(S)]$ is strictly positive and
\[
	\max_{\vx \in [0, 1]^\cN} [G(\bar{\vx}) + L(\bar{\vx})] \leq \max_{S \subseteq \cN} [\alpha \cdot g(S) + \beta \cdot \ell(S)]
	\enspace,
\]
then no polynomial time algorithm for {\RUSM} can guarantee $((1 + \eps)\alpha, (1 + \eps)\beta)$-appro\-ximation for any positive constant $\eps$. Furthermore, this inapproximability guarantee holds also when we restrict attention to instances $(g', \ell')$ of {\RUSM} having the following additional properties.
\begin{itemize}
	\item If $\ell$ is non-negative or non-positive, then we can assume that $\ell'$ also has the same property.
	\item If $g$ is monotone, then we can assume that $g'$ is monotone as well.
\end{itemize}
\end{restatable}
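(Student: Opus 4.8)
The plan is to obtain Theorem~\ref{thm:symmetry_gap} as an instantiation of Vondr\'{a}k's symmetry gap framework~\cite{vondrak2013symmetry}, adapted to track the two separate coefficients $\alpha$ and $\beta$ and to respect the side conditions on $\ell'$ and $g'$. The core is the classical ``two indistinguishable instances'' device: starting from the symmetric instance $(g, \ell)$, one builds on a common blown-up ground set a \emph{symmetrized} {\RUSM} instance $(\bar g, \bar \ell)$, whose optimum is capped by $\max_{\vx \in [0,1]^\cN}[G(\bar{\vx}) + L(\bar{\vx})]$, and a \emph{perturbed} instance $(\hat g, \hat \ell)$, whose optimum is essentially $\max_S[g(S) + \ell(S)]$, in such a way that no algorithm making polynomially many value-oracle queries can distinguish them with non-negligible probability. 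Because the two instances agree (in distribution) on every queried set, any such algorithm attains essentially the same expected value on both; the symmetrized cap therefore limits the value it can extract from the perturbed instance, while an $((1+\eps)\alpha, (1+\eps)\beta)$-approximation on the perturbed instance would be forced to exceed that cap.

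Concretely, I would first replace each element $u \in \cN$ by a ``cloud'' $C_u$ of $N$ identical copies, and for a vector $\vz$ on the blown-up ground set write $\phi(\vz) \in [0,1]^\cN$ for its vector of cloud averages. The symmetrized instance consists of the set functions whose multilinear extensions are $\vz \mapsto G(\overline{\phi(\vz)})$ and $\vz \mapsto L(\overline{\phi(\vz)})$; as in~\cite{vondrak2013symmetry}, for $N$ large enough these are genuine multilinear extensions of a non-negative submodular function and a linear function, respectively, and they inherit monotonicity of $g$ and the sign of $\ell$. In the perturbed instance I keep $\hat \ell = \bar \ell$ --- a linear function built simply by splitting each coefficient $a_u$ evenly over $C_u$, a step that is trivially linearity- and sign-preserving and needs no smoothing --- and replace $\bar g$ by $\hat g$, whose multilinear extension locally pushes $G(\overline{\phi(\vz)})$ back toward $G(\phi(\vz))$ on the ``non-symmetric'' region where $\phi(\vz)$ differs appreciably from $\overline{\phi(\vz)}$, smoothing this correction over a small ball so that submodularity is retained. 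Vondr\'{a}k's smoothing lemma guarantees that this single operation preserves non-negativity, submodularity, and monotonicity when present, and that $\max [\hat g + \hat \ell] \ge \max_{\vx}[G(\vx) + L(\vx)] - \eps' \ge \max_S[g(S) + \ell(S)] - \eps'$ for an arbitrarily small $\eps' > 0$.

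The indistinguishability step is standard: relabel the clouds by a uniformly random permutation. A fixed polynomial-size family of value queries is, with probability $1 - o(1)$ over the relabeling, ``balanced'' (each query meets roughly the expected fraction of every cloud), and on balanced sets $\hat g$ coincides with $\bar g$ while $\hat \ell = \bar \ell$ always; hence the algorithm's transcript and its (randomized) output have essentially the same law on $(\hat g, \hat \ell)$ and on $(\bar g, \bar \ell)$. Therefore its expected value on $(\hat g, \hat \ell)$ is at most $o(1)$ more than its expected value on $(\bar g, \bar \ell)$, which in turn is at most $\max_{\vx \in [0,1]^\cN}[G(\bar{\vx}) + L(\bar{\vx})] \le \max_S[\alpha g(S) + \beta \ell(S)]$ by the hypothesis. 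On the other hand, evaluating the perturbed benchmark at $\tilde S^{*} = \bigcup_{u \in S^{*}} C_u$, where $S^{*}$ attains $\max_S[\alpha g(S) + \beta \ell(S)]$, gives $\hat \ell(\tilde S^{*}) = \ell(S^{*})$ and, since $\tilde S^{*}$ sits in the non-symmetric region where the correction is active, $\hat g(\tilde S^{*}) \ge g(S^{*}) - \eps'$; hence $\max_{S'}[(1+\eps)\alpha \hat g(S') + (1+\eps)\beta \hat \ell(S')] \ge (1+\eps)\bigl(\max_S[\alpha g(S) + \beta \ell(S)] - \eps'\bigr)$. Since $\max_S[\alpha g(S) + \beta \ell(S)]$ is a fixed strictly positive constant, taking $\eps'$ small and then $N$ large makes this strictly larger than the bound $\max_S[\alpha g(S) + \beta \ell(S)] + o(1)$ achieved by the algorithm, contradicting the assumed guarantee. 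The restricted-instance claims follow because every instance produced, $(\bar g, \bar \ell)$ and $(\hat g, \hat \ell)$ alike, inherits the monotonicity of $g$ and the sign of $\ell$.

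The main obstacle is bookkeeping rather than a new idea: I must make each error term --- the additive $\eps'$ from smoothing, the $o(1)$ failure probability of the balancedness event, and the $O(1/N)$-scale slack in relating $\hat g$ on blown-up sets to $g$ on the originals --- uniformly small \emph{relative to the single fixed constant} $\eps$ and the fixed positive benchmark, which forces a careful order of quantifiers (fix $\eps$ and the base instance; then $\eps'$; then $N$). A secondary point needing care is verifying that one smoothing construction simultaneously respects non-negativity, submodularity, monotonicity (when $g$ is monotone), and the untouched sign of $\ell'$; here I would invoke the corresponding lemmas of~\cite{vondrak2013symmetry} directly, noting that because $\ell$ is handled by a plain coefficient split it never enters the delicate part of the construction.
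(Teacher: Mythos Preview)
Your proposal is correct and follows essentially the same route as the paper's proof in Appendix~\ref{app:symmetry_gap}: both apply Vondr\'{a}k's smoothing construction (Lemma~3.2 of~\cite{vondrak2013symmetry}) only to $g$, handle the linear part $\ell$ by a straight coefficient split over the blown-up ground set (exploiting that $L(\vy^\sigma(S))$ is $\sigma$-independent), invoke the indistinguishability lemma as a black box, and then compare the algorithm's capped value against the benchmark witnessed by the blown-up copy of the optimizer $S^*$. One small imprecision: the bound $\hat g(\tilde S^*) \ge g(S^*) - \eps'$ follows from the \emph{global} approximation $|\hat G - G| \le \eps'$, not from $\tilde S^*$ lying in any ``non-symmetric region where the correction is active''---the conclusion is right, but the stated reason is not how the smoothing lemma works.
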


In the common case in which the linear function $\ell$ is a non-positive, the following observation allows us to produce slightly cleaner results using Theorem~\ref{thm:symmetry_gap}.
\begin{observation} \label{obs:negative_simplification}
If $\ell$ is non-positive and $\alpha > 0$, then one can replace the term ``$((1 + \eps)\alpha, (1 + \eps)\beta)$-approximation'' in Theorem~\ref{thm:symmetry_gap} with the term ``$(\alpha + \eps, \beta)$-approximation''.
\end{observation}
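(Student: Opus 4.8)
The plan is to obtain the strengthened conclusion directly from Theorem~\ref{thm:symmetry_gap} via a trivial reduction, exploiting the following monotonicity: when $\ell$ is non-positive, \emph{increasing} the coefficient of $\ell$ in the benchmark $\max_{S \subseteq \cN}[\alpha \cdot g(S) + \beta \cdot \ell(S)]$ can only \emph{decrease} this benchmark. Concretely, I would keep all the hypotheses of Theorem~\ref{thm:symmetry_gap} (the symmetry group $\cG$, the strict positivity of the benchmark, and the multilinear-extension inequality), additionally assume that $\ell$ is non-positive and $\alpha > 0$, and then argue the contrapositive.

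Suppose there is some constant $\eps > 0$ and a polynomial time algorithm $A$ that guarantees $(\alpha + \eps, \beta)$-approximation for {\RUSM} on instances whose linear function is non-positive. Set $\eps' \triangleq \eps/\alpha > 0$, so that $(1+\eps')\alpha = \alpha + \eps$. I would then show that $A$ is in fact also an $((1+\eps')\alpha, (1+\eps')\beta)$-approximation algorithm on this class of instances: for any such instance and any set $S \subseteq \cN$, since $\ell(S) \le 0$ and $\eps'\beta \ge 0$ we have $(1+\eps')\beta \cdot \ell(S) = \beta \cdot \ell(S) + \eps'\beta \cdot \ell(S) \le \beta \cdot \ell(S)$, and therefore
\[
	(1+\eps')\alpha \cdot g(S) + (1+\eps')\beta \cdot \ell(S) \le (\alpha + \eps) \cdot g(S) + \beta \cdot \ell(S) \enspace .
\]
Taking the maximum over $S$ of both sides and combining with the guarantee of $A$ gives $\bE[g(T) + \ell(T)] \ge \max_{S \subseteq \cN}[(\alpha+\eps) g(S) + \beta \ell(S)] \ge \max_{S \subseteq \cN}[(1+\eps')\alpha \cdot g(S) + (1+\eps')\beta \cdot \ell(S)]$, which is exactly the claimed stronger guarantee. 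This contradicts Theorem~\ref{thm:symmetry_gap} invoked with $\eps'$ (its remaining hypotheses being carried over unchanged, and its conclusion holding even when restricted to non-positive $\ell$). Hence no such $A$ exists, which is precisely the ``$(\alpha + \eps, \beta)$-approximation'' form of the impossibility.

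I do not expect a genuine obstacle here; the argument is essentially a one-line observation. The only points requiring a little care are that the reduction stays within the class of instances with non-positive $\ell$ (it does, since $A$ is only ever evaluated on such instances, so the bullet in Theorem~\ref{thm:symmetry_gap} preserving non-positivity of $\ell$ is what makes the contradiction land), and the handling of the universal quantifier over $\eps$: it suffices to map each $\eps > 0$ witnessing a hypothetical $(\alpha+\eps,\beta)$-approximation to the value $\eps' = \eps/\alpha > 0$ contradicting the original statement, and the assumption $\alpha > 0$ is exactly what makes this choice legitimate.
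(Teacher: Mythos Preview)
Your proposal is correct and follows essentially the same argument as the paper: both set $\eps' = \eps/\alpha$, observe that when $\ell$ is non-positive any $(\alpha+\eps,\beta)$-approximation is automatically an $((1+\eps')\alpha,(1+\eps')\beta)$-approximation (since raising the coefficient of the non-positive $\ell$ can only lower the benchmark), and derive a contradiction with Theorem~\ref{thm:symmetry_gap}. The paper's write-up is slightly more terse, but the substance is identical.
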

\begin{proof}
Recall that Theorem~\ref{thm:symmetry_gap} proves, under some conditions, that no polynomial time algorithm for {\RUSM} has $((1 + \eps)\alpha, (1 + \eps)\beta)$-approximation. Furthermore, if we reduce the value of the constant parameter $\eps$ of the theorem by a factor of $\alpha$, then the theorem also shows that no such algorithm can guarantee $(\alpha + \eps, \beta + \eps\beta/\alpha)$-approximation. This implies the observation since, when $\ell$ is non-positive, any $(\alpha + \eps, \beta)$-approximation algorithm for {\RUSM} is also an $(\alpha + \eps, \beta + \eps\beta/\alpha)$-approximation algorithm.
\end{proof}

To prove Theorem~\ref{thm:monotone_inapproximability} using Theorem~\ref{thm:symmetry_gap}, we need to define an instance $\cI$ of monotone {\RUSM}. Specifically, consider a ground set $\cN$ of size $n \geq 2$ and a value $r \in (0, 1]$, and let us define
\[
	g(S) = \min\{|S|, 1\}
	\quad
	\text{and}
	\quad
	\ell(S) = -r \cdot |S|
	\qquad
	\forall\;S \subseteq \cN
	\enspace.
\]

\begin{lemma} \label{lem:monotone_symmetry_gap}
For any constants $\eps > 0$, $\beta \geq 0$ and $\alpha = 1 - e^{-\beta} + \eps$, when $n$ is large enough, there exists a value $r \in (0, 1]$ such that the inequality of Theorem~\ref{thm:symmetry_gap} applies to $\cI$ and $\max_{S \subseteq \cN} [\alpha \cdot g(S) + \beta \cdot \ell(S)]$ is strictly positive.
\end{lemma}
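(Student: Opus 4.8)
The plan is to apply Theorem~\ref{thm:symmetry_gap} with $\cG$ taken to be the group of \emph{all} permutations of $\cN$; both $g$ and $\ell$ are invariant under $\cG$ because they depend on $S$ only through $|S|$, and $g$ is additionally monotone (which makes the theorem's monotonicity clause available, as needed for Theorem~\ref{thm:monotone_inapproximability}). For this choice of $\cG$ the averaged vector $\bar{\vx}$ has all of its coordinates equal to $p \triangleq \frac1n \sum_{u \in \cN} x_u \in [0,1]$. The multilinear extensions are $G(\vx) = \bE[\min\{|\RSet(\vx)|, 1\}] = 1 - \prod_{u \in \cN}(1 - x_u)$ and $L(\vx) = -r \sum_{u \in \cN} x_u$, so $G(\bar{\vx}) + L(\bar{\vx}) = 1 - (1-p)^n - rnp$ and hence
\[
	\max_{\vx \in [0,1]^\cN}\bigl[G(\bar{\vx}) + L(\bar{\vx})\bigr] = \max_{p \in [0,1]}\bigl[1 - (1-p)^n - rnp\bigr] = \max_{s \in [0,n]}\bigl[1 - (1 - \tfrac{s}{n})^n - rs\bigr]
	\enspace,
\]
the last step being the change of variables $s = np$. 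On the other side, since $g(S) \in \{0,1\}$ and $\ell(S) = -r|S| \le 0$, the quantity $\alpha \cdot g(S) + \beta \cdot \ell(S)$ equals $\alpha - \beta r |S|$ for $|S| \geq 1$ (decreasing in $|S|$) and $0$ for $S = \emptyset$, so $\max_{S \subseteq \cN}[\alpha \cdot g(S) + \beta \cdot \ell(S)] = \max\{0, \alpha - \beta r\}$.

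Next I would set $r = e^{-\beta}$, which lies in $(0,1]$ since $\beta \geq 0$. With this choice $\alpha - \beta r = 1 - e^{-\beta} + \eps - \beta e^{-\beta} = 1 - (1+\beta)e^{-\beta} + \eps \geq \eps > 0$, where $(1+\beta)e^{-\beta} \le 1$ holds because $\frac{d}{d\beta}[(1+\beta)e^{-\beta}] = -\beta e^{-\beta} \le 0$ and the expression equals $1$ at $\beta = 0$; thus $\max_{S}[\alpha \cdot g(S) + \beta \cdot \ell(S)] = \alpha - \beta r$ is strictly positive, as required. It then remains to verify the inequality of Theorem~\ref{thm:symmetry_gap}, which now reads
\[
	\max_{s \in [0,n]}\bigl[1 - (1 - \tfrac{s}{n})^n - rs\bigr] \leq 1 - (1+\beta)e^{-\beta} + \eps
	\enspace.
\]
The right-hand side strictly exceeds $\max_{s \geq 0}[1 - e^{-s} - rs]$, since this continuous maximum is attained at $s = -\ln r = \beta$ and equals $1 - e^{-\beta} - \beta e^{-\beta} = 1 - (1+\beta)e^{-\beta}$. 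Hence it suffices to show that the discrete maximum on the left converges to this continuous maximum as $n \to \infty$, and then take $n$ large enough (in terms of $\beta$ and $\eps$).

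For the convergence I would split the range of $s$. For $s \geq M$, where $M$ is a fixed constant with $rM > 1$, we have crudely $1 - (1 - \tfrac{s}{n})^n - rs \le 1 - rs \le 1 - rM < 0$. For $s \in [0, M]$ and $n \geq 2M$, the bound $\ln(1 - x) \geq -x - x^2$ (valid for $x \in [0, 1/2]$) gives $(1 - \tfrac{s}{n})^n \geq e^{-s - s^2/n} \geq e^{-s}(1 - \tfrac{s^2}{n}) \geq e^{-s} - \tfrac{M^2}{n}$, whence $1 - (1 - \tfrac{s}{n})^n - rs \le (1 - e^{-s} - rs) + \tfrac{M^2}{n} \le \max_{s \ge 0}[1 - e^{-s} - rs] + \tfrac{M^2}{n}$. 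Combining the two cases, $\max_{s \in [0,n]}[1 - (1 - \tfrac{s}{n})^n - rs] \le \max_{s \ge 0}[1 - e^{-s} - rs] + \tfrac{M^2}{n} = 1 - (1+\beta)e^{-\beta} + \tfrac{M^2}{n}$, which is below $1 - (1+\beta)e^{-\beta} + \eps$ once $n > M^2/\eps$. The step I expect to require the most care is exactly this last one: the naive inequality $(1 - \tfrac{s}{n})^n \le e^{-s}$ points the wrong way, so one cannot bound the discrete objective directly by the continuous one, and must instead combine the reverse estimate $(1 - \tfrac{s}{n})^n \gtrsim e^{-s}$ (which degrades as $s$ approaches $n$) with the separate crude bound for the tail.
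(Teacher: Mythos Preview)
Your proof is correct and follows essentially the same approach as the paper: take $\cG$ to be the full symmetric group, set $r = e^{-\beta}$, and verify both strict positivity and the inequality of Theorem~\ref{thm:symmetry_gap} for large $n$. The only difference is in the technical step bounding $\max_{p \in [0,1]}[1 - (1-p)^n - rnp]$: the paper computes the exact maximizer $p = 1 - r^{1/(n-1)}$ and passes to the limit $1 - r(1 - \ln r)$ via L'H\^opital's rule, whereas you avoid the closed form by estimating $(1 - s/n)^n \geq e^{-s} - M^2/n$ on a bounded interval and treating the tail $s \geq M$ separately.
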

\begin{proof}
Observe that $\max_{S \subseteq \cN} [\alpha \cdot g(S) + \beta \cdot \ell(S)] \geq \alpha - \beta r = 1 - e^{-\beta} + \eps - \beta r$ because $S$ can be chosen as a singleton subset of $\cN$. Let us now study the left hand side of the inequality of Theorem~\ref{thm:symmetry_gap}. Since both $g$ and $\ell$ are unaffected when an arbitrary permutation is applied to the ground set, we can choose $\cG$ as the group of all permutations over $\cN$. Thus, for every vector $\vx \in [0, 1]^\cN$,
\[
	\bar{\vx} = \frac{\|\vx\|_1}{n} \cdot \characteristic_\cN
	\enspace.
\]
Therefore,
\begin{align*}
	\max_{\vx \in [0, 1]^\cN} [G(\bar{\vx}) + L(\bar{\vx})]
	={} &
	\max_{x \in [0, 1]} [G(x \cdot \characteristic_\cN) + L(x \cdot \characteristic_\cN)]\\
	={} &
	\max_{x \in [0, 1]} [1 - (1 - x)^n - xrn]
	=
	1 - r - r(n - 1)[1 - r^{1 / (n - 1)}]
	\enspace,
\end{align*}
where the last equality holds since the maximum is obtained for $x = 1 - \sqrt[n - 1]{r}$. Note now that if we denote $y = (n - 1)^{-1}$, then by L'H\^{o}pital's rule,
\[
	\lim_{n \to \infty} (n - 1)[1 - r^{1 / (n - 1)}]
	=
	\lim_{y \to 0} \frac{1 - r^y}{y}
	=
	\lim_{y \to 0} \frac{- r^y\ln r}{1}
	=
	- \ln r
	\enspace,
\]
and therefore, for a large enough $n$, $(n - 1)[1 - r^{1 / (n - 1)}] \geq - \ln r - \eps$; which implies
\[
	\max_{\vx \in [0, 1]^\cN} [G(\bar{\vx}) + L(\bar{\vx})]
	\leq
	1 - r - r[- \ln r - \eps]
	\leq
	1 - r(1 - \ln r) + \eps.
\]

Given the above bounds, we get that the inequality of Theorem~\ref{thm:symmetry_gap} holds for any $r > 0$ obeying
\[
	1 - e^{-\beta} + \eps - \beta r
	\geq
	1 - r(1 - \ln r) + \eps
	\enspace.
\]
Since the last inequality is equivalent to
\[
	r - r \ln r \geq e^{-\beta} + \beta r
	\enspace,
\]
it holds for $r = e^{-\beta} \subseteq (0, 1]$. Furthermore, for this choice of $r$,
\[
	\max_{S \subseteq \cN} [\alpha \cdot g(S) + \beta \cdot \ell(S)]
	\geq
	1 - e^{-\beta} + \eps - \beta r
	=
	1 - (1 + \beta)e^{-\beta} + \eps
	\geq
	1 - \frac{1 + \beta}{1 + \beta} + \eps
	=
	\eps
	>
	0
	\enspace.
	\qedhere
\]
\end{proof}

Theorem~\ref{thm:monotone_inapproximability}, which we repeat here for convenience, now follows by combining Theorem~\ref{thm:symmetry_gap}, Observation~\ref{obs:negative_simplification} and Lemma~\ref{lem:monotone_symmetry_gap} since $g$ is a non-negative monotone submodular function and $\ell$ is a non-positive linear function.
\thmMonotoneInapproximability*
\section{Algorithm for the General Case} \label{sec:general}

In this section we describe and analyze the only non-trivial algorithm known to date (as far as we know) for general {\RUSM}. Using this algorithm we prove Theorem~\ref{thm:general}, which we repeat here for convenience.
\thmGeneral*

Our algorithm is based on a non-oblivious local search, i.e., a local search guided by an auxiliary function rather than the objective function. Non-oblivious local searches have been used previously in the context of submodular maximization by, for example, Feige et al.~\cite{feige2011maximizing} and Filmus and Ward~\cite{filmus2014monotone}. The auxiliary function used by our algorithm is a function $h \colon 2^\cN \to \nnR$ defined as follows. For every set $S \subseteq \cN$,
\[
	h(S) = \bE[g(S(\beta))] + \beta(1 + \beta) \cdot \ell(S)
	\enspace,
\]
where $S(\beta)$ is a random subset of $S$ that includes every element of $S$ with probability $\beta$, independently.

Ideally, we would like to find a local maximum with respect to $h$, i.e., a set $T \subseteq \cN$ such that the value of $h(T)$ cannot be increased either by adding a single element to $T$, or by removing a single element from $T$. However, there are two issues that make the task of finding such a local maximum difficult.
\begin{itemize}
	\item We do not know how to exactly evaluate the expectation in the definition of $h$ in polynomial time. Therefore, whenever we need to calculate expressions involving $h$, we have to approximate them using sampling, which introduces estimation errors that have to be taken into account.
	\item A straightforward local search algorithm changes its current solution whenever adding or removing a single element improves this solution. However, the time complexity of such a na\"{i}ve algorithm can be exponential. Therefore, our algorithm adds or removes an element only when this is beneficial enough, which means that the algorithm finds an approximate local maximum rather than a true one. Employing this idea is not trivial given the errors introduced by the sampling, as mentioned above. However, we manage to prove that, for the value $\Delta$ defined by our algorithm, with high probability: (i) the algorithm only makes changes that increase the value of $h(T)$ by $\Delta/2$ or more, and (ii) the algorithm continuous to make changes as long as there exists some possible change that increases the value of $h(T)$ by at least $3\Delta/2$.
	
The quality of the approximate local maximum produced by our algorithm is controlled by the parameter $\eps$ of Theorem~\ref{thm:general}. Setting a lower value for $\eps$ decreases $\Delta$, which increases the time complexity of our algorithm, but also makes the approximate local maximum produced closer to being a true local maximum, and thus, improves the approximation guarantee.
\end{itemize}

Let $\hat{S}$ be a subset of $\cN$ maximizing $\left(\alpha(\beta) - \eps\right) \cdot g(\hat{S}) + (\beta - \eps) \cdot \ell(\hat{S})$. To implement the solutions described in the last two bullets, it is useful to assume that the ground set $\cN$ does not include elements that have some problematic properties. The following reduction shows that we can assume that this is indeed the case without loss of generality.

\begin{reduction} \label{red:remove_elements}
While proving Theorem~\ref{thm:general}, we may assume that every element $u \in \cN$ obeys
\[
	\alpha(\beta) \cdot g(u) + \beta \cdot \ell(u) \geq 0
	\qquad
	\text{and}
	\qquad
	\max\{g(u) + \ell(u), g(\varnothing)\} \leq \beta \cdot [g(\hat{S}) + \ell(\hat{S})]
	\enspace.
\]
\end{reduction}
\begin{proof}
Let us begin by proving the second part of the reduction. If there exists an element $u$ for which
\[
	\max\{g(u) + \ell(u), g(\varnothing) + \ell(\varnothing)\}
	=
	\max\{g(u) + \ell(u), g(\varnothing)\}
	>
	\beta \cdot [g(\hat{S}) + \ell(\hat{S})]
	\enspace,
\]
then one can obtain an algorithm with the guarantee stated in Theorem~\ref{thm:general} by simply returning the set $T \in \{\varnothing\} \cup \{\{u\} \mid u \in \cN\}$ maximizing $g(T) + \ell(T)$ because
\begin{align*}
	g(T) &{}+ \ell(T)
	\geq
	\max\{g(u) + \ell(u), g(\varnothing) + \ell(\varnothing)\}
	>
	\beta \cdot [g(\hat{S}) + \ell(\hat{S})]\\
	={} &
	\frac{\beta}{\beta - \eps} \cdot [ (\beta - \eps) \cdot g(\hat{S}) + (\beta - \eps) \cdot \ell(\hat{S})]
	\geq
	\frac{\beta}{\beta - \eps} \cdot [(\alpha(\beta) - \eps) \cdot g(\hat{S}) + (\beta - \eps) \cdot \ell(\hat{S})]\\
	={} &
	\frac{\beta}{\beta - \eps} \cdot \max_{S \subseteq \cN} [(\alpha(\beta) - \eps) \cdot g(S) + (\beta - \eps) \cdot \ell(S)]
	\geq
	\max_{S \subseteq \cN} [(\alpha(\beta) - \eps) \cdot g(S) + (\beta - \eps) \cdot \ell(S)]
	\enspace.
\end{align*}
The penultimate inequality holds since $g$ is non-negative, the last equality follows from the definition of $\hat{S}$, and the last inequality holds since the maximum on both sides of this inequality is non-negative (this can be seen by choosing $S = \varnothing$). It is also worth mentioning the above algorithm, namely, outputting the set $T \in \{\varnothing\} \cup \{\{u\} \mid u \in \cN\}$ maximizing $g(T) + \ell(T)$ can be implemented to run in linear time since it only has to consider $|\cN| + 1$ candidate sets.

It remains to prove the first part of the reduction. Assume that there exists an algorithm $ALG$ that has the guarantee stated in Theorem~\ref{thm:general} for instances obeying the first part of the reduction, and let us explain how to get an algorithm that has the same guarantee for general instances of {\RUSM}. Towards this goal, let us define
\[
	\cN'
	=
	\{u \in \cN \mid \alpha(\beta) \cdot g(u) + \beta \cdot \ell(u) \geq 0\}
	\enspace.
\]
In other words, $\cN'$ is the subset of $\cN$ that includes all the elements obeying the first part of the reduction. We claim that $\hat{S} \subseteq \cN'$. If this is not true, then, by the submodularity and non-negativity of $g$, for any element $u \in \hat{S} \setminus \cN'$ we have
\begin{align*}
	(\alpha(\beta) - \eps) \cdot [g(\hat{S}) - g(\hat{S} &{}- u)] + (\beta - \eps) \cdot [\ell(\hat{S}) - \ell(\hat{S} - u)]
	\leq
	(\alpha(\beta) - \eps) \cdot g(u \mid \varnothing) + (\beta - \eps) \cdot \ell(u)\\
	\leq{} &
	(\alpha(\beta) - \eps) \cdot g(u) + (\beta - \eps) \cdot \ell(u)
	\leq
	\frac{\beta - \eps}{\beta} \cdot [\alpha(\beta) \cdot g(u) + \beta \cdot \ell(u)]
	<
	0
	\enspace,
\end{align*}
which contradicts the definition of $\hat{S}$ (the penultimate inequality holds since $\alpha(\beta) \leq \beta$ and $g$ is non-negative).

Observe now that by the definition of $\cN'$, we can execute $ALG$ on $\cN'$, which produces a set $T$ obeying
\[
	\bE[g(T) + \ell(T)]
	\geq
	\max_{S \subseteq \cN'} [(\alpha(\beta) - \eps) \cdot g(S) + (\beta - \eps) \cdot \ell(S)]
	=
	\max_{S \subseteq \cN} [(\alpha(\beta) - \eps) \cdot g(S) + (\beta - \eps) \cdot \ell(S)]
	\enspace,
\]
where the equality holds since the maximum in the rightmost side is obtained for $S = \hat{S}$, and $\hat{S}$ is a subset of $\cN'$ as we have proved above. Since one can construct $\cN'$ in linear time, executing $ALG$ on $\cN'$ is the promised algorithm that achieves the guarantee of Theorem~\ref{thm:general} without assuming the first part of the reduction.
\end{proof}

From this point until the end of the section, we denote by $n$ the size of the ground set $\cN$. We are now ready to describe our algorithm (given as Algorithm~\ref{alg:LocalSearch}). This algorithm implicitly assumes that Reduction~\ref{red:remove_elements} was applied, that $n$ is large enough and that $\max\{g(\varnothing), \max_{u \in \cN} g(u)\} > 0$.\footnote{Let us explain why the problem becomes easy if either of the last two assumptions is violated. If $n$ is bounded by a constant, it is possible to use exhaustive search to find the set $T \subseteq \cN$ maximizing $g(T) + h(T)$, and one can verify that such a set has the properties guaranteed by Theorem~\ref{thm:general}. Additionally, if $\max\{g(\varnothing), \max_{u \in \cN} g(u)\} = 0$, then the submodularity of $g$ guarantees that $g$ is the zero function, which means that we can get the guarantee of Theorem~\ref{thm:general} by outputting the set $\{u \in \cN \mid \ell(u) > 0\}$.} The algorithm maintains a solution $T$, which it updates in iterations. In each iteration, the algorithm calculates for every element $u$ an estimate $\omega_u$ of the contribution of $u$ to the $g$ component of the auxiliary function $h$. Then, Line~\ref{line:adding} of the algorithm looks for an element $u \in \cN \setminus T$ which, based on the estimate $\omega_u$, will increase $h(T)$ by $\Delta$ if added to $T$. If such an element $u$ is found, the algorithm adds it to $T$ and continues to the next iteration. Otherwise, Line~\ref{line:removing} looks for an element $u \in T$ which will increase $h(T)$ by $\Delta$ if removed from $T$ (again, based on the estimate $\omega_T$). If such an element $u$ is found, then the algorithm removes it from $T$ and continues to the next iteration. However, if both Lines~\ref{line:adding} and~\ref{line:removing} fail to find an appropriate element, the algorithm assumes that it has encountered an approximate local maximum, and terminates. Somewhat surprisingly, when this happens the algorithm outputs a sample $\hat{T}$ of $T(\beta)$ rather than the solution $T$ itself (unless the value of this sample is negative, in which case the algorithm falls back to the solution $\varnothing$). We show below that if $T$ is an approximate local maximum of the auxiliary function $h$, then $T(\beta)$ is in expectation a good solution with respect to the objective function.

\begin{algorithm}
\caption{\texttt{Non-oblivious Local Search} $(\beta, \eps)$} \label{alg:LocalSearch}
\DontPrintSemicolon
    Let $\Delta \gets \frac{\eps}{2n} \cdot \max\{g(\varnothing), \max_{u \in \cN} g(u)\}$.\\
    Let $T \gets \{u \in \cN \mid \ell(u) > 0\}$.\\
    \For{$i = 1$ \KwTo $\lceil 4n^2 / \eps \rceil  + 1$}
    {
        \lFor{every $u \in \cN$}{Let $\omega_{u}$ be an estimate of $\beta \cdot \bE[g(u \mid T(\beta) - u)]$ obtained by taking the average of $\beta \cdot g(u \mid T(\beta) - u)$ for $k = \lceil 128n^4 \eps^{-2}\beta^2 \cdot \ln(10n^4 / \eps) \rceil$ independent samples of $T(\beta)$.}
	    \lIf{there exists $u \in \cN \setminus T$ such that $\omega_u + \beta(1 + \beta) \cdot \ell(u) \geq \Delta$}{Update $T \gets T + u$.\label{line:adding}}
	    \lElseIf{there exists $u \in T$ such that $\omega_u + \beta(1 + \beta) \cdot \ell(u) \leq -\Delta$}{Update $T \gets T - u$.\label{line:removing}}
        \lElse{Exit the ``for'' loop.\label{line:exit}}
    }
    Let $\hat{T}$ be a sample of $T(\beta)$.\\
		\lIf{$g(\hat{T}) + \ell(\hat{T}) \geq 0$}{\Return $\hat{T}$.}
		\lElse{\Return $\varnothing$.}
\end{algorithm}    

It is clear that Algorithm~\ref{alg:LocalSearch} runs in polynomial time, and therefore, we concentrate in the rest of this section on proving its approximation guarantee. Algorithm~\ref{alg:LocalSearch} makes multiple estimation during its execution. We say that an estimate $\omega_{u}$ is \emph{good} if $|\omega_{u} - \bE[g(u \mid T(\beta) - u)]| \leq \Delta/2$ (for the set $T$ at the time in which the estimate was made), otherwise the estimate is \emph{bad}.
\begin{lemma} \label{lem:Estimation}
With high probability (a probability approaching $1$ when $n$ tends to infinity), all the estimates made by Algorithm~\ref{alg:LocalSearch} are good.
\end{lemma}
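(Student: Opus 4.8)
The plan is to reduce the statement to a single application of Hoeffding's inequality per estimate, followed by a union bound over all estimates that Algorithm~\ref{alg:LocalSearch} computes. First I would fix one such estimate $\omega_{u}$, computed in some iteration of the ``for'' loop, and condition on all the randomness used by the algorithm before this estimate is produced; this fixes the current solution $T$ to a particular set $S_{0}$. Conditioned on this event, $\omega_{u}$ is the average of $k$ \emph{independent} copies of the random variable $X = \beta \cdot g(u \mid S_{0}(\beta) - u)$, each with mean $\beta \cdot \bE[g(u \mid S_{0}(\beta) - u)]$ --- precisely the quantity $\omega_{u}$ is meant to estimate. So everything comes down to a concentration bound for this average, for which I need a uniform bound on the range of $X$.

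Establishing that range bound is the one genuinely technical point. Write $M = \max\{g(\varnothing), \max_{v \in \cN} g(v)\}$, which the algorithm assumes to be positive (so $\Delta = \frac{\eps}{2n} M > 0$). For every $S \subseteq \cN$ and $u \notin S$, submodularity gives $g(u \mid S) \le g(u \mid \varnothing) = g(\{u\}) - g(\varnothing) \le g(\{u\}) \le M$, using $g \ge 0$. For the other direction, $g(u \mid S) = g(S + u) - g(S) \ge -g(S)$ since $g \ge 0$, while a telescoping bound on $g(S)$ along an arbitrary ordering of its elements, with each marginal bounded via submodularity by the corresponding $g(\cdot \mid \varnothing)$, gives $g(S) \le nM$. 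Hence $|g(u \mid S)| \le nM$, so $X$ always lies in an interval of length at most $2\beta n M$. This is exactly why the sample count $k$ in the algorithm carries a factor of $n^{4}$ rather than the $n^{2}$ one would expect from $\Delta \propto 1/n$ alone.

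Given the range bound, Hoeffding's inequality gives, conditioned on $T = S_{0}$,
\[
	\Pr\!\left[\, |\omega_{u} - \beta \cdot \bE[g(u \mid S_{0}(\beta) - u)]| \ge \Delta/2 \,\right]
	\le
	2\exp\!\left( -\frac{2k(\Delta/2)^{2}}{(2\beta nM)^{2}} \right)
	=
	2\exp\!\left( -\frac{k\eps^{2}}{32 \beta^{2} n^{4}} \right)
	\enspace,
\]
where I substituted $\Delta = \frac{\eps}{2n} M$. Plugging in $k = \lceil 128 n^{4} \eps^{-2} \beta^{2} \ln(10 n^{4}/\eps) \rceil$ makes the exponent at most $-4\ln(10 n^{4}/\eps)$, so this probability is at most $2(\eps / (10 n^{4}))^{4}$; as the bound does not depend on $S_{0}$, it holds unconditionally for every individual estimate. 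Algorithm~\ref{alg:LocalSearch} computes at most $n \cdot (\lceil 4n^{2}/\eps \rceil + 1) = O(n^{3}/\eps)$ estimates, so a union bound shows the probability that at least one is bad is $O(n^{3}/\eps) \cdot (\eps/(10 n^{4}))^{4} = O(\eps^{3}/n^{13})$, which tends to $0$ as $n \to \infty$, as required.

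I do not expect a real obstacle here; the two places that need a little care are the conditioning step --- one must be explicit that the $k$ samples of $T(\beta)$ drawn within a single estimation are fresh, hence i.i.d.\ given the history, even though the histories across iterations are themselves correlated through earlier estimates --- and the choice of the uniform bound on $g(u \mid \cdot)$, where the crude inequality $g(u \mid S) \ge -g(S) \ge -nM$ is the bottleneck that dictates the $n^{4}$ dependence in $k$.
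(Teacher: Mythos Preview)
Your proof is correct and follows essentially the same route as the paper's: bound the range of each sample $\beta \cdot g(u \mid T(\beta) - u)$ via the submodularity and non-negativity of $g$ (yielding the same $O(nM)$ bound that forces the $n^{4}$ in $k$), apply a Hoeffding/Chernoff-type tail bound, and union-bound over the $O(n^{3}/\eps)$ estimates. The only cosmetic differences are that the paper cites a Chernoff-type inequality from Alon--Spencer instead of Hoeffding and stops at the weaker per-estimate failure probability $\eps/(5n^{4})$ (giving overall failure probability at most $1/n$), and it leaves implicit the conditioning on the algorithm's history that you spell out explicitly.
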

\begin{proof}
Consider a particular estimate $\omega_{u}$ made by Algorithm~\ref{alg:LocalSearch}, and recall that the algorithm makes this estimate by averaging $\beta \cdot g(u \mid T(\beta) - u)$ for $k$ independent samples of $T(\beta)$. Let $T_1, T_2, \dotsc, T_k$ denote the samples of $T(\beta)$ used by the algorithm, and let us define, for every integer $1 \leq i \leq k$,
    \[
        X_{i} = \frac{g(u \mid T_i - u) - \bE[g(u \mid T(\beta) - u)]}{4n^2\Delta / \eps}
				\enspace.
    \]
Note that we deterministically have $|X_{i}| \leq 1$ because for every set $S \subseteq \cN$ and element $u \in \cN$ it holds that
\[
	g(u \mid S - u) \leq g(u \mid \varnothing) = g(u) - g(\varnothing) \leq g(u) \leq \frac{2n \Delta}{\eps}
\]
(the first inequality follows from the submodularity of $g$, the second from $g$'s non-negativity, and the last from the definition of $\Delta$), and
    \begin{align*}
				g(u \mid S - u)
				={} &
				g(S+u) - g(S - u)
				\geq
				- g(S - u)\\
				\geq{} &
				-\max\{g(\varnothing), |S - u| \cdot \max_{v \in \cN} g(v)\}
				\geq
				n \cdot \max\{g(\varnothing), \max_{v \in \cN} g(v)\}
				=
        -\frac{2n^2 \Delta}{\eps}
				\enspace,
    \end{align*}
    where the first and last inequalities hold since $g$ is non-negative and the last equality follows from the definition of $\Delta$. To justify the second inequality, note that, by the submodularity and non-negativity of $g$,
		\[
			g(S - u)
			\leq
			g(\varnothing) + \sum_{v \in S - u} g(v \mid \varnothing)
			=
			(1 - |S - u|) \cdot g(\varnothing) + \sum_{v \in S - u} g(v)
			\leq
			\max\{g(\varnothing), |S - u| \cdot \max_{v \in \cN} g(v)\}
			\enspace.
		\]
		
We can now upper bound the probability that $\omega_u$ is a bad estimate as follows.
\begin{align*}
	\Pr[|\omega_u - \beta \cdot \bE[g(u \mid T(\beta) - u)]| >{}& \Delta/2]
	=
	\Pr\left[\frac{4\beta n^2\Delta}{\eps} \cdot \frac{\sum_{i = 1}^k X_i}{k} > \frac{\Delta}{2}\right]
	=
	\Pr\left[\sum_{i = 1}^k X_i > \frac{\eps k}{8\beta n^2}\right]\\
	\leq{} &
	2e^{-(\eps k / (8\beta n^2))^2/(2k)}
	=
	2e^{-\eps^2k / (128\beta^2 n^4)}
	\leq
	2^{-\ln(10n^4 / \eps)}
	=
	\frac{\eps}{5n^4}
	\enspace,
\end{align*}
where the first inequality follows from the Chernoff-like Theorem A.1.16 of~\cite{alon2000probabilistic}, and the second inequality holds by the definition of $k$.
To conclude the proof of the lemma, it remains to observe that Algorithm~\ref{alg:LocalSearch} makes at most $5n^3 / \eps$ estimates since it makes only $n$ estimates per iteration, and has at most $\lceil 4n^2 / \eps \rceil + 1 \leq 5n^2 / \eps$ iterations (the inequality holds for a large enough $n$). Therefore, by the union bound, the probability that any estimate made by this algorithm is bad can be upper bounded by $1/n$.
\end{proof}

Using the previous lemma, we can now prove that, with high probability, Algorithm~\ref{alg:LocalSearch} terminate with $T$ being an approximate local maximum.

\begin{lemma} \label{lem:Properties}
With high probability, when Algorithm~\ref{alg:LocalSearch} terminates we have 
\[
	h(T) \geq h(T + u) - 3\Delta/2 \quad \forall\; u \in \cN \setminus T
	\qquad
	\text{and}
	\qquad
	h(T) \geq h(T - u) - 3\Delta/2 \quad \forall\; u \in T
	\enspace.
\]
\end{lemma}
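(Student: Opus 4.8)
The plan is to condition on the high-probability event, guaranteed by Lemma~\ref{lem:Estimation}, that every estimate $\omega_u$ made by the algorithm is good, i.e., $|\omega_u - \beta \cdot \bE[g(u \mid T(\beta) - u)]| \leq \Delta/2$, and to prove the claimed inequalities deterministically under this event. Everything hinges on one clean identity: for any set $T$ and element $u \notin T$,
\[
	h(T + u) - h(T) = \beta \cdot \bE[g(u \mid T(\beta) - u)] + \beta(1 + \beta) \cdot \ell(u),
\]
and the analogous identity $h(T) - h(T - u) = \beta \cdot \bE[g(u \mid T(\beta) - u)] + \beta(1 + \beta) \cdot \ell(u)$ for $u \in T$. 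Both follow by conditioning on whether $u$ belongs to the random set in the definition of $h$: since $(T + u)(\beta)$ contains $u$ with probability $\beta$ independently of the rest, $\bE[g((T+u)(\beta))] = \beta \cdot \bE[g(T(\beta) + u)] + (1 - \beta) \cdot \bE[g(T(\beta))]$ when $u \notin T$, while the $\ell$-term is immediate from linearity; the case $u \in T$ is symmetric. Note that the right-hand side of these identities equals, up to the estimation error, exactly the quantity $\omega_u + \beta(1 + \beta) \cdot \ell(u)$ tested on Lines~\ref{line:adding} and~\ref{line:removing}.

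Given these identities, I would first argue that the algorithm exits the ``for'' loop through Line~\ref{line:exit} rather than by exhausting the iteration budget. Whenever the algorithm performs an update on Line~\ref{line:adding} or Line~\ref{line:removing}, the tested quantity has absolute value at least $\Delta$; combining this with the identity above and with goodness of the estimate shows that each such update increases $h(T)$ by at least $\Delta/2$. On the other hand, writing $M = \max\{g(\varnothing), \max_{u \in \cN} g(u)\}$, one has $\bE[g(S(\beta))] \leq \max_{S' \subseteq S} g(S') \leq nM$ for every $S$ (the bound $g(S') \leq \max\{g(\varnothing), |S'| \cdot \max_v g(v)\}$ was already established in the proof of Lemma~\ref{lem:Estimation}), while $\ell(S) \leq \ell(T_0)$ for the initial set $T_0 = \{u : \ell(u) > 0\}$ because $T_0$ maximizes $\ell$; hence $h(S) - h(T_0) \leq nM - \bE[g(T_0(\beta))] \leq nM$ for every $S$. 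Since $h$ never decreases during the execution, the number of updates is at most $nM/(\Delta/2) = 4n^2/\eps$, which is strictly less than the $\lceil 4n^2/\eps \rceil + 1$ iterations of the loop, so Line~\ref{line:exit} must be reached.

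Finally, when Line~\ref{line:exit} is reached with current set $T$, the failure of the condition on Line~\ref{line:adding} means $\omega_u + \beta(1 + \beta) \cdot \ell(u) < \Delta$ for every $u \in \cN \setminus T$, and the failure of the condition on Line~\ref{line:removing} means $\omega_u + \beta(1 + \beta) \cdot \ell(u) > -\Delta$ for every $u \in T$. Plugging these into the two identities and again using goodness of the estimates gives $h(T + u) - h(T) \leq \omega_u + \Delta/2 + \beta(1+\beta)\ell(u) < 3\Delta/2$ for $u \notin T$ and, symmetrically, $h(T - u) - h(T) \leq -\omega_u + \Delta/2 - \beta(1+\beta)\ell(u) < 3\Delta/2$ for $u \in T$, which are precisely the two inequalities claimed by the lemma.

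I expect the main obstacle to be the second paragraph --- certifying that the algorithm genuinely halts at an approximate local maximum and not merely because the iteration budget was spent. The delicate points there are bookkeeping the estimation error consistently (a single bad estimate could manufacture a spurious improving move, which is why we condition on Lemma~\ref{lem:Estimation} throughout) and obtaining a clean range bound on $h$; the latter works out only because the initial solution $T_0$ already maximizes the linear part, so the potentially large contribution of $\ell$ to the range of $h$ cancels against $h(T_0)$. The identities of the first paragraph, though conceptually the heart of the argument, are routine once one views $S(\beta)$ as independent coin flips over the elements of $S$.
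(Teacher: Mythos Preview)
Your proposal is correct and follows essentially the same approach as the paper: condition on all estimates being good (Lemma~\ref{lem:Estimation}), use the identity $h(T \pm u) - h(T) = \pm(\beta\,\bE[g(u \mid T(\beta)-u)] + \beta(1+\beta)\ell(u))$ to show each update raises $h$ by at least $\Delta/2$, bound the total possible increase of $h$ via the fact that $T_0$ already maximizes $\ell$ so that Line~\ref{line:exit} must be reached, and then read off the $3\Delta/2$ slack from the failed tests. The only cosmetic difference is that the paper phrases the termination step as a contradiction on $h(T')$ rather than as your direct counting bound $nM/(\Delta/2)=4n^2/\eps<\lceil 4n^2/\eps\rceil+1$, but the content is identical.
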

\begin{proof}
We prove that the lemma holds deterministically when all the estimates made by Algorithm~\ref{alg:LocalSearch} are good, which is a high probability event by Lemma~\ref{lem:Estimation}. Our first step is to show that given this assumption the value of $h(T)$ increases by at least $\Delta/2$ following every iteration of the algorithm unless this iteration terminates by Line~\ref{line:exit} (and therefore, does not modify $T$). If Algorithm~\ref{alg:LocalSearch} added an element $u \in \cN \setminus T$ to $T$ during the iteration, then the new value of $h(T)$ is
\begin{align*}
	h(T + u)
	={} &
	h(T) + \beta \cdot \bE[g(u \mid T(\beta) - u)] + \beta(1 + \beta) \cdot \ell(u)\\
	\geq{} &
	h(T) + (\omega_u - \Delta/2) + \beta(1 + \beta) \cdot \ell(u)
	\geq
	h(T) + \Delta/2
	\enspace,
\end{align*}
where the last inequality follows from the condition of Line~\ref{line:adding}. Similarly, if Algorithm~\ref{alg:LocalSearch} removed an element $u \in T$ from $T$ during the iteration, then, by the condition on Line~\ref{line:removing}, the new value of $h(T)$ is
\begin{align*}
	h(T - u)
	={} &
	h(T) - \beta \cdot \bE[g(u \mid T(\beta) - u)] - \beta(1 + \beta) \cdot \ell(u)\\
	\geq{} &
	h(T) - (\omega_u + \Delta/2) - \beta(1 + \beta) \cdot \ell(u)
	\geq
	h(T) + \Delta/2
	\enspace.
\end{align*}

We now argue that Algorithm~\ref{alg:LocalSearch} must reach Line~\ref{line:exit} at some point. Assume towards a contradiction that this does not happen, which by the above observation implies that the algorithm increases the value of $h(T)$ by at least $\Delta/2$ in each one of its $\lceil4n^2 / \eps + 1\rceil$ iterations. Additionally, if we denote by $T'$ the final value of the set $T$, then the initialization of $T$ and the non-negativity of $g$ guarantee together that the original value of $h(T)$ before the first iteration of Algorithm~\ref{alg:LocalSearch} is at least $\beta(1 + \beta) \cdot \ell(\{u \in \cN \mid \ell(u) > 0\}) \geq \beta(1 + \beta) \cdot \ell(T')$. Using these two results, we can lower bound the value of $h(T')$ by
\begin{align*}
	h(T')
	\geq{} &
	\beta(1 + \beta) \cdot \ell(T')
	+
	\left\lceil\frac{4n^2}{\eps} + 1\right\rceil \cdot \frac{\Delta}{2}
	>
	\beta(1 + \beta) \cdot \ell(T') + \frac{4n^2}{\eps} \cdot \frac{\Delta}{2}\\
	={} &
	\beta(1 + \beta) \cdot \ell(T') + n \cdot \max\{g(\varnothing), \max_{u \in \cN} g(u)\}
	\geq
	\beta(1 + \beta) \cdot \ell(T') + \bE[g(T'(\beta))]
	=
	h(T')
	\enspace,
\end{align*}
which is a contradiction. The strict inequality holds since $\Delta > 0$ by our assumption that $\max\{g(\varnothing),\allowbreak \max_{u \in \cN} g(u)\} > 0$, and the second inequality holds since the submodularity and non-negativity of $g$ guarantee that for every set $S \subseteq \cN$
\begin{align*}
	g(S)
	\leq{} &
	g(\varnothing) + \sum_{u \in S} g(u \mid \varnothing)
	=
	(1 - |S|) \cdot g(\varnothing) + \sum_{u \in S} g(u)
	\leq
	\max\left\{g(\varnothing), \sum_{u \in S} g(u)\right\}\\
	\leq{} &
	\max\{g(\varnothing), |S| \cdot \max_{u \in \cN} g(u)\}
	\leq
	n \cdot \max\{g(\varnothing), \max_{u \in \cN} g(u)\}
	\enspace.
\end{align*}

The last contradiction implies that our assumption was wrong, and Algorithm~\ref{alg:LocalSearch} terminates after reaching Line~\ref{line:exit}. When this happens, since the condition of Line~\ref{line:adding} evaluated to FALSE, for every element $u \in \cN \setminus T$,
\begin{align*}
	h(T)
	\geq{} &
	h(T) + \omega_u + \beta(1 + \beta) \cdot \ell(u) - \Delta\\
	\geq{} &
	h(T) + \beta \cdot \bE[g(u \mid T(\beta) - u)] + \beta(1 + \beta) \cdot \ell(u) - 3\Delta/2
	=
	h(T + u) - 3\Delta/2
	\enspace,
\end{align*}
where the second inequality holds since we assume that all the estimates made by Algorithm~\ref{alg:LocalSearch} are good. Similarly, since the condition of Line~\ref{line:removing} evaluated to FALSE, for every element $u \in T$,
\begin{align*}
	h(T)
	\geq{} &
	h(T) - \omega_u - \beta(1 + \beta) \cdot \ell(u) - \Delta\\
	\geq{} &
	h(T) - \beta \cdot \bE[g(u \mid T(\beta) - u)] - \beta(1 + \beta) \cdot \ell(u) - 3\Delta/2
	=
	h(T - u) - 3\Delta/2
	\enspace.
	\qedhere
\end{align*}
\end{proof}

The last lemma shows that with high probability the final set $T$ is an approximate local maximum with respect to $h$. Lemma~\ref{lem:good_local_max} shows that this implies that $T(\beta)$ is a good solution in expectation. To prove Lemma~\ref{lem:good_local_max}, we need the following known lemma.

\begin{lemma}[Lemma~2.2 of~\cite{feige2011maximizing}] \label{lem:2.2}
    Let $f\colon 2^X \rightarrow \nnR$ be a submodular function, and given a set $A \subseteq X$, let us denote by $A_p$ a random subset of $A$ where each element appears with probability $p \in [0, 1]$ (not necessarily independently). Then,
    \[
        \bE[f(A_p)] \geq (1-p) \cdot f(\varnothing) + p \cdot f(A)
				\enspace.
    \]
\end{lemma}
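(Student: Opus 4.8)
The plan is to turn the statement into a single application of linearity of expectation, after first establishing a pointwise lower bound on $f$ that holds for \emph{every} realization of $A_p$ and whose coefficients are \emph{deterministic}. To set this up I would fix once and for all an arbitrary ordering $a_1, a_2, \dots, a_k$ of the elements of $A$, write $A_{<i} = \{a_1, \dots, a_{i-1}\}$, and define the fixed real numbers $c_i := f(a_i \mid A_{<i})$ (these may be negative, since $f$ need not be monotone). Note that they telescope: $\sum_{i=1}^{k} c_i = f(A) - f(\varnothing)$.

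For any fixed $B \subseteq A$, listing the elements of $B$ in the chosen order and telescoping gives $f(B) = f(\varnothing) + \sum_{i\,:\,a_i \in B} f(a_i \mid B \cap A_{<i})$. Since $B \cap A_{<i} \subseteq A_{<i}$ and $a_i \notin A_{<i}$, the diminishing-returns form of submodularity yields $f(a_i \mid B \cap A_{<i}) \ge f(a_i \mid A_{<i}) = c_i$ for each index $i$ occurring in the sum, hence
\[
  f(B) \;\ge\; f(\varnothing) + \sum_{i\,:\,a_i \in B} c_i \;=\; f(\varnothing) + \sum_{i=1}^{k} \mathbf{1}[a_i \in B]\cdot c_i .
\]
Now I would substitute $B = A_p$ and take expectations. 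Because the $c_i$ do not depend on $B$, linearity of expectation applies regardless of how the events $\{a_i \in A_p\}$ are coupled, so only the marginals $\Pr[a_i \in A_p] = p$ enter:
\[
  \bE[f(A_p)] \;\ge\; f(\varnothing) + p\sum_{i=1}^{k} c_i \;=\; f(\varnothing) + p\bigl(f(A) - f(\varnothing)\bigr) \;=\; (1-p)\,f(\varnothing) + p\,f(A) ,
\]
which is the claim.

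The only subtle point — and the reason the proof must be organized in exactly this way — is the ``not necessarily independently'' clause: one has to bring the per-realization bound into a form with \emph{constant} coefficients \emph{before} averaging, so that no independence is used and the marginals alone suffice. Everything else (the two telescoping identities and the single invocation of submodularity) is mechanical. A different route — showing that $p \mapsto \bE[f(A_p)]$ is concave in the independent-sampling case and interpolating between $p = 0$ and $p = 1$ — handles that special case just as cleanly but does not obviously cover dependent couplings, so I would prefer the pointwise-bound argument above.
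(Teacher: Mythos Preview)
Your argument is correct. Note, however, that the paper does not supply its own proof of this lemma: it is quoted verbatim as Lemma~2.2 of~\cite{feige2011maximizing} and used as a black box in the proof of Lemma~\ref{lem:good_local_max}. Your proof is in fact the standard one (and essentially the one in~\cite{feige2011maximizing}): fix a chain $\varnothing = A_{<1} \subset \dots \subset A_{<k+1} = A$, telescope $f(B)$ along the induced chain inside $B$, apply diminishing returns termwise to replace each marginal by the deterministic constant $c_i = f(a_i \mid A_{<i})$, and only then take expectations so that the ``not necessarily independently'' clause becomes irrelevant. Your closing remark about why the concavity-in-$p$ route is insufficient for dependent couplings is also on point.
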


Recall that $\hat{S}$ is a subset of $\cN$ that maximizes the expression $(\alpha(\beta) - \eps) \cdot g(\hat{S}) + (\beta - \eps) \cdot \ell(\hat{S})$.
\begin{lemma} \label{lem:good_local_max}
If the set $T$ obeys
\[
	h(T) \geq h(T + u) - 3\Delta/2 \quad \forall\; u \in \cN \setminus T
	\qquad
	\text{and}
	\qquad
	h(T) \geq h(T - u) - 3\Delta/2 \quad \forall\; u \in T
	\enspace,
\]
then
\[
	\bE[g(T(\beta)) + \ell(T(\beta))]
	\geq
	(\alpha(\beta) - 3\eps/4) \cdot g(\hat{S}) + (\beta - 3\eps/4) \cdot \ell(\hat{S})
	\enspace.
\]
\end{lemma}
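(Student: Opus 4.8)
The plan is to leverage the approximate local optimality of $T$ with respect to $h$, together with the way $h$ is related to the true objective. First I would recast the setup. For a set $S$, write $\tilde g(S)\triangleq\bE[g(S(\beta))]$. Since the multilinear extension $G$ of $g$ has non-positive mixed second derivatives, the map $\vx\mapsto G(\beta\vx)$ is multilinear with the same property and is non-negative, so it is the multilinear extension of a non-negative submodular function, which is exactly $\tilde g$. Consequently $h=\tilde g+\beta(1+\beta)\cdot\ell$ is submodular, while the quantity the lemma asks us to lower bound equals $\bE[g(T(\beta))+\ell(T(\beta))]=\tilde g(T)+\beta\cdot\ell(T)$ --- the same $\tilde g$, but with the \emph{smaller} linear weight $\beta$ in place of $\beta(1+\beta)$.

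Second, I would turn the hypothesis into marginal-value inequalities: for $u\notin T$, $h(u\mid T)=\beta\cdot\partial_uG(\beta\characteristic_T)+\beta(1+\beta)\ell(u)\le 3\Delta/2$, and for $u\in T$, $h(u\mid T-u)=\beta\cdot\partial_uG(\beta\characteristic_T)+\beta(1+\beta)\ell(u)\ge -3\Delta/2$, where $\partial_uG(\beta\characteristic_T)=\bE[g(u\mid T(\beta)-u)]$. Summing the first inequality over $u\in\hat S\setminus T$ and using submodularity of $h$ (equivalently, concavity of $G$ along the non-negative direction $\characteristic_{\hat S\setminus T}$) gives $\tilde g(T\cup\hat S)\le\tilde g(T)-\beta(1+\beta)\ell(\hat S\setminus T)+\tfrac{3n\Delta}{2}$; summing the second over $u\in T\setminus\hat S$ (and, to also control $\ell(T)$ itself, over all of $T$) gives $\tilde g(T\cap\hat S)\le\tilde g(T)+\beta(1+\beta)\ell(T\setminus\hat S)+\tfrac{3n\Delta}{2}$. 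These are the Feige--Mirrokni--Vondr\'ak local-search inequalities $h(T\cup\hat S),\,h(T\cap\hat S)\le h(T)+\tfrac{3n\Delta}{2}$, rewritten after cancelling the shared linear terms.

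Third, I would inject $g(\hat S)$ via Lemma~\ref{lem:2.2}. Since $\tilde g\ge 0$, the terms above are all non-negative; moreover $\tilde g(T\cup\hat S)=\bE[g((T\cup\hat S)(\beta))]$ can be lower bounded by writing $(T\cup\hat S)(\beta)$ as the union of independent $\beta$-samples of $\hat S$ and of $T\setminus\hat S$ and applying Lemma~\ref{lem:2.2} twice (first to the $\hat S$-block, then to the $T\setminus\hat S$-block), which yields $\tilde g(T\cup\hat S)\ge\beta(1-\beta)\,g(\hat S)$; similarly $\tilde g(\hat S)=\bE[g(\hat S(\beta))]\ge\beta\,g(\hat S)$. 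Feeding these into the inequalities from the previous paragraph lower-bounds $\tilde g(T)$ by a positive multiple of $g(\hat S)$ plus linear slack of weight $\beta(1+\beta)$. To convert that slack, and the $\beta(1+\beta)\ell(T)$ part of $h(T)$, into the weight-$\beta$ terms that appear both in the objective and in $\beta\ell(\hat S)$, I would take an appropriate combination of these inequalities, choosing the coefficients so that the coefficient of $g(\hat S)$ ends up exactly $\alpha(\beta)=\beta(1-\beta)/(1+\beta)$ --- here $\beta(1-\beta)$ records the loss from returning the sub-sample $T(\beta)$ rather than $T$, and the extra $1/(1+\beta)$ records the gap between the weight $\beta(1+\beta)$ of $\ell$ in $h$ and its weight $\beta$ in the objective. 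The linear terms left over (supported on $T$, and on $\hat S$) are then killed using Reduction~\ref{red:remove_elements}: its first part, $\alpha(\beta)g(u)+\beta\ell(u)\ge 0$ for every $u\in\cN$, bounds from below how negative $\ell$ can be on a single element, and combined with the optimality of $\hat S$ this is exactly enough to make the residual terms non-negative.

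Finally, I would account for the accumulated error. The combination in the previous step multiplies the per-step error $\tfrac{3n\Delta}{2}$ by an $O(1)$ factor (in fact it should come out to $\tfrac{3n\Delta}{1+\beta}$), and since $\Delta=\tfrac{\eps}{2n}\max\{g(\varnothing),\max_{u\in\cN}g(u)\}$ this is $\tfrac{3\eps}{2(1+\beta)}\max\{g(\varnothing),\max_{u}g(u)\}$; using the second part of Reduction~\ref{red:remove_elements} together with its first part (which forces $g(u)\le\tfrac{1+\beta}{2\beta}\,(g(u)+\ell(u))$) one gets $\max\{g(\varnothing),\max_u g(u)\}\le\tfrac{1+\beta}{2}\,[g(\hat S)+\ell(\hat S)]$, so the total error is at most $\tfrac{3\eps}{4}\,g(\hat S)+\tfrac{3\eps}{4}\,\ell(\hat S)$ --- precisely what the $-3\eps/4$ slacks in the two coefficients of the lemma absorb. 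The hard part will be the third step, because of non-monotonicity: an element of $\hat S\cap T$ survives into $T(\beta)$ only with probability $\beta$, and the local-optimality conditions at elements of $T$ are one-sided, so the $g(\hat S)$ contribution cannot be recovered at ``full strength''. Pinning down the precise combination of the add-conditions over $\hat S\setminus T$ against the remove-conditions over $T$ --- losing only the factor $\beta(1-\beta)/(1+\beta)$ while getting the linear terms to telescope cleanly into $\beta\ell(\hat S)$ --- is the delicate core of the argument.
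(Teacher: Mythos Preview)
Your outline is close to the paper's argument in spirit, but there is a real gap in step~3, precisely at the point you flag as ``the delicate core.'' When you sum the add--condition over $u\in\hat S\setminus T$ using submodularity of $\tilde g$, you obtain
\[
\tilde g(T\cup\hat S)\le\tilde g(T)-\beta(1+\beta)\,\ell(\hat S\setminus T)+\tfrac{3}{2}|\hat S\setminus T|\Delta,
\]
with $\ell$--coefficient $\beta(1+\beta)$, symmetric to your remove--inequality. If you now take a combination $\lambda\cdot(\text{add})+\mu\cdot(\text{remove})$ and try to make the linear terms telescope to $\beta\ell(\hat S)$ after adding $\beta\ell(T)$, the coefficient of $\ell(\hat S\setminus T)$ forces $\mu=\lambda\beta$ while the coefficient of $\ell(T\setminus\hat S)$ forces $\lambda=\mu\beta$; for $\beta\in(0,1)$ these are incompatible. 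So with your two inequalities the $\ell$--terms \emph{cannot} be made to cancel, and since $\ell$ has no sign in general {\RUSM}, your appeal to Reduction~\ref{red:remove_elements} (which only lower--bounds $\ell(u)$) cannot kill the residual.

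The paper avoids this by \emph{not} passing to $\tilde g$ on the add side. Instead it applies submodularity of $g$ sample--by--sample to the set $T(\beta)\cup(\hat S\setminus T)$, where $\hat S\setminus T$ is added \emph{deterministically}. For $u\notin T$ one has $\bE[g(u\mid T(\beta))]=\beta^{-1}h(u\mid T)-(1+\beta)\ell(u)$, so
\[
\bE[g(T(\beta)\cup(\hat S\setminus T))]+(1+\beta)\,\ell(\hat S\setminus T)\le\bE[g(T(\beta))]+\tfrac{3}{2}\beta^{-1}|\hat S\setminus T|\Delta,
\]
with $\ell$--coefficient $(1+\beta)$ rather than $\beta(1+\beta)$. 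Combining this with the remove--inequality via weights $\beta$ and $1$ makes both $\ell$--coefficients equal to $\beta(1+\beta)$; dividing by $1+\beta$ and adding $\beta\ell(T)$ then gives exactly $\beta\ell(\hat S)$ with no residual linear term. The Lemma~\ref{lem:2.2} step is correspondingly applied to $T(\beta)\cup(\hat S\setminus T)$ and to $T(\beta)\cap\hat S$ (yielding $(1-\beta)g(\hat S\setminus T)$ and $\beta g(T\cap\hat S)$), and only then is submodularity of $g$ used to combine $g(\hat S\setminus T)+g(T\cap\hat S)\ge g(\hat S)$. Your error accounting in step~4 is fine and matches the paper; the missing idea is the asymmetric treatment of the add side.
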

\begin{proof}
By the first part of Lemma~\ref{lem:Properties}, for every element $u \in \cN \setminus T$,
 \[
   h(T) \geq h(T + u) - 3\Delta/2
	\enspace,
 \] 
or equivalently $h(u \mid T) \leq 3\Delta/2$. Therefore, by the submodularity of $g$,
\begin{align} \label{eq:union}
	\bE[g(T(\beta) \cup (\hat{S} \setminus T))] &{}+ (1 + \beta) \cdot \ell(\hat{S} \setminus T)\\\nonumber
	\leq{} &
	\bE[g(T(\beta))] + \sum_{u \in \hat{S} \setminus T} \mspace{-9mu} \{\bE[g(u \mid T(\beta) - u)] + (1 + \beta) \cdot \ell(u)\}\\\nonumber
	={} &
	\bE[g(T(\beta))] + \beta^{-1} \cdot \sum_{u \in \hat{S} \setminus T} \mspace{-9mu} h(u \mid T)
	\leq
	\bE[g(T(\beta))] + 3\beta^{-1}|\hat{S} \setminus T| \Delta/2
	\enspace.
\end{align}
Similarly, since the second part of Lemma~\ref{lem:Properties} implies that for every $u \in T$ we have $h(u \mid T - u) \geq -3\Delta/2$, the submodularity of $g$ gives us
\begin{align} \label{eq:intersection}
	\bE[g(T(\beta) \cap \hat{S})] &{}- \beta(1 + \beta) \cdot \ell(T \setminus \hat{S})\\\nonumber
	\leq{} &
	\bE[g(T(\beta))] - \sum_{u \in T \setminus \hat{S}} \mspace{-9mu} \{\beta \cdot \bE[g(u \mid T(\beta) - u)] - \beta(1 + \beta) \cdot \ell(u)\}\\\nonumber
	={} &
	\bE[g(T(\beta))] - \sum_{u \in \hat{S} \setminus T} \mspace{-9mu} h(u \mid T - u)
	\leq
	\bE[g(T(\beta))] + 3|T \setminus \hat{S}| \Delta/2
	\enspace.
\end{align}
Adding $\beta$ times Inequality~\eqref{eq:union} to Inequality~\eqref{eq:intersection} now yields
\begin{align} \label{eq:combined}
	&
	\beta \cdot \bE[g(T(\beta) \cup (\hat{S} \setminus T))] + \bE[g(T(\beta) \cap \hat{S})] + \beta(1 + \beta) \cdot [\ell(\hat{S} \setminus T) - \ell(T \setminus \hat{S})]\\\nonumber
	\leq{} &
	(1 + \beta) \cdot \bE[g(T(\beta))] + 3[|\hat{S} \setminus T| + |T \setminus \hat{S}|] \Delta/2
	\leq
	(1 + \beta) \cdot \bE[g(T(\beta))] + 3n\Delta/2
	\enspace.
\end{align}

We can now use Lemma~\ref{lem:2.2} to lower bound the first two terms on the leftmost side of the last inequality as follows.
\begin{align*}
	\beta \cdot \bE[g(T(\beta) \cup (\hat{S} \setminus T))] &{}+ \bE[g(T(\beta) \cap \hat{S})]\\
	\geq{} &
	\beta(1 - \beta) \cdot g(\hat{S} \setminus T) + \beta^2 \cdot g(\hat{S} \cup T) + \beta \cdot g(T \cap \hat{S}) + (1 - \beta) \cdot g(\varnothing)\\
	\geq{} &
	\beta(1 - \beta) \cdot [g(\hat{S} \setminus T) + g(T \cap \hat{S})]
	\geq
	\beta(1 - \beta) \cdot g(\hat{S})
	\enspace,
\end{align*}
where the second inequality follows from the non-negativity of $g$, and the last inequality holds by $g$'s submodularity (and non-negativity). Plugging this inequality into Inequality~\eqref{eq:combined} now gives
\[
	\beta(1 - \beta) \cdot g(\hat{S}) + \beta(1 + \beta) \cdot [\ell(\hat{S} \setminus T) - \ell(T \setminus \hat{S})]
	\leq
	(1 + \beta) \cdot \bE[g(T(\beta))] + 3n\Delta/2
	\enspace,
\]
and rearranging this inequality yields
\begin{align*}
	\bE[g(T(\beta)&) + \ell(T(\beta))]
	=
	\bE[g(T(\beta))] + \beta \cdot \ell(T)\\
	\geq{} &
	\frac{\beta(1 - \beta) \cdot g(\hat{S}) - 3n\Delta/2}{1 + \beta} + \beta \cdot [\ell(\hat{S} \setminus T) - \ell(T \setminus \hat{S})] + \beta \cdot \ell(T)\\
	\geq{} &
	\alpha(\beta) \cdot g(\hat{S}) + \beta \cdot \ell(\hat{S}) - 3n\Delta/2
	\enspace.
\end{align*}

To complete the proof of the lemma, it remains to show that $3n\Delta/2 \leq (3\eps / 4) \cdot [g(\hat{S}) + \ell(\hat{S})]$. Towards this goal, observe that
\[
	\max_{u \in \cN} g(u)
	\leq
	\max_{u \in \cN} \left\{g(u) + \frac{1 + \beta}{2\beta^2} \cdot [\alpha(\beta) \cdot g(u) + \beta \cdot \ell(u)]\right\}
	=
	\frac{1 + \beta}{2\beta} \cdot \max_{u \in \cN} [g(u) + \ell(u)]
	\enspace,
\]
where the inequality follows from the first part of Reduction~\ref{red:remove_elements}. Using this inequality and the non-negativity of $g$, we can get
\begin{align*}
	\frac{3n\Delta}{2}
	={} &
	\frac{3\eps}{4} \cdot \max\{g(\varnothing), \max_{u \in \cN} g(u)\}
	\leq
	\frac{3\eps(1 + \beta)}{8\beta} \cdot \max\{g(\varnothing), \max_{u \in \cN} [g(u) + \ell(u)]\}\\
	\leq{} &
	\frac{3\eps(1 + \beta)}{8} \cdot [g(\hat{S}) + \ell(\hat{S})]
	\leq
	\frac{3\eps}{4} \cdot [g(\hat{S}) + \ell(\hat{S})]
	\enspace,
\end{align*}
where the penultimate inequality follows from the second part of Reduction~\ref{red:remove_elements}, and the last inequality uses the observation that the second part of Reduction~\ref{red:remove_elements} and the non-negativity of $g$ imply together that $g(\hat{S}) + \ell(\hat{S})$ is non-negative.
\end{proof}

We are now ready to prove Theorem~\ref{thm:general}.
\begin{proof}[Proof of Theorem~\ref{thm:general}]
Recall that $\hat{T}$ is a sample of $T(\beta)$ for the value of the set $T$ when Algorithm~\ref{alg:LocalSearch} terminates. Lemmata~\ref{lem:Properties} and~\ref{lem:good_local_max} prove together that there exists a high probability event $\cE$ such that
\[
	\bE[g(\hat{T}) + \ell(\hat{T}) \mid \cE]
	\geq
	(\alpha(\beta) - 3\eps/4) \cdot g(\hat{S}) + (\beta - 3\eps/4) \cdot \ell(\hat{S})
	\enspace.
\]
The last two lines of Algorithm~\ref{alg:LocalSearch} guarantee that this algorithm always outputs a set whose value is at least $g(\hat{T}) + h(\hat{T})$ because $g(\varnothing) + \ell(\varnothing) = g(\varnothing) \geq 0$. Therefore, if we denote by $\bar{T}$ the set outputted by Algorithm~\ref{alg:LocalSearch}, then we also have
\[
	\bE[g(\bar{T}) + \ell(\bar{T}) \mid \cE]
	\geq
	(\alpha(\beta) - 3\eps/4) \cdot g(\hat{S}) + (\beta - 3\eps/4) \cdot \ell(\hat{S})
	\enspace.
\]

The last two lines of Algorithm~\ref{alg:LocalSearch} also guarantee that the output set $\bar{T}$ of Algorithm~\ref{alg:LocalSearch} always has a non-negative value, and therefore, $\bE[g(\bar{T}) + h(\bar{T}) \mid \bar{\cE}] \geq 0$. Combining this inequality with the previous one using the law of total expectation yields
\begin{align*}
	\bE[g(\bar{T}) + h(\bar{T})]
	\geq{} &
	\Pr[\cE] \cdot \bE[g(\bar{T}) + h(\bar{T}) \mid \cE]\\
	\geq{} &
	(1 - o(1)) \cdot [(\alpha(\beta) - 3\eps/4) \cdot g(\hat{S}) + (\beta - 3\eps/4) \cdot \ell(\hat{S})]\\
	\geq{} &
	(\alpha(\beta) - \eps) \cdot g(\hat{S}) + (\beta - \eps) \cdot \ell(\hat{S})
	=
	\max_{S \subseteq \cN} [(\alpha(\beta) - \eps) \cdot g(S) + (\beta - \eps) \cdot \ell(S)]
	\enspace,
\end{align*}
where the second inequality holds since $g(\bar{T}) + h(\bar{T})$ is always non-negative, the equality follows from the definition of $\hat{S}$, and $o(1)$ represents a term that diminishes when $n$ goes to infinity. To justify the third inequality, note that
\[
	o(1) \cdot [(\alpha(\beta) - 3\eps/4) \cdot g(\hat{S}) + (\beta - 3\eps/4) \cdot \ell(\hat{S})]
	\leq
	o(1) \cdot (\beta - 3\eps/4) \cdot [g(\hat{S}) + \ell(\hat{S})]
	\leq
	(\eps / 4) \cdot [g(\hat{S}) + \ell(\hat{S})]
	\enspace,
\]
where the last inequality here holds for large enough values of $n$ because the second part of Reduction~\ref{red:remove_elements} and the non-negativity of $g$ imply together that $g(\hat{S}) + \ell(\hat{S})$ is non-negative.
\end{proof}

\section{Inapproximability for Negative Linear Functions} \label{sec:negative}

In this section we prove Theorem~\ref{thm:negative_inapproximability}, which we repeat here for convenience.

\thmNegativeInapproximability*

The proof of Theorem~\ref{thm:negative_inapproximability} is based on Theorem~\ref{thm:symmetry_gap}, and therefore, we start this proof by describing an instance $\cI$ of {\RUSM}. This instance is very similar to the instance used by Oveis Gharan and Vondr\'{a}k~\cite{ovis_gharan2011submodular} to prove their hardness result for maximizing a non-negative (not necessarily monotone) submodular function subject to a matroid constraint. Specifically, the instance $\cI$ has $3$ parameters: an integer $n \geq 1$, a real value $t \geq 1$ and a real value $r \in (0, 1/2]$. The ground set of $\cI$ is $\cN = \{a, b\} \cup \{a_i, b_i \mid i \in [n]\}$, and its objective functions are $\ell(S) = -r \cdot |S \cap \{a_i, b_i \mid i \in [n]\}|$ and
\begin{align*}
	g(S)
	=
	t \cdot (|S \cap \{a, b\}| \bmod 2) &{}+ \characteristic[a \not \in S] \cdot \characteristic[S \cap \{a_i \mid i \in [n]\} \neq \varnothing] \\&+ \characteristic[b \not \in S] \cdot \characteristic[S \cap \{b_i \mid i \in [n]\} \neq \varnothing] \enspace.
\end{align*}
One can verify that $g$ is indeed a non-negative submodular function. Additionally, the functions $g$ and $\ell$ are both symmetric in the sense that the following types of swaps do not affect the values of these functions.
\begin{itemize}
	\item Any swap of the identities of the elements of $\{a_i \mid i \in [n]\}$.
	\item Swapping the identifies of $a$ with $b$ plus swapping the idenities of $a_i$ and $b_i$ for every $i \in [n]$.
\end{itemize}
Let $\cG$ be the group of permutations obtaining by combining swaps of these two kinds in any way.

In the next lemma, $G$ and $L$ are the multilinear extensions of $g$ and $\ell$, respectively, and $\bar{\vx} = \bE_{\sigma \in \cG}[\sigma(\vx)]$.
\begin{lemma} \label{lem:symmetry_gap_negative}
Let $r$ and $t$ be the values for which the maximum is obtained in the definition of $\alpha(\beta)$. Then, for any constant $\eps > 0$ and a large enough $n$, $\max_{S \subseteq \cN} [(\alpha(\beta) + \eps) \cdot g(S) + \beta \cdot \ell(S)]$ is strictly positive and
\[
	\max_{\vx \in [0, 1]^\cN} [G(\bar{\vx}) + L(\bar{\vx})] \leq \max_{S \subseteq \cN} [\alpha(\beta) \cdot g(S) + \beta \cdot \ell(S)]
	\enspace.
\]
\end{lemma}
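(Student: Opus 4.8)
The plan is to evaluate both sides of the symmetry-gap inequality explicitly for this instance. First I would compute $\bar{\vx}$: by the structure of the group $\cG$, a symmetrized vector is determined by just a few parameters. Averaging over all permutations of $\{a_i\}$ (and of $\{b_i\}$) forces all $a_i$-coordinates to a common value and all $b_i$-coordinates to a common value; the symmetry swapping $a\leftrightarrow b$ together with $a_i\leftrightarrow b_i$ then forces $\bar x_a=\bar x_b$ and the common $a_i$-value to equal the common $b_i$-value. So $\bar{\vx}$ is described by two numbers: $p$, the shared value on $\{a,b\}$, and $q$, the shared value on $\{a_i,b_i\mid i\in[n]\}$. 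Then I would write $G(\bar{\vx})+L(\bar{\vx})$ in terms of $p,q,n$. The term $t\cdot(|S\cap\{a,b\}|\bmod 2)$ contributes $t$ times the probability that exactly one of $a,b$ is sampled, i.e. $2tp(1-p)$. The term $\characteristic[a\notin S]\cdot\characteristic[S\cap\{a_i\}\neq\varnothing]$ contributes $(1-p)(1-(1-q)^n)$, and symmetrically for the $b$-side, so $2(1-p)(1-(1-q)^n)$ total. Finally $L(\bar{\vx})=-r\cdot 2nq$. So the LHS is $\max_{p,q\in[0,1]}\bigl[2tp(1-p)+2(1-p)(1-(1-q)^n)-2rnq\bigr]$.

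Next I would take $n\to\infty$: as in Lemma~\ref{lem:monotone_symmetry_gap}, substituting $q = s/n$ gives $(1-q)^n\to e^{-s}$ and $-2rnq\to -2rs$, so for large $n$ the LHS is at most $\max_{p\in[0,1],\,s\geq 0}\bigl[2tp(1-p)+2(1-p)(1-e^{-s})-2rs\bigr]+\eps/2$ or so. Optimizing over $s$ for fixed $p$: the inner objective in $s$ is $2(1-p)(1-e^{-s})-2rs$, maximized at $e^{-s}=r/(1-p)$ (valid when $r\leq 1-p$), giving value $2(1-p)-2r+2r\ln(r/(1-p))$. Adding $2tp(1-p)$ and then optimizing over $p$ is a one-variable calculus problem whose optimal $p$ I would solve for; matching it against the claimed closed form for $\alpha(\beta)$ — which has the shape $\frac{t+1+\sqrt{(t+1)^2-8tr}}{4t}-\frac{r}{t+1}[\,\cdots\,]$ — should pin down $p$ to something like $p^\star = \frac{t+1-\sqrt{(t+1)^2-8tr}}{4t}$ (so that $1-p^\star$ appears in the logarithm as $\frac{t+1-\sqrt{(t+1)^2-8tr}}{2}$ after the factor-of-2 bookkeeping). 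This is the step where the algebra actually has to be ground out, and it is the main obstacle: one must verify that the unconstrained max over $(t,r,p,s)$ of the symmetrized objective equals $2\alpha(\beta)$ (the factor $2$ because the whole expression above is twice a "per side" quantity) when $t,r$ are the minimizers in the definition of $\alpha(\beta)$, with the $-\beta$ inside the bracket of $\alpha(\beta)$ accounting for the discrepancy between $L$ and $\beta\ell$. Concretely I expect $\max_{\vx}[G(\bar{\vx})+L(\bar{\vx})]$ to come out as roughly $2\bigl(\frac{t+1+\sqrt{(t+1)^2-8tr}}{4t} + \frac{r}{t+1}\ln(\cdots) + \text{lower order}\bigr)$, and the $-\beta$ shift is absorbed because on the RHS a good set $S$ picks up $\beta\ell(S)=-\beta r\cdot|S\cap\{a_i,b_i\}|$.

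Then I would handle the RHS: $\max_{S}[\alpha(\beta)g(S)+\beta\ell(S)]$. Natural candidate sets: $S=\{a\}$ gives $g=t$, $\ell=0$, value $\alpha(\beta)t$; but more relevant, a set of the form $\{a\}\cup\{b_1,\dots,b_k\}$ gives $g = t + 1$ (the parity term $t$, plus $\characteristic[b\notin S]\characteristic[\{b_i\}\cap S\neq\varnothing]=1$), and $\ell = -rk$, value $\alpha(\beta)(t+1)-\beta r k$; taking $k=1$ gives $\alpha(\beta)(t+1)-\beta r$. I would argue (i) this, or the even simpler $\{a\}$, witnesses strict positivity of $\max_S[(\alpha(\beta)+\eps)g(S)+\beta\ell(S)]$ — here I'd show $\alpha(\beta)t>0$, or $\alpha(\beta)(t+1)-\beta r\ge 0$, using $r\le 1/2$, $t\ge 1$ and the explicit bound on $\alpha(\beta)$ in the spirit of the final display of Lemma~\ref{lem:monotone_symmetry_gap} — and (ii) that the best such $S$ matches (asymptotically, up to the $\eps$ slack) the symmetrized continuous maximum. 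The cleanest route to (ii): note $\max_S[\alpha(\beta)g(S)+\beta\ell(S)] = \max_{\vx\in\{0,1\}^\cN}[\alpha(\beta)G(\vx)+\beta L(\vx)]$ and relate this to the symmetrized maximum by the same parametrization, reducing everything to a single inequality between two functions of $(p,s)$; since $\alpha(\beta)$ is \emph{defined} as the minimizing value of exactly the expression that arises, the inequality holds with the chosen $(t,r)$, and the extra room for ``$\leq$'' (rather than equality) plus the $+\eps$ on the RHS of the positivity claim comes from the $o(1)$ errors from the $n\to\infty$ limit and from rounding $q=s/n$. I would close by noting all these error terms are absorbed once $n$ is large enough, which is exactly the statement of the lemma. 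The one genuinely delicate point to get right is the constraint $r\le 1-p$ needed for the interior optimum in $s$ to be feasible; I would check that at the optimal $p^\star$ this holds because $r\le 1/2$ and $p^\star\le 1/2$, and if it ever failed the optimum would be at $s=0$ which only makes the LHS smaller, so the bound is safe regardless.
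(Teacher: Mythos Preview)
Your plan is correct and follows the paper's proof closely: parametrize $\bar{\vx}$ by two scalars, compute the symmetrized objective as $2tp(1-p)+2(1-p)(1-(1-q)^n)-2rnq$, pass to the $n\to\infty$ limit, optimize, and compare against the value $(\alpha(\beta)+\eps)(t+1)-\beta r$ obtained at $S=\{a,b_1\}$. The only structural difference is the order of the two-variable optimization: the paper maximizes over $z$ (your $p$) first, obtaining $z=(t+e^{-wn}-1)/(2t)$, and then over $w$; you maximize over $s$ first, obtaining $e^{-s}=r/(1-p)$, and then over $p$. Both orders lead to the same closed form.

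Two small corrections. First, your guess for $p^\star$ is off: what sits inside the logarithm in the definition of $\alpha(\beta)$ is $e^{-s^\star}=r/(1-p^\star)$, not $1-p^\star$ itself. Solving your quadratic $2t(1-p)^2-(t+1)(1-p)+r=0$ gives $1-p^\star=\frac{(t+1)+\sqrt{(t+1)^2-8tr}}{4t}$, and then rationalizing $r/(1-p^\star)$ yields $e^{-s^\star}=\frac{(t+1)-\sqrt{(t+1)^2-8tr}}{2}$, matching the paper. Second, the positivity claim $\alpha(\beta)(t+1)-\beta r\geq 0$ is not as immediate as in the monotone lemma you cite: the paper proves it by differentiating the defining expression for $\alpha(\beta)$ with respect to $r$, showing the derivative vanishes exactly when $r=1/2$, at which point the expression equals $t/2\geq 0$. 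Your suggestion to use $S=\{a\}$ and $\alpha(\beta)t>0$ instead would require $\alpha(\beta)>0$, which is essentially the same nontrivial check.
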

\begin{proof}
Observe that the definition of $\cG$ guarantees that the vector $\bar{\vx}$ obeys $\bar{\vx}_a = \bar{\vx}_b$ and $\bar{\vx}_{a_i} = \bar{\vx}_{b_j}$ for every $i, j \in [n]$. Therefore, if we define for two values $z, w \in [0, 1]$ the vector $\vy(z, w)$ as follows
\[
	y_u(z, w)
	=
	\begin{cases}
		z & \text{if $u \in \{a, b\}$} \enspace,\\
		w & \text{if $u \in \{a_i, b_i \mid i \in [n]\}$} \enspace,
	\end{cases}
\]
then
\begin{align*}
	\max_{\vx \in [0, 1]^\cN} [G(\bar{\vx}) + L(\bar{\vx})]
	={} &
	\max_{z, w \in [0, 1]} [G(\vy(z, w)) + L(\vy(z, w))]\\
	={} &
	\max_{z, w \in [0, 1]} \left\{2(1 - z)[t z + 1 - (1 - w)^n] - 2rwn \right\}
	\enspace.
\end{align*}
Observe now that
\[
	(1 - w)^n
	\geq
	e^{-wn}(1 - w^2n)
	=
	e^{-wn} - \frac{(wn)^2 \cdot e^{-wn}}{n}
	=
	e^{-wn} - O(n^{-1})
	\enspace,
\]
where the last equality holds because the maximum value of the function $x^2 e^{-x}$ for $x \geq 0$ is the constant $4e^{-2}$. Plugging this observation into the previous equation now yields
\[
	\max_{\vx \in [0, 1]^\cN} [G(\bar{\vx}) + L(\bar{\vx})]
	\leq
	\max_{z, w \in [0, 1]} \left\{2(1 - z)[t z + 1 - e^{-wn}] - 2rwn \right\} + O(n^{-1})
	\enspace.
\]

The derivative of $(1 - z)[t z + 1 - e^{-wn}]$ with respect to $z$ is $t + e^{-wn} - 1 - 2t z$, which is a decreasing function of $z$ that takes the value $0$ only when $z = (t + e^{-wn} - 1)/(2t)$---note that this is a value in $[0, 1/2] \subseteq [0, 1]$. Therefore, we get from the previous inequality,
\begin{equation} \label{eq:w_only}
	\max_{\vx \in [0, 1]^\cN} [G(\bar{\vx}) + L(\bar{\vx})]
	\leq
	\max_{w \in [0, 1]} \left\{(t + 1 - e^{-wn})^2/(2t) - 2rwn \right\} + O(n^{-1})
	\enspace.
\end{equation}
The derivative of the argument of the $\max$ operation on the right hand side with respect to $wn$ (i.e., when $wn$ is treated as a single variable) is $e^{-wn}(t + 1 - e^{-wn})/t  - 2r$, which is a quadratic expression in $e^{-wn}$ whose roots are
\[
	e^{-wn}
	=
	\frac{-(t + 1) \pm \sqrt{(t + 1)^2 - 8tr}}{-2}
	=
	\frac{(t + 1) \mp \sqrt{(t + 1)^2 - 8tr}}{2}
	\enspace.
\]
One can verify that, since $r \in (0, 1/2]$, the above roots are real values, and moreover, the smaller among them falls within the range $(0, 1]$, while the larger root is at least $1$. This implies that the operand of the $\max$ operation in the right hand side of Inequality~\eqref{eq:w_only} is maximized for the $w$ value obeying
\[
	e^{-wn}
	=
	\frac{(t + 1) - \sqrt{(t + 1)^2 - 8t r}}{2}
	\enspace,
\]
and moreover, the $w$ value obeying this inequality belongs to $[0, 1]$ when $n$ is large enough. Thus,
\begin{align*}
	\max_{\vx \in [0, 1]^\cN} [G(&\bar{\vx}) + L(\bar{\vx})]
	\leq
	\frac{(t + 1 + \sqrt{(t + 1)^2 - 8t r})^2}{8t} + 2r \cdot \ln\left(\frac{t + 1- \sqrt{(t + 1)^2 - 8tr}}{2}\right) + O(n^{-1})\\
	={} &
	\frac{(t + 1)(t + 1 + \sqrt{(t + 1)^2 - 8t r})}{4t} - r \cdot \left[1 - 2\ln\left(\frac{t + 1- \sqrt{(t + 1)^2 - 8t r}}{2}\right)\right] + O(n^{-1})
	\enspace.
\end{align*}

Let us now consider the right hand side of the inequality of the lemma. Since we can choose $S = \{a, b_1\}$,
\[
	\max_{S \subseteq \cN} [(\alpha(\beta) + \eps) \cdot g(S) + \beta \cdot \ell(S)]
	\geq
	(\alpha(\beta) + \eps)(t + 1) - \beta r
	\enspace.
\]
Therefore, the inequality of the lemma holds whenever
\begin{align*}
	(\alpha(\beta) + \eps)(t + 1) - \beta r
	\geq{} &
	\frac{(t + 1)(t + 1 + \sqrt{(t + 1)^2 - 8t r})}{4t} \\&- r \cdot \left[1 - 2\ln\left(\frac{t + 1- \sqrt{(t + 1)^2 - 8t r}}{2}\right)\right] + O(n^{-1})
	\enspace,
\end{align*}
or equivalently,
\[
	\alpha(\beta) + \eps
	\geq
	\frac{t + 1 + \sqrt{(t + 1)^2 - 8t r}}{4t} - \frac{r}{t + 1} \cdot \left[1 - \beta - 2\ln\left(\frac{t + 1- \sqrt{(t + 1)^2 - 8t r}}{2}\right)\right] + O(n^{-1})
	\enspace,
\]
which is true by the definitions of $\alpha(\beta)$, $r$ and $t$ when $n$ is large enough.

To complete the proof of the lemma, it remains to argue that $\max_{S \subseteq \cN} [(\alpha(\beta) + \eps) \cdot g(S) + \beta \cdot \ell(S)] \geq (\alpha(\beta) + \eps)(t + 1) - \beta r$ is strictly positive. Since $\eps(t + 1)$ is strictly positive, it suffices to show that $\alpha(\beta) \cdot (t + 1) - \beta r$ is non-negative. By the definitions of $\alpha(\beta)$, $r$ and $t$,
\begin{equation} \label{eq:positive_proof}
	\alpha(\beta) \cdot (t + 1) - \beta r
	=
	\frac{t + 1}{4t} \cdot [t + 1 + \sqrt{(t + 1)^2 - 8t r}] - r \cdot \left[1 - 2\ln\left(\frac{t + 1- \sqrt{(t + 1)^2 - 8t r}}{2}\right)\right]
	\enspace.
\end{equation}
The derivative of the right hand side of this equality with respect to $r$ is
\begin{align*}
	&
	\frac{t + 1}{4t} \cdot \frac{-8t}{2\sqrt{(t + 1)^2 - 8t r}} - 1 + 2\ln\left(\frac{t + 1 - \sqrt{(t + 1)^2 - 8t r}}{2}\right) + 2r \cdot \frac{\frac{8t}{4\sqrt{(t + 1)^2 - 8t r}}}{\frac{t + 1- \sqrt{(t + 1)^2 - 8t r}}{2}}\\
	={} &
	-\frac{t + 1}{\sqrt{(t + 1)^2 - 8t r}} - 1 + 2\ln\left(\frac{t + 1 - \sqrt{(t + 1)^2 - 8t r}}{2}\right) \\&\qquad+ \frac{8tr}{\sqrt{(t + 1)^2 - 8t r} \cdot [t + 1- \sqrt{(t + 1)^2 - 8t r}]}
	=
	2\ln\left(\frac{t + 1 - \sqrt{(t + 1)^2 - 8t r}}{2}\right)
	\enspace.
\end{align*}
Since this derivative is an increasing function of $r$, the right hand side of Equation~\eqref{eq:positive_proof} is minimized when the derivative is $0$, i.e., when $t + 1 - \sqrt{(t + 1)^2 - 8tr} = 2$, or equivalently $r = [(t + 1)^2 - (t - 1)^2] / 8t = 1/2$. Thus, the right hand side of Equation~\eqref{eq:positive_proof} is always at least
\[
	\frac{t + 1}{4t} \cdot [t + 1 + (t - 1)] - \frac{1}{2}
	=
	\frac{t + 1}{2} - \frac{1}{2}
	=
	\frac{t}{2}
	\geq
	0
	\enspace.
	\qedhere
\]
\end{proof}

Theorem~\ref{thm:negative_inapproximability} now follows by combining Theorem~\ref{thm:symmetry_gap}, Observation~\ref{obs:negative_simplification} and Lemma~\ref{lem:symmetry_gap_negative}.
\section{Results for Positive Linear Functions} \label{sec:positive}

In this section we study {\RUSM} in the special case in which the linear function $\ell$ is non-negative. As explained in Section~\ref{sec:introduction}, following related known results, it is natural to expect a $(1/2, 1)$-approximation for this case since $1/2$ is the best possible approximation ratio for unconstrained maximization of a non-negative submodular function. However, we show in Section~\ref{ssc:positive_inapproximability} that this cannot be done (Theorem~\ref{thm:positive_inapproximability}).

Let us now define $f \triangleq g + \ell$. As explained in Section~\ref{sec:introduction}, since $f$ is a non-negative submodular function on its own right, one can optimize it using any algorithm for \texttt{Unconstrained Submodular Maximization} (\USM). The first algorithm to obtain a tight approximation ratio of $1/2$ for {\USM} was an algorithm called ``Double Greedy'' due to Buchbinder et al.~\cite{buchbinder2015tight}. Buchbinder et al.~\cite{buchbinder2015tight} described two variants of their algorithm, a deterministic variant that we term {\DGDet} and guarantees $1/3$-approximation, and a randomized variant that we term {\DGRand} and guarantees $1/2$-approximation. It should also be noted that the original analysis of~\cite{buchbinder2015tight} proves slightly stronger results than the above stated approximation ratios. Specifically, their analysis shows that {\DGDet} always outputs a set of value at least
\[
	\frac{f(S) + f(\varnothing) + f(\cN)}{3}
	\geq
	\tfrac{1}{3}g(S) + \tfrac{2}{3}\ell(S)
\]
for any set $S$, where the inequality holds since the function $\ell$ is non-negative; which implies that {\DGDet} is a $(1/3, 2/3)$-approximation algorithm. Similarly, the analysis of Buchbinder et al.~\cite{buchbinder2015tight} shows that {\DGRand} outputs a set whose expected value is at least
\[
	\frac{2f(S) + f(\varnothing) + f(\cN)}{4}
	\geq
	\tfrac{1}{2}g(S) + \tfrac{3}{4}\ell(S)
	\enspace,
\]
which implies that {\DGRand} is a $(1/2, 3/4)$-approximation algorithm.

Theorems~\ref{thm:deterministic} and~\ref{thm:randomized} show that {\DGDet} and {\DGRand}, respectively, guarantee $(\alpha, \beta)$-approximation for many additional pairs of $\alpha$ and $\beta$. The proofs of these theorems can be found in Sections~\ref{ssc:deterministic_DG} and~\ref{ssc:randomized_DG}, respectively.

\subsection{Impossibility of the Naturally Expected Approximation Guarantee} \label{ssc:positive_inapproximability}

In this section we prove the following theorem. We note that the technique used in the proof of this theorem can also prove a somewhat stronger result. However, since the improvement represented by this stronger result is not very significant, we chose to state in the theorem the cleaner and more conceptually important result rather than the strongest result achievable.

\thmPositiveInapproximability*

Before getting to the proof of Theorem~\ref{thm:positive_inapproximability}, we need to prove the following two technical lemmata.
\begin{lemma} \label{lem:derivative_decreasing}
For every constant $c \geq 1/2$, the function $x^c \cdot \left(\frac{4}{(5 - x)^2} - 1\right)$ is a non-increasing function of $x$ for $x \in [0, 1]$.
\end{lemma}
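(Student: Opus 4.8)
The statement to prove is that for every constant $c \geq 1/2$, the function $\phi(x) = x^c \cdot \left(\frac{4}{(5 - x)^2} - 1\right)$ is non-increasing on $[0, 1]$. The plan is to show $\phi'(x) \leq 0$ on $(0,1]$ by a direct differentiation, then cleaning up the resulting expression until its sign is manifest.

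First I would compute the derivative. Writing $\psi(x) = \frac{4}{(5-x)^2} - 1$, we have $\psi'(x) = \frac{8}{(5-x)^3}$, so the product rule gives
\[
	\phi'(x) = c x^{c-1}\left(\frac{4}{(5-x)^2} - 1\right) + x^c \cdot \frac{8}{(5-x)^3}
	= x^{c-1}\left[c\left(\frac{4}{(5-x)^2} - 1\right) + \frac{8x}{(5-x)^3}\right]
	\enspace.
\]
Since $x^{c-1} \geq 0$ on $(0,1]$, it suffices to show the bracketed factor is non-positive there; call it $\Theta(x)$. Multiplying through by $(5-x)^3 > 0$, we need
\[
	c\left[4(5-x) - (5-x)^3\right] + 8x \leq 0
	\qquad \text{for all } x \in [0,1]
	\enspace.
\]
Let $q(x) = 4(5-x) - (5-x)^3$. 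On $[0,1]$ we have $5 - x \in [4,5]$, so $(5-x)^3 \geq 64$ while $4(5-x) \leq 20$, hence $q(x) \leq -44 < 0$; in particular $q(x) \leq q(1) = 16 - 64 = -48$ is the largest value (one checks $q$ is decreasing in $x$ on this range since $q'(x) = -4 + 3(5-x)^2 > 0$ means $q$ increases as $5-x$ grows, i.e.\ as $x$ decreases, so the max is at $x=0$, giving $q(0) = 20 - 125 = -105$; either way $q \leq -48$ on $[0,1]$). Therefore $c \cdot q(x) \leq \frac{1}{2} q(x) \leq -24$, since $q(x) < 0$ and $c \geq 1/2$. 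Combined with $8x \leq 8$, we get $c \cdot q(x) + 8x \leq -24 + 8 = -16 < 0$, as required. Finally, for $x = 0$ the function value is $\phi(0) = 0$ (using $c \geq 1/2 > 0$), and $\phi(x) \leq 0$ would need checking—but actually $\psi(x) = \frac{4}{(5-x)^2} - 1 \geq \frac{4}{25} - 1 < 0$ is negative on $[0,1]$, so $\phi(x) \leq 0 = \phi(0)$ anyway, and monotonicity on the half-open interval plus continuity at $0$ gives the full claim.

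The only mild subtlety is handling the endpoint $x = 0$, where $x^{c-1}$ may blow up (when $c < 1$); this is why I isolate the sign analysis to $(0,1]$ and then invoke continuity of $\phi$ at $0$ together with $\phi(0) = 0$. I do not expect any genuine obstacle here: once the derivative is written in the factored form above, every remaining inequality is a crude bound using $5 - x \in [4,5]$ and $c \geq 1/2$, with plenty of slack. The main thing to be careful about is the direction of the inequality $c \cdot q(x) \leq \frac{1}{2} q(x)$, which holds precisely because $q(x)$ is \emph{negative} on the relevant range.
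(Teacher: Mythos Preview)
Your proof is correct and, in fact, cleaner than the paper's. Both arguments begin identically: differentiate, factor out $x^{c-1}$, and reduce to showing that $c\bigl[4(5-x) - (5-x)^3\bigr] + 8x \leq 0$ on $[0,1]$. From there the routes diverge. The paper expands the bracket into the cubic $cx^3 - 15cx^2 + (71c+8)x - 105c$, then (using $c \geq 1/2$ and $x \in [0,1]$) bounds it above by $c(-14x^2 + 87x - 105)$, and finally checks via the quadratic formula that both roots of $-14x^2 + 87x - 105$ exceed $1$. You instead keep $q(x) = 4(5-x) - (5-x)^3$ unexpanded and use the crude interval bound $5 - x \in [4,5]$ to get $q(x) \leq -48$, then $c \cdot q(x) \leq -24$ (since $q < 0$ and $c \geq 1/2$), and finally $c \cdot q(x) + 8x \leq -16$. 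This avoids both the cubic expansion and the quadratic-root computation, at the cost of nothing; the slack is ample.

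One small wrinkle: your parenthetical about the monotonicity of $q$ is internally inconsistent. You correctly compute $q'(x) = -4 + 3(5-x)^2 > 0$, which means $q$ is \emph{increasing} in $x$, so the maximum on $[0,1]$ is at $x = 1$ (as you first stated), not at $x = 0$. The sentence ``$q$ increases as $5-x$ grows, i.e.\ as $x$ decreases'' conflates the sign of $q'$ with something else. This does not affect the proof, since the bound $q \leq -48$ that you actually use is correct (and even your cruder $q \leq -44$ would already suffice), but you should tidy that passage.
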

\begin{proof}
The derivative of the function from the lemma with respect to $x$ is
\begin{align*}
	c x^{c - 1} \cdot \left(\frac{4}{(5 - x)^2} - 1\right) + x^c \cdot \frac{8}{(5 - x)^3}
	={} &
	\frac{x^{c - 1}}{(5 - x)^3} \cdot [4c(5 - x) - c(5 - x)^3 + 8x]\\
	={} &
	\frac{x^{c - 1}}{(5 - x)^3} \cdot [cx^3 - 15cx^2 + (71c + 8)x - 105c]\\
	\leq{} &
	\frac{cx^{c - 1}}{(5 - x)^3} \cdot [- 14x^2 + 87x - 105]
	\enspace.
\end{align*}
The rightmost hand side of the last inequality is always non-positive because the roots of the quadratic function $- 14x^2 + 87x - 105$ are
\[
	x_{1, 2}
	=
	\frac{-87 \pm \sqrt{87^2 - 4 \cdot 14 \cdot 105}}{2 \cdot (-14)}
	=
	\frac{87 \mp \sqrt{1689}}{28}
	\geq
	\frac{87 - \sqrt{1689}}{28}
	>
	1
	\enspace.
	\qedhere
\]
\end{proof}

\begin{lemma} \label{lem:root_lower_bound}
For every constant $x \geq 0$, $\sqrt[n]{1 - x/n} \geq 1 - O(n^{-2})$.
\end{lemma}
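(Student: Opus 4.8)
The plan is to take logarithms and reduce the claim to two elementary one-variable inequalities. Fix the constant $x \geq 0$ (so the hidden constant in the $O$-notation is allowed to depend on $x$), and observe that it suffices to treat values of $n$ large enough that $x/n \leq 1/2$: for any finite set of smaller values of $n$ the inequality becomes trivial once the hidden constant is taken large enough.

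The first step is to write $\sqrt[n]{1 - x/n} = \exp\!\left(\tfrac{1}{n}\ln(1 - x/n)\right)$ and to lower bound the exponent. Here I would invoke the standard bound $\ln(1 - y) \geq -y - y^2$, valid for every $y \in [0, 1/2]$; it follows from the expansion $\ln(1 - y) = -\sum_{k \geq 1} y^k/k$ by bounding the tail from $k \geq 2$ by $\tfrac12\sum_{k \geq 2} y^k = \tfrac{y^2}{2(1-y)} \leq y^2$. Applying this with $y = x/n$ gives $\tfrac{1}{n}\ln(1 - x/n) \geq -\tfrac{x}{n^2} - \tfrac{x^2}{n^3}$. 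The second step is to apply $e^{-z} \geq 1 - z$, which holds for all real $z$, to this exponent, yielding $\sqrt[n]{1 - x/n} \geq 1 - \tfrac{x}{n^2} - \tfrac{x^2}{n^3} = 1 - O(n^{-2})$, as desired.

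I do not expect any genuine obstacle: the only mild subtlety is keeping $1 - x/n$ inside the range where the logarithmic inequality applies, which is precisely why we restrict to sufficiently large $n$, and this restriction costs nothing because of the $O$-notation. (One could also avoid logarithms entirely: set $c = x + 1$, raise the claimed inequality to the $n$-th power — legitimate since both sides are positive for large $n$ and $t \mapsto t^n$ is increasing — and use $(1 - c/n^2)^n \leq e^{-c/n} \leq 1 - c/n + c^2/(2n^2) \leq 1 - x/n$, where the last step holds once $c^2/(2n) \leq c - x = 1$.)
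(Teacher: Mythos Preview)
Your proof is correct and follows essentially the same route as the paper: both arguments amount to the estimate $\sqrt[n]{1-x/n}\geq 1+\tfrac{1}{n}\ln(1-x/n)$ together with an elementary lower bound on $\ln(1-x/n)$. The paper reaches the first inequality via the integral $\sqrt[n]{1-x/n}=1-\int_{1-x/n}^{1}\tfrac{1}{n}y^{1/n-1}\,dy\geq 1-\int_{1-x/n}^{1}\tfrac{1}{ny}\,dy$ and then uses $\ln(1-y)\geq -y/(1-y)$, whereas you obtain it directly from $e^{-z}\geq 1-z$ and use $\ln(1-y)\geq -y-y^{2}$; these are cosmetic differences only.
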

\begin{proof}
Observe that, for large enough $n$,
\begin{align*}
	\sqrt[n]{1 - x/n}
	={} &
	1 - \int_{1 - x/n}^1 \frac{d\sqrt[n]{y}}{dy} dy
	=
	1 - \int_{1 - x/n}^1 \frac{y^{\tfrac{1}{n} - 1}}{n} dy
	\geq
	1 - \int_{1 - x/n}^1 \frac{1}{ny} dy
	=
	1 - \frac{\left. \ln y \right|_{1 - x/n}^1}{n}\\
	={} &
	1 + \frac{\ln(1 - x/n)}{n}
	\geq
	1 - \frac{x/n}{n(1 - x/n)}
	=
	1 - \frac{x}{n(n - x)}
	=
	1 - O(n^{-2})
	\enspace.
	\qedhere
\end{align*}
\end{proof}

The proof of Theorem~\ref{thm:positive_inapproximability} is based on Theorem~\ref{thm:symmetry_gap}, and therefore, we need to describe an instance $\cI$ of {\RUSM} that has an integer parameter $n \geq 2$. The ground set of the instance $\cI$ is $\cN = \{a, b\} \cup \{c_i \mid i \in [n]\}$, and its objective functions are given, for every $S \subseteq \cN$, by $\ell(S) = 1/3$ and
\[
	g(S)
	=
	2 \cdot [(S \cap \{a, b\}) \bmod 2] + \characteristic[\{a, b\} \cap S \neq \varnothing] \cdot \characteristic[\{c_i \mid i \in [n]\} \not \subseteq S]
	\enspace.
\]
One can verify that $g$ is indeed a non-negative submodular function. Additionally, the functions $g$ and $\ell$ are both symmetric in the sense that swapping the identities of $a$ and $b$ does not change the values of these functions for any set, and the same applies to any swap of the identities of the elements of $\{c_i \mid i \in [n]\}$. Let $\cG$ be the group of permutations obtaining by combining swaps of these two kinds in any way.

In the next lemma, $G$ and $L$ are the multilinear extensions of $g$ and $\ell$, respectively, and $\bar{\vx} = \bE_{\sigma \in \cG}[\sigma(\vx)]$.
\begin{lemma} \label{lem:symmetry_gap_positive}
For a large enough $n$,
\[
	\max_{\vx \in [0, 1]^\cN} [G(\bar{\vx}) + L(\bar{\vx})] \leq \max_{S \subseteq \cN} \left[0.4998 \cdot g(S) + \frac{n - 1.0003}{n - 1} \cdot \ell(S)\right]
	\enspace,
\]
and the right hand side of the inequality is strictly positive.
\end{lemma}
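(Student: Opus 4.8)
The plan is to use the symmetry of $g$ and $\ell$ under $\cG$ to collapse the left-hand maximization to a two-parameter problem, solve that problem essentially in closed form, and then compare with the right-hand side --- the constants $0.4998$ and $1.0003$, being slightly below $1/2$ and $1$, provide the slack needed to absorb terms that vanish as $n\to\infty$. First I would record that, since $\cG$ contains all permutations of $\{c_i \mid i\in[n]\}$ together with the $a\leftrightarrow b$ swap, the averaged vector $\bar{\vx}$ is constant on $\{a,b\}$ and constant on $\{c_i \mid i\in[n]\}$; writing $z$ and $w$ for these two common values we have $\bar{\vx}=\vy(z,w)$, so $\max_{\vx\in[0,1]^\cN}[G(\bar{\vx})+L(\bar{\vx})]=\max_{z,w\in[0,1]}[G(\vy(z,w))+L(\vy(z,w))]$. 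Evaluating the extensions at $\vy(z,w)$ via independence of the coordinates of $\RSet(\vy(z,w))$: the parity summand of $g$ contributes its ``max-cut on two vertices'' value $4z(1-z)$; the product summand $\characteristic[\{a,b\}\cap S\neq\varnothing]\cdot\characteristic[\{c_i\}\not\subseteq S]$, a product of indicators living on the disjoint (hence independent) blocks $\{a,b\}$ and $\{c_i\}$, contributes $(1-(1-z)^2)(1-w^n)=z(2-z)(1-w^n)$; and $L(\vy(z,w))$ does not depend on $z$ and equals $\tfrac{nw}{3}$. Hence the left-hand side equals $\max_{z,w\in[0,1]}\bigl[4z(1-z)+z(2-z)(1-w^n)+\tfrac{nw}{3}\bigr]$.

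For fixed $w$ the bracket is, in $z$, a downward parabola (its $z^2$-coefficient is $-(5-w^n)$), so writing $u:=1-w^n$ its maximum over $z$ is $\tfrac{(2+u)^2}{4+u}=\tfrac{(3-w^n)^2}{5-w^n}$, attained at $z=\tfrac{2+u}{4+u}\in[0,1]$; thus the left-hand side equals $\max_{w\in[0,1]}F(w)$ with $F(w)=\tfrac{(3-w^n)^2}{5-w^n}+\tfrac{nw}{3}$. Differentiating and using the identity $(5-x)^2-(3-x)(7-x)=4$, I get $F'(w)=n\bigl[(w^{n})^{1-1/n}\bigl(\tfrac{4}{(5-w^n)^2}-1\bigr)+\tfrac13\bigr]$. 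By Lemma~\ref{lem:derivative_decreasing} with $c=1-\tfrac1n\geq\tfrac12$ (valid for $n\geq2$), the map $x\mapsto x^{\,c}\bigl(\tfrac{4}{(5-x)^2}-1\bigr)$ is non-increasing on $[0,1]$, so $F'$ is non-increasing in $w$; combined with $F'(0)=\tfrac n3>0$ and $F'(1)=n\bigl(\tfrac14-1+\tfrac13\bigr)<0$, this shows $F$ is unimodal and attains its maximum at the unique $w^\star\in(0,1)$ with $(w^\star)^{n-1}\bigl(1-\tfrac{4}{(5-(w^\star)^n)^2}\bigr)=\tfrac13$.

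On the right-hand side I would compute the maximum exactly by enumerating the four possible values $g(S)\in\{0,1,2,3\}$ and, for each, the largest compatible $\ell(S)$ --- for instance $g(S)=3$ forces $|S\cap\{a,b\}|=1$ and at least one $c_i$ outside $S$, so $\ell(S)\leq\tfrac{n-1}{3}$ --- which gives the value $0.4998\cdot3+\tfrac{n-1.0003}{n-1}\cdot\tfrac{n-1}{3}=1.4994+\tfrac{n-1.0003}{3}$, attained by $S=\{a\}\cup\{n-1\text{ of the }c_i\}$ and obviously positive. It remains to show $F(w^\star)$ is at most this value. From the critical-point equation I would extract, for large $n$, two-sided estimates localizing $(w^\star)^n$ near the root $x_\infty$ of $x\bigl(1-\tfrac{4}{(5-x)^2}\bigr)=\tfrac13$ (using the monotonicity from Lemma~\ref{lem:derivative_decreasing} once more, with $c=1$), together with an estimate for $w^\star$ obtained from $(w^\star)^n$ by taking an $n$-th root, which is where Lemma~\ref{lem:root_lower_bound} enters; plugging these into $F(w^\star)=\tfrac{(3-(w^\star)^n)^2}{5-(w^\star)^n}+\tfrac{nw^\star}{3}$ yields $F(w^\star)\leq\tfrac n3+\tfrac{(3-x_\infty)^2}{5-x_\infty}+\tfrac{\ln x_\infty}{3}+O(n^{-1})$, and a numerical check shows the constant here is strictly below $1.4994-\tfrac{1.0003}{3}$, so the claimed inequality holds for all large enough $n$.

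The step I expect to be the main obstacle is this last estimate: the critical-point equation is self-referential, with $(w^\star)^n$ appearing both as a power and inside the rational factor $1-\tfrac{4}{(5-(w^\star)^n)^2}$, so pinning down $(w^\star)^n$ and then $w^\star$ tightly enough is delicate --- the margin between the two relevant constants is only of order $10^{-3}$. This is precisely what the two technical lemmata are for: Lemma~\ref{lem:derivative_decreasing} delivers the unimodality of $F$ and the localization of its maximizer, and Lemma~\ref{lem:root_lower_bound} controls the passage between $w^\star$ and $(w^\star)^n$; the deliberately non-round constants $0.4998$ and $1.0003$ then absorb the residual $O(n^{-1})$ error.
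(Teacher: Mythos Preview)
Your proposal is correct and follows essentially the same route as the paper: reduce by symmetry to the two-parameter problem, optimize out $z$ to get $F(w)=\tfrac{(3-w^n)^2}{5-w^n}+\tfrac{nw}{3}$, use Lemma~\ref{lem:derivative_decreasing} to localize the maximizer, estimate $nw^\star$ via an $n$-th-root expansion (the role of Lemma~\ref{lem:root_lower_bound}), and compare numerically with the value $1.4994+\tfrac{n-1.0003}{3}$ attained at $S=\{a\}\cup\{c_1,\dots,c_{n-1}\}$. The only cosmetic differences are that the paper sandwiches the derivative using exponents $0.999$ and $1$ (pinning $(w^\star)^n\in[0.411,0.412]$) rather than applying Lemma~\ref{lem:derivative_decreasing} once with $c=1-\tfrac1n$ as you do, and it carries explicit decimals throughout instead of your cleaner asymptotic $\tfrac{n}{3}+\tfrac{(3-x_\infty)^2}{5-x_\infty}+\tfrac{\ln x_\infty}{3}+O(n^{-1})$.
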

\begin{proof}
Observe that the definition of $\cG$ guarantees that the vector $\bar{\vx}$ obeys $\bar{\vx}_a = \bar{\vx}_b$ and $\bar{\vx}_{c_i} = \bar{\vx}_{c_j}$ for every $i, j \in [n]$. Therefore, if we define for two values $z, w \in [0, 1]$ the vector $\vy(z, w)$ as follows
\[
	y_u(z, w)
	=
	\begin{cases}
		z & \text{if $u \in \{a, b\}$} \enspace,\\
		w & \text{if $u \in \{c_i \mid i \in [n]\}$} \enspace,
	\end{cases}
\]
then
\begin{align*}
	\max_{\vx \in [0, 1]^\cN} [G(\bar{\vx}) + L(\bar{\vx})]
	={} &
	\max_{z, w \in [0, 1]} [G(\vy(z, w)) + L(\vy(z, w))]\\
	={} &
	\max_{z, w \in [0, 1]} \left[4z(1 - z) + (2z - z^2)(1 - w^n) + \frac{nw}{3}\right]
	\enspace.
\end{align*}

Using the derivative with respect to $z$ of the argument of the $\max$ operation in the rightmost side of the last equation, one can show that the maximum is obtained when $z = 1 - 2 / (5 - w^n)$---note that this value of $z$ is indeed a number in the range $[1/2, 3/5] \subseteq [0, 1]$. Thus,
\begin{align} \label{eq:only_w_max}
	\max_{\vx \in [0, 1]^\cN} [G(\bar{\vx}) + L(\bar{\vx})]
	\leq{} &
	\max_{w \in [0, 1]} \left[\frac{8(3 - w^n)}{(5 - w^n)^2} + \frac{(3 - w^n)(7 - w^n)(1 - w^n)}{(5 - w^n)^2} + \frac{nw}{3}\right]\\ \nonumber
	={} &
	\max_{w \in [0, 1]} \left[\frac{(3 - w^n)^2}{5 - w^n} + \frac{nw}{3}\right]
	=
	\max_{w \in [0, 1]} \left[1 - w^n + \frac{4}{5 - w^n} + \frac{nw}{3}\right]
	\enspace.
\end{align}

Consider now the argument of the $\max$ operation in the rightmost side of the last inequality. The derivative of this argument with respect to $w$ is
\[
	nw^{n - 1} \cdot \left(\frac{4}{(5 - w^n)^2} - 1\right) + \frac{n}{3}
	\enspace.
\]
Let us denote the above expression by $D(w)$. Since $w \in [0, 1]$ and $\tfrac{4}{(5 - w^n)^2} - 1 \leq \tfrac{4}{16} - 1 = -\tfrac{3}{4}$, for a large enough $n$,
\[
	n(w^n)^{0.999} \cdot \left(\frac{4}{(5 - w^n)^2} - 1\right) + \frac{n}{3}
	\leq
	D(w)
	\leq
	nw^n \cdot \left(\frac{4}{(5 - w^n)^2} - 1\right) + \frac{n}{3}
	\enspace.
\]
Lemma~\ref{lem:derivative_decreasing} shows that both bounds on $D(w)$ are non-increasing functions of $w^n$. Furthermore, one can verify that the lower bound on $D(w)$ is positive for $w^n = 0.411$ and the upper bound on $D(w)$ is negative for $w^n = 0.412$. Thus, $D(w)$ is positive for $w^n \leq 0.411$ and negative for $w^n \geq 0.412$, which implies that the argument of the $\max$ operation in the rightmost side of Inequality~\eqref{eq:only_w_max} is maximized for some value $w$ such that $w^n \in [0.411, 0.412]$. Hence,
\begin{align} \label{eq:restriction}
	\max_{\vx \in [0, 1]^\cN} [G(\bar{\vx}) + L(\bar{\vx})]
	\leq{} &
	\max_{w \in [\sqrt[n]{0.411}, \sqrt[n]{0.412}]} \left[1 - w^n + \frac{4}{5 - w^n} + \frac{nw}{3}\right]\\\nonumber
	\leq{} &
	1 - 0.411 + \frac{4}{5 - 0.412} + \frac{n\sqrt[n]{0.412}}{3}
	\leq
	1.461 + \frac{n\sqrt[n]{0.412}}{3}\\\nonumber
	\leq{} &
	1.461 + \frac{n\sqrt[n]{e^{-0.886}}}{3}
	\leq
	1.461 + \frac{n\sqrt[n]{(1 - 0.886/n)^n / (1 - 0.886^2/n)}}{3}\\\nonumber
	={} &
	1.461 + \frac{n - 0.886}{3\sqrt[n]{1 - 0.886^2/n}}
	\leq
	1.1657 + \frac{n}{3(1 - O(n^{-2}))}
	\enspace,
\end{align}
where the last inequality holds by Lemma~\ref{lem:root_lower_bound}.

Since, for every value $x \in [0, 2/3]$,
\[
	\frac{n}{3(1 - x)} - \frac{n}{3}
	=
	\frac{n[1 - (1 - x)]}{3(1 - x)}
	=
	\frac{nx}{3(1 - x)}
	\leq
	nx
	\enspace,
\]
for large enough $n$, Inequality~\eqref{eq:restriction} implies
\[
	\max_{\vx \in [0, 1]^\cN} [G(\bar{\vx}) + L(\bar{\vx})]
	\leq
	1.1657 + \frac{n}{3} + n \cdot O(n^{-2})
	=
	1.1657 + \frac{n}{3} + O(n^{-1})
	\enspace.
\]

It is now time to consider the right hand side of the inequality that we need to prove. Specifically, since we can choose $S = \{a\} \cup \{c_i \mid i \in [n - 1]\}$,
\[
	\max_{S \subseteq \cN} \left[0.4998 \cdot g(S) + \frac{n - 1.0003}{n - 1} \cdot \ell(S)\right]
	\geq
	3 \cdot 0.4998 + \frac{n - 1.0003}{3}
	=
	1.4993 + \frac{n - 1}{3}
	>
	0
	\enspace.
\]
Furthermore, by combining this inequality with the previous one, we get that the inequality of the lemma holds whenever
\[
	1.4993 + \frac{n - 1}{3}
	\geq
	1.1657 + \frac{n}{3} + O(n^{-1})
	\enspace,
\]
or equivalently
\[
	0.3336
	\geq
	1/3 + O(n^{-1})
	\enspace,
\]
which is true for large enough $n$ values.
\end{proof}

By combining Theorem~\ref{thm:symmetry_gap} and Lemma~\ref{lem:symmetry_gap_positive}, we get that, even when the linear function $\ell$ is non-negative, no polynomial time algorithm for {\RUSM}  can guarantee $(0.4998(1 + \eps) + (n - 1.0003)(1 + \eps)/(n - 1))$-approximation for any $\eps > 0$ and large enough $n$. Theorem~\ref{thm:positive_inapproximability} now follows by choosing $\eps = 0.0003/n$.

\subsection{Reanalysis of Deterministic Double Greedy} \label{ssc:deterministic_DG}

In this section we prove Theorem~\ref{thm:deterministic}, which we repeat here for convenience. The algorithm {\DGDet} referred to by this theorem is given as Algorithm~\ref{alg:deterministic} (recall that $f \triangleq g + \ell$).
\thmDeterministic*

\begin{algorithm}
\caption{\DGDet} \label{alg:deterministic}
\DontPrintSemicolon
Denote the elements of $\cN$ by $u_1, u_2, \dotsc, u_n$ in an arbitrary order.\\
Let $X_0 \gets \varnothing$ and $Y_0 \gets \varnothing$.\\
\For{$i = 1$ \KwTo $n$}
{
	Let $a_i \gets f(u_i \mid X_{i - 1})$ and $b_i \gets -f(u_i \mid Y_{i - 1} - u_i)$.\\
	\lIf{$a_i \geq b_i$\label{line:b_check_det}}{Let $X_i \gets X_{i - 1} + u_i$ and $Y_i \gets Y_{i - 1}$.}
	\lElse{Let $X_i \gets X_{i - 1}$ and $Y_i \gets Y_{i - 1} - u_i$. \label{line:a_wins}}
\Return $X_n (=Y_n)$.
}
\end{algorithm}

The heart of the proof of Theorem~\ref{thm:deterministic} is the following lemma. To state this lemma, we need to define, for every integer $0 \leq i \leq n$ and set $S \subseteq \cN$, $\midset{S}{i} = (S \cup X_i) \cap Y_i$.
\begin{lemma} \label{lem:step_bound_det}
For every integer $1 \leq i \leq n$, value $\alpha \in [0, 1/3]$ and set $S \subseteq \cN$, $\alpha \cdot [f(X_i) - f(X_{i - 1})] + (1 - 2\alpha) \cdot [f(Y_i) - f(Y_{i - 1})] \geq \alpha \cdot [f(\midset{S}{i - 1}) - f(\midset{S}{i})]$.
\end{lemma}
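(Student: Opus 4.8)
The plan is to follow the standard Double Greedy case analysis of Buchbinder et al.~\cite{buchbinder2015tight}, but keep track of the weighted combination with the $\alpha$-parameter carefully, and show the inequality holds for the whole range $\alpha \in [0,1/3]$ simultaneously. Fix $i$ and abbreviate $a_i = f(u_i \mid X_{i-1})$ and $b_i = -f(u_i \mid Y_{i-1} - u_i)$. First I would record the submodularity fact $a_i + b_i \geq 0$ (since $X_{i-1} \subseteq Y_{i-1} - u_i$, so $f(u_i \mid X_{i-1}) \geq f(u_i \mid Y_{i-1} - u_i)$). Then I would split into the two cases of Line~\ref{line:b_check_det}. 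In the case $a_i \geq b_i$ we have $X_i = X_{i-1} + u_i$ and $Y_i = Y_{i-1}$, so the left-hand side equals $\alpha a_i$; also $\midset{S}{i}$ and $\midset{S}{i-1}$ differ only possibly in the coordinate $u_i$. One checks that $\midset{S}{i} \supseteq \midset{S}{i-1}$ here: $Y_i = Y_{i-1}$ so the ``$\cap Y_i$'' part is unchanged, and $X_i \supseteq X_{i-1}$ only adds $u_i$, which lies in $Y_i$. So the right-hand side is $-\alpha[f(\midset{S}{i}) - f(\midset{S}{i-1})] = -\alpha \cdot f(u_i \mid \midset{S}{i-1})$ if $u_i \notin \midset{S}{i-1}$ and $0$ otherwise. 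Since $\midset{S}{i-1} \subseteq Y_{i-1} - u_i$, submodularity gives $f(u_i \mid \midset{S}{i-1}) \geq f(u_i \mid Y_{i-1} - u_i) = -b_i$, hence the right-hand side is at most $\alpha b_i \leq \alpha a_i$, and the inequality holds (using $\alpha \geq 0$).

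For the other case $a_i < b_i$ we have $X_i = X_{i-1}$, $Y_i = Y_{i-1} - u_i$, and the left-hand side equals $(1-2\alpha) b_i$. Here $\midset{S}{i} \subseteq \midset{S}{i-1}$ (the ``$\cup X_i$'' part is unchanged and ``$\cap Y_i$'' can only drop $u_i$), so the right-hand side is $\alpha \cdot f(u_i \mid \midset{S}{i} )$ if $u_i \in \midset{S}{i-1}$ and $0$ otherwise; note $\midset{S}{i} = \midset{S}{i-1} - u_i$ and $\midset{S}{i} \supseteq X_i = X_{i-1}$, so by submodularity $f(u_i \mid \midset{S}{i}) \leq f(u_i \mid X_{i-1}) = a_i < b_i$. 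Thus it suffices to show $(1-2\alpha) b_i \geq \alpha \max\{a_i, 0\}$. When $a_i \le 0$ this is immediate from $b_i > a_i$, $b_i \ge -a_i \ge 0$ and $1 - 2\alpha \ge \alpha$ (which holds for $\alpha \le 1/3$); when $a_i > 0$, combine $b_i > a_i > 0$ with $1 - 2\alpha \geq \alpha$ to get $(1-2\alpha) b_i \geq \alpha b_i \geq \alpha a_i$.

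The main obstacle, and the only place where the constraint $\alpha \le 1/3$ is genuinely used, is the second case: I must simultaneously bound the loss in $f(Y)$ against the gain attributed to the ``optimum-tracking'' set $\midset{S}{i}$, and the clean inequality $1 - 2\alpha \ge \alpha$ is exactly the threshold that makes the weighted bound go through for every $\alpha$ in the stated interval at once. I would be careful to handle the corner cases where $u_i \notin \midset{S}{i-1}$ (right-hand side $= 0$, so the claim reduces to showing the left-hand side is nonnegative, which follows from $a_i + b_i \ge 0$ in case one and from $b_i > a_i$ together with $a_i + b_i \ge 0$, i.e.\ $b_i \ge 0$, in case two) and where $f$ is evaluated at sets differing by nothing. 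Everything else is routine marginal-value bookkeeping via submodularity.
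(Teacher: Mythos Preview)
Your case analysis is correct and yields the lemma. However, the paper takes a more modular route: it observes that the desired inequality is exactly $\alpha$ times the standard Double Greedy inequality
\[
[f(X_i) - f(X_{i-1})] + [f(Y_i) - f(Y_{i-1})] \;\geq\; f(\midset{S}{i-1}) - f(\midset{S}{i})
\]
from Buchbinder et al.~\cite{buchbinder2015tight}, plus $(1 - 3\alpha)$ times the single new fact $f(Y_i) - f(Y_{i-1}) \geq 0$ (which in the nontrivial case follows immediately from $b_i > a_i$ and $a_i + b_i \geq 0$). This decomposition makes transparent that the only ingredient beyond~\cite{buchbinder2015tight} is the monotonicity of $f(Y_i)$, and it pinpoints $\alpha = 1/3$ as exactly where the coefficient $1 - 3\alpha$ hits zero. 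Your direct approach instead re-derives the Buchbinder et al.\ bound inside the weighted inequality; this is more work but has the virtue of being self-contained. One small cleanup: in your final ``corner cases'' paragraph, the description ``$u_i \notin \midset{S}{i-1}$ gives right-hand side $= 0$'' is correct for Case~2 but reversed for Case~1 (there it is $u_i \in \midset{S}{i-1}$ that makes the right-hand side vanish), though your main paragraphs already handle both subcases correctly.
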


Before we get to the proof of Lemma~\ref{lem:step_bound_det}, let us show why it implies Theorem~\ref{thm:deterministic}.
\begin{proof}[Proof of Theorem~\ref{thm:deterministic}]
Fix some $\alpha \in [0, 1/3]$ and set $S \subseteq \cN$. Summing up Lemma~\ref{lem:step_bound_det} over all integer $1 \leq i \leq n$, we get
\[
	\alpha \cdot \sum_{i = 1}^n [f(X_i) - f(X_{i - 1})] + (1 - 2\alpha) \cdot \sum_{i = 1}^n [f(Y_i) - f(Y_{i - 1})]
	\geq
	\alpha \cdot \sum_{i = 1}^n [f(\midset{S}{i - 1}) - f(\midset{S}{i})]
	\enspace.
\]
The sums in the last inequality are telescopic sums, and collapsing them yields
\[
	\alpha \cdot [f(X_n) - f(X_0)] + (1 - 2\alpha) \cdot [f(Y_n) - f(Y_0)]
	\geq
	\alpha \cdot [f(\midset{S}{0}) - f(\midset{S}{n})]
	\enspace.
\]
One can observe that $X_n = Y_n = \midset{S}{n}$, $f(X_0) = g(\varnothing) \geq 0$, $f(Y_0) = g(\cN) + \ell(\cN) \geq \ell(S)$ and $\midset{S}{0} = S$. Plugging all these observations into the previous inequality yields
\[
	\alpha \cdot f(X_n) + (1 - 2\alpha) \cdot [f(X_n) - \ell(S)]
	\geq
	\alpha \cdot [f(S) - f(X_n)]
	\enspace.
\]
It remains to rearrange the last inequality, and plug in $f(S) = g(S) + \ell(S)$, which implies
\[
	f(X_n)
	\geq
	\alpha \cdot g(S) + (1 - \alpha) \cdot \ell(S)
	\enspace.
\]
The theorem now follows since: (i) $X_n$ is the output set of Algorithm~\ref{alg:deterministic}, and (ii) the last inequality holds for every $\alpha \in [0, 1/3]$ and set $S \subseteq \cN$.
\end{proof}

Let us now prove Lemma~\ref{lem:step_bound_det}.
\begin{proof}[Proof of Lemma~\ref{lem:step_bound_det}]
Buchbinder et al.~\cite{buchbinder2015tight} showed that Algorithm~\ref{alg:deterministic} guarantees\footnote{Technically, Buchbinder et al.~\cite{buchbinder2015tight} proved Inequality~\eqref{eq:known_det} only for the special case in which $S$ is a set maximizing $f$. However, their analysis does not use this property.}
\begin{equation} \label{eq:known_det}
	[f(X_i) - f(X_{i - 1})] + [f(Y_i) - f(Y_{i - 1})]
	\geq
	f(\midset{S}{i - 1}) - f(\midset{S}{i})
	\enspace.
\end{equation}
Furthermore, we prove below that we also have the inequality
\begin{equation} \label{eq:Y_increase}
	f(Y_i) - f(Y_{i - 1})
	\geq
	0
	\enspace.
\end{equation}
These two inequalities imply the lemma together since the inequality guaranteed by the lemma is equal to $\alpha \cdot \eqref{eq:known_det} + (1 - 3\alpha) \cdot \eqref{eq:Y_increase}$---note that the coefficients $\alpha$ and $1 - 3\alpha$ in this expression are non-negative for the range of possible values for $\alpha$.

It remains to prove Inequality~\eqref{eq:Y_increase}. If $Y_i = Y_{i - 1}$, then Inequality~\eqref{eq:Y_increase} trivially holds as an equality. Consider now the case of $Y_i \neq Y_{i - 1}$. By Lines~\ref{line:b_check_det} and~\ref{line:a_wins} of Algorithm~\ref{alg:deterministic}, this case happens only when $b_i > a_i$, and $Y_i$ is set to $Y_{i - 1} - u_i$ when this happens. Therefore, we get in this case
\[
	f(Y_i) - f(Y_{i - 1})
	=
	f(Y_{i - 1} - u_i) - f(Y_{i - 1})
	=
	b_i\\
	>
	\frac{a_i + b_i}{2}
	\geq
	0
	\enspace,
\]
where the last inequality holds since Buchbinder et al.~\cite{buchbinder2015tight} also showed that $a_i + b_i \geq 0$.
\end{proof}

\subsection{Reanalysis of Randomized Double Greedy} \label{ssc:randomized_DG}

In this section we prove Theorem~\ref{thm:randomized}, which we repeat here for convenience. The algorithm {\DGRand} referred to by this theorem is given as Algorithm~\ref{alg:randomized} (recall that $f \triangleq g + \ell$).
\thmRandomized*

\SetKwIF{With}{OtherwiseWith}{Otherwise}{with}{do}{otherwise with}{otherwise}{}
\begin{algorithm}
\caption{\DGRand} \label{alg:randomized}
\DontPrintSemicolon
Denote the elements of $\cN$ by $u_1, u_2, \dotsc, u_n$ in an arbitrary order.\\
Let $X_0 \gets \varnothing$ and $Y_0 \gets \varnothing$.\\
\For{$i = 1$ \KwTo $n$}
{
	Let $a_i \gets f(u_i \mid X_{i - 1})$ and $b_i \gets -f(u_i \mid Y_{i - 1} - u_i)$.\\
	\lIf{$b_i \leq 0$\label{line:b_check}}{Let $X_i \gets X_{i - 1} + u_i$ and $Y_i \gets Y_{i - 1}$.}
	\lElseIf{$a_i \leq 0$}{Let $X_i \gets X_{i - 1}$ and $Y_i \gets Y_{i - 1} - u_i$.}
	\Else{
		\lWith{probability $\frac{a_i}{a_i + b_i}$}{Let $X_i \gets X_{i - 1} + u_i$ and $Y_i \gets Y_{i - 1}$.}
		\lOtherwise{Let $X_i \gets X_{i - 1}$ and $Y_i \gets Y_{i - 1} - u_i$.\tcp*[f]{Occurs with prob.\ $\frac{b_i}{a_i + b_i}$.}} 
	}
\Return $X_n (=Y_n)$.
}
\end{algorithm}

The heart of the proof of Theorem~\ref{thm:randomized} is the following lemma. To state this lemma, we need to define, like in Section~\ref{ssc:deterministic_DG}, $\midset{S}{i} = (S \cup X_i) \cap Y_i$ for every integer $0 \leq i \leq n$ and set $S \subseteq \cN$.
\begin{lemma} \label{lem:step_bound}
For every integer $1 \leq i \leq n$, value $\alpha \in [0, 1/2]$ and set $S \subseteq \cN$, $(\alpha/2) \cdot \bE[f(X_i) - f(X_{i - 1})] + (1 - 3\alpha/2) \cdot \bE[f(Y_i) - f(Y_{i - 1})] \geq \alpha \cdot \bE[f(\midset{S}{i - 1}) - f(\midset{S}{i})]$.
\end{lemma}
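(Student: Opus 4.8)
The plan is to follow the blueprint of the proof of Lemma~\ref{lem:step_bound_det}: I would write the desired inequality as a non-negative linear combination of a known per-step guarantee for {\DGRand} and one elementary new observation. The known guarantee is the inequality that Buchbinder et al.~\cite{buchbinder2015tight} use to obtain their $1/2$-approximation ratio: conditioned on the random choices of the first $i - 1$ iterations---and therefore also in total expectation---it holds that
\begin{equation} \label{eq:known_rand}
	\bE[f(X_i) - f(X_{i - 1})] + \bE[f(Y_i) - f(Y_{i - 1})]
	\geq
	2 \cdot \bE[f(\midset{S}{i - 1}) - f(\midset{S}{i})]
	\enspace.
\end{equation}
Exactly as in the deterministic case, although~\cite{buchbinder2015tight} state this only for a set $S$ maximizing $f$, their proof never uses the optimality of $S$, so~\eqref{eq:known_rand} is available for every $S \subseteq \cN$.

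The new ingredient I would establish is the randomized analogue of Inequality~\eqref{eq:Y_increase}, namely that the $Y$-value can only increase in expectation:
\begin{equation} \label{eq:Y_increase_rand}
	\bE[f(Y_i) - f(Y_{i - 1})]
	\geq
	0
	\enspace.
\end{equation}
It suffices to prove the conditional statement $\bE[f(Y_i) - f(Y_{i - 1}) \mid \cA_{i - 1}] \geq 0$, where $\cA_{i - 1}$ fixes the first $i - 1$ random decisions (hence $X_{i - 1}$ and $Y_{i - 1}$), and then take total expectation. Recall that $u_i \in Y_{i - 1}$, so $b_i = f(Y_{i - 1} - u_i) - f(Y_{i - 1})$. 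Conditioned on $\cA_{i - 1}$, Algorithm~\ref{alg:randomized} takes one of three branches on $u_i$: if $b_i \leq 0$ then $Y_i = Y_{i - 1}$ and $f(Y_i) - f(Y_{i - 1}) = 0$; if $b_i > 0$ and $a_i \leq 0$ then $Y_i = Y_{i - 1} - u_i$ and $f(Y_i) - f(Y_{i - 1}) = b_i > 0$; and if $a_i, b_i > 0$ then $Y_i = Y_{i - 1}$ with probability $a_i / (a_i + b_i)$ and $Y_i = Y_{i - 1} - u_i$ with probability $b_i / (a_i + b_i)$, so the conditional expectation equals $b_i^2 / (a_i + b_i) \geq 0$. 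Hence~\eqref{eq:Y_increase_rand} holds.

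Finally, I would observe that the inequality of the lemma is precisely $\tfrac{\alpha}{2} \cdot \eqref{eq:known_rand} + (1 - 2\alpha) \cdot \eqref{eq:Y_increase_rand}$: the resulting coefficient of $\bE[f(X_i) - f(X_{i - 1})]$ is $\alpha/2$, that of $\bE[f(Y_i) - f(Y_{i - 1})]$ is $\alpha/2 + (1 - 2\alpha) = 1 - 3\alpha/2$, and the right-hand side is $\alpha \cdot \bE[f(\midset{S}{i - 1}) - f(\midset{S}{i})]$; moreover, both multipliers $\alpha/2$ and $1 - 2\alpha$ are non-negative for $\alpha \in [0, 1/2]$, which is exactly why the lemma is restricted to this range (just as Lemma~\ref{lem:step_bound_det} needed $\alpha \in [0, 1/3]$ for the multiplier $1 - 3\alpha$). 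I expect the only genuinely delicate point to be the case analysis behind~\eqref{eq:Y_increase_rand}---in particular the bookkeeping of which branch assigns which value to $Y_i$ and the use of $u_i \in Y_{i - 1}$ to rewrite $b_i$; the rest is the same routine coefficient arithmetic already carried out in the deterministic proof.
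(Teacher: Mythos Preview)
Your proposal is correct and follows essentially the same approach as the paper: combine the Buchbinder et al.\ per-step inequality (your~\eqref{eq:known_rand}, the paper's~\eqref{eq:known}) with the observation that the $Y$-value never decreases, using the coefficients $\alpha/2$ and $1-2\alpha$. The only cosmetic difference is that the paper proves the slightly sharper pointwise statement $f(Y_i)\ge f(Y_{i-1})$ (noting that $Y_i\neq Y_{i-1}$ forces $b_i>0$, hence $f(Y_i)-f(Y_{i-1})=b_i>0$), whereas you compute the conditional expectation $b_i^2/(a_i+b_i)$ in the randomized branch; both arguments are valid and lead to the same conclusion.
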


Before we get the to the proof of Lemma~\ref{lem:step_bound}, let us show why it implies Theorem~\ref{thm:randomized}.
\begin{proof}[Proof of Theorem~\ref{thm:randomized}]
Fix some $\alpha \in [0, 1/2]$ and set $S \subseteq \cN$. Summing up Lemma~\ref{lem:step_bound} over all integer $1 \leq i \leq n$, we get
\[
	\tfrac{\alpha}{2} \sum_{i = 1}^n \bE[f(X_i) - f(X_{i - 1})] + (1 - 3\alpha / 2) \cdot \sum_{i = 1}^n \bE[f(Y_i) - f(Y_{i - 1})]
	\geq
	\alpha \cdot \sum_{i = 1}^n \bE[f(\midset{S}{i - 1}) - f(\midset{S}{i})]
	\enspace.
\]
Due to the linearity of the expectation, the sums in the last inequality are telescopic sums. Collapsing these sums yields
\[
	\tfrac{\alpha}{2} \bE[f(X_n) - f(X_0)] + (1 - 3\alpha/2) \cdot \bE[f(Y_n) - f(Y_0)]
	\geq
	\alpha \cdot \bE[f(\midset{S}{0}) - f(\midset{S}{n})]
	\enspace.
\]
Observe now that, like in the proof of Theorem~\ref{thm:deterministic}, we have $X_n = Y_n = \midset{S}{n}$, $f(X_0) = g(\varnothing) \geq 0$, $f(Y_0) = g(\cN) + \ell(\cN) \geq \ell(S)$ and $\midset{S}{0} = S$. Plugging all these observations into the previous inequality yields
\[
	\tfrac{\alpha}{2} \bE[f(X_n)] + (1 - 3\alpha/2) \cdot \bE[f(X_n) - \ell(S)]
	\geq
	\alpha \cdot \bE[f(S) - f(X_n)]
	\enspace.
\]
It remains to rearrange the last inequality, and plug in $f(S) = g(S) + \ell(S)$, which implies
\[
	\bE[f(X_n)]
	\geq
	\alpha \cdot g(S) + (1 - \alpha/2) \cdot \ell(S)
	\enspace.
\]
The theorem now follows since: (i) $X_n$ is the output set of Algorithm~\ref{alg:deterministic}, and (ii) the last inequality holds for every $\alpha \in [0, 1/2]$ and set $S \subseteq \cN$.
\end{proof}

Let us now prove Lemma~\ref{lem:step_bound}.
\begin{proof}[Proof of Lemma~\ref{lem:step_bound}]
Buchbinder et al.~\cite{buchbinder2015tight} showed that Algorithm~\ref{alg:randomized} guarantees\footnote{Again, the proof of~\cite{buchbinder2015tight} was technically stated only for the case in which $S$ is a set maximizing $f$, but it extends without modification to any set $S \subseteq \cN$.}
\begin{equation} \label{eq:known}
	\bE[f(X_i) - f(X_{i - 1})] + \bE[f(Y_i) - f(Y_{i - 1})]
	\geq
	2\bE[f(\midset{S}{i - 1}) - f(\midset{S}{i})]
	\enspace.
\end{equation}
Given this inequality, to prove the lemma it suffices to show that
\[
	(1 - 2\alpha) \cdot \bE[f(Y_i) - f(Y_{i - 1})]
	\geq
	0
\]
(because adding this inequality to $\alpha/2$ times Inequality~\eqref{eq:known} yields the inequality that we want to prove). Below we prove the stronger claim that the inequality $f(Y_i) \geq f(Y_{i - 1})$ holds deterministically. One observe that this stronger claim indeed implies $(1 - 2\alpha) \cdot \bE[f(Y_i) - f(Y_{i - 1})]$ because $1 - 2\alpha$ is non-negative in the range of allowed values for $\alpha$.

If $Y_i = Y_{i - 1}$, then the inequality $f(Y_i) \geq f(Y_{i - 1})$ trivially holds as an equality. Therefore, we assume from now on $Y_i \neq Y_{i - 1}$, which implies $Y_i = Y_{i - 1} - u_i$. Due to the condition in Line~\ref{line:b_check} of Algorithm~\ref{alg:randomized}, $Y_i$ can be set to $Y_{i - 1} - u_i$ only when $b_i > 0$, and thus,
\[
	f(Y_i)
	=
	f(Y_{i - 1} - u_i)
	=
	f(Y_{i - 1}) + b_i
	>
	f(Y_{i - 1})
	\enspace.
	\qedhere
\]
\end{proof}

\appendix
\section{Proof of Theorem~\headerRef{thm:symmetry_gap}} \label{app:symmetry_gap}

In this section we prove Theorem~\ref{thm:symmetry_gap}, which we repeat here for convenience.
\thmSymmetryGap*

The proof of Theorem~\ref{thm:symmetry_gap} is based on the symmetry gap framework of Vondr\'{a}k~\cite{vondrak2013symmetry}. In this proof we assume $\eps < 1/2$. Note that this assumption is without loss of generality since, if Theorem~\ref{thm:symmetry_gap} applies to some constant $\eps > 0$, then it trivially holds for every larger $\eps$ value. Let us now restate two central lemmata of~\cite{vondrak2013symmetry}.\footnote{Some of the notation was modified in this restatement (compared to the original statement in~\cite{vondrak2013symmetry}) to make it easier to use these lemmata for our purposes.}

\begin{lemma}[Lemma~3.1 of~\cite{vondrak2013symmetry}] \label{lem:original_continuous_to_discrete}
Let $n$ be a positive integer, and let $F \colon [0, 1]^\cN \to \bR$ and $X = [n]$. If we define $f\colon 2^{\cN \times X} \to \nnR$ as $f(S) = F(\vx(S))$, where the vector $\vx(S)$ is defined by $x_u(S) = \frac{1}{n}|S \cap (\{u\} \times X)|$ for every $u \in \cN$. Then,
\begin{compactenum}
	\item if $\frac{\partial F}{\partial x_u} \geq 0$ everywhere for each element $u \in \cN$, then $f$ is monotone,
	\item and if the first partial derivatives of $F$ are absolutely continuous and $\frac{\partial^2 F}{\partial x_u \partial x_v} \leq 0$ almost everywhere for all elements $u, v \in \cN$, then $f$ is submodular.
\end{compactenum}
\end{lemma}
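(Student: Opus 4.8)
The plan is to verify the two claims directly from the definition of $f$, reducing both to one- and two-dimensional calculus facts about the restriction of $F$ to axis-parallel segments.

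For the monotonicity claim, I would first note that if $S \subseteq T \subseteq \cN \times X$ then for every $u \in \cN$ we have $x_u(S) = \frac1n|S \cap (\{u\} \times X)| \le \frac1n|T \cap (\{u\} \times X)| = x_u(T)$, so $\vx(S) \le \vx(T)$ coordinatewise. Since $\partial F / \partial x_u \ge 0$ everywhere, the mean value theorem applied to $F$ along each axis-parallel segment shows that $F$ is nondecreasing in each coordinate, hence nondecreasing with respect to the coordinatewise order on $[0,1]^\cN$; therefore $f(S) = F(\vx(S)) \le F(\vx(T)) = f(T)$.

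For submodularity I would use the local characterization: a set function is submodular if and only if $f(A + a) + f(A + b) \ge f(A + a + b) + f(A)$ for every $A \subseteq \cN \times X$ and every pair of distinct elements $a, b \notin A$ (the general diminishing-returns inequality follows from these pairwise inequalities by induction on the size of the larger set, peeling off one element at a time). So fix such $A$, write $a = (u, i)$, $b = (v, j)$, and put $\vx = \vx(A)$. Since $a \notin A$ and $b \notin A$, at least one copy of $u$ and at least one copy of $v$ lie outside $A$, which guarantees that all the points below stay inside $[0,1]^\cN$, and the pairwise inequality becomes a statement about $F$:
\begin{itemize}
	\item if $u \ne v$, it reads $F(\vx + \tfrac1n e_u) + F(\vx + \tfrac1n e_v) \ge F(\vx + \tfrac1n e_u + \tfrac1n e_v) + F(\vx)$;
	\item if $u = v$ (so $i \ne j$ and two copies of $u$ are outside $A$), it reads $2F(\vx + \tfrac1n e_u) \ge F(\vx + \tfrac2n e_u) + F(\vx)$.
\end{itemize}
In the first case I set $\phi(s, t) = F(\vx + s e_u + t e_v)$ on $[0, \tfrac1n]^2$ and use that, by the absolute continuity of the first partials of $F$ together with Fubini, $\phi(\tfrac1n, \tfrac1n) - \phi(\tfrac1n, 0) - \phi(0, \tfrac1n) + \phi(0, 0) = \int_0^{1/n}\!\!\int_0^{1/n} \frac{\partial^2 \phi}{\partial s\, \partial t}\, ds\, dt \le 0$, the last step using $\partial^2 F / \partial x_u \partial x_v \le 0$ almost everywhere. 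In the second case, the function $t \mapsto F(\vx + t e_u)$ has an absolutely continuous derivative whose own derivative is $\partial^2 F / \partial x_u^2 \le 0$ almost everywhere, hence that derivative is nonincreasing and the function is concave, which yields the midpoint inequality. Together these cover all the pairwise inequalities, so $f$ is submodular.

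The main obstacle is purely the measure-theoretic bookkeeping in the submodular case: the hypotheses only provide that the \emph{first} partials of $F$ are absolutely continuous and that the second partials exist with the right sign \emph{almost everywhere}, so the fundamental-theorem-of-calculus identity for $\phi$ and the ``a.e.\ nonpositive derivative $\Rightarrow$ concave'' step must be justified via the FTC for absolutely continuous functions and Fubini rather than ordinary $C^2$ calculus. The reduction to pairwise inequalities and the checks that every evaluation point of $F$ lies in its domain are routine.
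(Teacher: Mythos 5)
The paper does not prove this lemma; it is quoted as ``Lemma~3.1 of \cite{vondrak2013symmetry}'' and imported verbatim, so there is no in-paper proof to compare against. Your argument is correct and is essentially the standard one from Vondr\'ak's original paper: pull set operations on $\cN \times X$ back to increments of $\tfrac{1}{n}e_u$ in the cube, then use the sign conditions on the first and second partials of $F$. Two small remarks. First, the reduction to pairwise (second-difference) inequalities is fine, and the split into $u \neq v$ (rectangle identity via Fubini) and $u = v$ (one-variable concavity) is clean; Vondr\'ak instead compares $f(a\mid A)$ with $f(a\mid B)$ directly by integrating $\partial F / \partial x_u$ along the line from $\vx(A)$ to $\vx(B)$, but the two routes are equivalent, and yours has the virtue of isolating a single genuinely two-dimensional calculation. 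Second, for the outer fundamental-theorem-of-calculus step you implicitly use that $t \mapsto F(\vx + t e_u)$ is itself absolutely continuous on each segment; this does follow from the stated hypothesis (an absolutely continuous first partial is in particular continuous, so the one-variable restriction of $F$ is $C^1$), but since you explicitly flag the measure-theoretic bookkeeping as the delicate point, that one-line justification is worth writing out rather than leaving implicit.
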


\begin{lemma}[Lemma~3.2 of~\cite{vondrak2013symmetry}] \label{lem:original_continuous_versions}
Consider a function $g\colon 2^\cN \to \nnR$ invariant under a group of permutations $\cG$ on the ground set $\cN$. Let $G(\vx)$ be the multilinear extension of $G$, define $\bar{x} = \bE_{\sigma \in \cG}[\characteristic_{\sigma(\vx)}]$ and fix any $\eps' > 0$. Then, there is $\delta > 0$ and functions $\hat{G}, \hat{H} \colon [0, 1]^{\cN} \to \nnR$ (which are also symmetric with respect to $\cG$), satisfying the following:
\begin{compactenum}
	\item For all $\vx \in [0, 1]^\cN$, $\hat{H}(\vx) = \hat{G}(\bar{\vx})$.
	\item For all $\vx \in [0, 1]^\cN$, $|\hat{G}(\vx) - G(\vx)| \leq \eps'$.
	\item Whenever $\|\vx - \bar{\vx}\|_2 \leq \delta$, $\hat{G}(\vx) = \hat{H}(\vx)$ and the value depends only on $\bar{\vx}$.
	\item The first partial derivatives of $\hat{G}$ and $\hat{H}$ are absolutely continuous.
	\item If $f$ is monotone, then, for every element $u \in \cN$, $\frac{\partial\hat{G}}{\partial x_u} \geq 0$ and $\frac{\partial\hat{H}}{\partial x_u} \geq 0$ everywhere.
	\item If $f$ is submodular then, for every two elements $u,v \in \cN$, $\frac{\partial^2\hat{G}}{\partial x_u \partial x_v} \leq 0$ and $\frac{\partial^2\hat{H}}{\partial x_u \partial x_v} \leq 0$ almost everywhere.
\end{compactenum}
\end{lemma}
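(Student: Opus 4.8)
The plan is to produce $\hat G$ by a controlled smoothing of $G$ that, inside a thin ``collar'' around the symmetric subspace, erases the dependence of $\vx$ on its component orthogonal to that subspace while staying within $\eps'$ of $G$ everywhere, and then to set $\hat H(\vx) := \hat G(\bar\vx)$.

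First the linear-algebra set-up. Write $T := \tfrac1{|\cG|}\sum_{\sigma\in\cG} P_\sigma$, where $P_\sigma$ is the permutation matrix of $\sigma$, so that $\bar\vx = T\vx$. The matrix $T$ is symmetric, idempotent and doubly stochastic (a convex combination of permutation matrices); hence it is the orthogonal projection onto the $\cG$-invariant subspace $V=\{\vx : \sigma(\vx)=\vx\ \forall\sigma\in\cG\}$, it maps $[0,1]^\cN$ into itself, and all of its entries are non-negative. This last property is the workhorse: for any $C^1$ (resp.\ $C^2$) function $\Psi$ all of whose first partials are everywhere $\ge 0$ (resp.\ all of whose second partials are everywhere $\le 0$), the chain rule identities $\partial_u(\Psi\circ T)=\sum_v(\partial_v\Psi)\,T_{vu}$ and $\partial_u\partial_v(\Psi\circ T)=\sum_{k,l}(\partial_k\partial_l\Psi)\,T_{ku}T_{lv}$ show that $\Psi\circ T$ has the same sign property. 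Consequently, once $\hat G$ is built with the derivative signs required by clauses~5--6 (and with absolutely continuous first partials, and $\cG$-symmetric), defining $\hat H(\vx):=\hat G(T\vx)$ automatically gives a $\cG$-symmetric function satisfying $\hat H(\vx)=\hat G(\bar\vx)$ (clause~1), inheriting absolute continuity of first partials (clause~4) and, when applicable, non-negativity of first partials / non-positivity of second partials (clauses~5--6). Moreover, if $\hat G$ is arranged so that $\hat G(\vx)$ depends only on $\bar\vx$ whenever $\|\vx-\bar\vx\|_2\le\delta$, then on that neighbourhood $\hat G(\vx)=\hat G(\bar\vx)=\hat H(\vx)$ and the common value depends only on $\bar\vx$, giving clause~3. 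So everything reduces to constructing $\hat G$.

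The naive attempt --- a partition of unity $\lambda(\vx)\,G(\vx)+(1-\lambda(\vx))\,G(T\vx)$ with $\lambda$ a smooth cut-off of $\|\vx-\bar\vx\|_2$ --- satisfies clauses~1--4 but fails clauses~5--6, because differentiating the weight $\lambda$ produces a rank-one term whose entries have uncontrolled sign and whose magnitude is $\Theta(1)$ in the collar. The fix I propose is to smooth by convolution rather than by interpolating values:
\[
	\hat G(\vx)\;=\;\int_{V^\perp} G\bigl(\vx+\vy\bigr)\,\rho_{\vx}(\vy)\,d\vy,
\]
where $\{\rho_\vx\}$ is a smooth family of non-negative, $\cG$-symmetric mollifiers supported on the orthogonal complement $V^\perp$, whose ``width'' is a smooth function of $\mathrm{dist}(\vx,V)$: far from $V$ the width is $0$, so $\rho_\vx$ is a Dirac mass and $\hat G=G$ there (so clause~2 holds exactly outside the collar); for $\mathrm{dist}(\vx,V)\le\delta$ the support fills out a full $V^\perp$-ball, which together with the $\cG$-orbit averaging forces $\hat G(\vx)$ to depend only on $\bar\vx$ (this is the key point that yields clause~3). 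Taking the maximal width $\le\delta$ and $\delta\le\eps'/L_G$, where $L_G$ is a Lipschitz constant of $G$ on $[0,1]^\cN$, gives $|\hat G(\vx)-G(\vx)|\le L_G\delta\le\eps'$ everywhere (clause~2); $\hat G\ge 0$ since $G\ge 0$ and $\rho_\vx\ge 0$; smoothness of the family in both arguments yields $\hat G\in C^\infty$, hence clause~4; and the whole recipe is $\cG$-equivariant, so $\hat G$ is $\cG$-symmetric.

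The crux --- and where I expect the real work to lie --- is clauses~5 and~6 for $\hat G$. Convolution against a \emph{fixed} non-negative mollifier trivially preserves ``$\partial_u G\ge 0$'' and ``$\partial_u\partial_v G\le 0$'', since it merely averages gradients and Hessians; the danger is entirely in the dependence of $\rho_\vx$ on $\vx$. Differentiating under the integral gives $\partial_{x_u}\hat G(\vx)=\int \partial_u G(\vx+\vy)\,\rho_\vx(\vy)\,d\vy+\int \bigl[G(\vx+\vy)-G(\vx)\bigr]\,(\partial_{x_u}\rho_\vx)(\vy)\,d\vy$ (using $\int\partial_{x_u}\rho_\vx=0$), whose first term is harmless and whose second, ``parameter-drift'' term is $O(1)$ in general; one must choose the width profile so that this drift term is sign-compatible with, or dominated by, the first term, and likewise at the level of second derivatives. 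The plan is to realise the variable convolution as the time-$1$ map of a flow $\partial_t G_t=\cL_{\vx,t}G_t$ with $\cL_{\vx,t}=\sum_{i,j}a_{ij}(\vx,t)\,\partial_i\partial_j$, where $a(\vx,t)$ is positive-semidefinite, $\cG$-symmetric, supported on $V^\perp$-directions, and with strength a function of $\mathrm{dist}(\vx,V)$ alone; then, differentiating the constraints $\partial_uG_t\ge 0$ and $\partial_u\partial_vG_t\le 0$ in $t$ and invoking a maximum-principle argument at points where a constraint is tight, one shows the flow preserves both. Pushing this through forces a careful tuning of the strength profile against the collar radius $\delta$, the transition width, and the Lipschitz/curvature bounds of $G$ on $[0,1]^\cN$, so that the terms produced by differentiating the spatially varying coefficient $a(\vx,t)$ are controlled; this estimate is the delicate part. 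Once it is in place for $\hat G$, the $T$-composition observations above finish the proof for $\hat H$.
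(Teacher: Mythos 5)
First, note that the paper does not prove this statement at all: it is imported verbatim as Lemma~3.2 of Vondr\'{a}k~\cite{vondrak2013symmetry}, so the only meaningful comparison is with Vondr\'{a}k's original argument. Your reduction of $\hat H$ to $\hat G$ via composition with the averaging operator $T$ is correct and matches the intended construction ($\hat H(\vx):=\hat G(\bar\vx)$, with the doubly stochastic, idempotent, entrywise non-negative $T$ transporting the derivative sign conditions), and you correctly identify that everything hinges on building $\hat G$. But the proposal has a genuine gap exactly where you flag it: clauses~5 and~6 are asserted via a plan, not proved. The flow $\partial_t G_t=\sum_{i,j}a_{ij}(\vx,t)\,\partial_i\partial_j G_t$ with spatially varying coefficients does not obviously preserve $\partial_u\partial_v G_t\leq 0$: differentiating that constraint in $t$ produces, besides the benign term obtained by applying the operator to $\partial_u\partial_v G_t$, terms in which first and second spatial derivatives of $a$ are contracted against third and fourth derivatives of $G_t$, whose signs and magnitudes are uncontrolled; a maximum-principle argument at a point where the constraint is tight cannot close without quantitative bounds on these, and the coefficient profile must ramp from ``no smoothing'' to ``full fiber averaging'' across a collar of width $O(\delta)$, so its spatial derivatives are of order $1/\delta$, i.e., large. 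This is the entire difficulty of the lemma, and it is left open.

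There is also a concrete error in your justification of clause~3: averaging $G(\vx+\vy)$ against a mollifier supported on a ball in $V^\perp$ \emph{centered at the origin}, with width depending only on $\mathrm{dist}(\vx,V)$, does not make the result depend only on $\bar\vx$. Two points $\vx\neq\vx'$ with $\bar\vx=\bar\vx'$ yield averages of $G$ over two \emph{translated} copies of that ball, which generally differ; to obtain clause~3 the integration domain $\vx+\mathrm{supp}(\rho_{\vx})$ must be the same set for every point of the fiber, i.e., the mollifier must be re-centered at $\bar\vx$. You must also address that $\vx+\vy$ can leave $[0,1]^\cN$, where $G$ is undefined (Vondr\'{a}k clips coordinates to the cube). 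For reference, the original proof avoids variable-width convolution entirely: it first reparametrizes via $\vx\mapsto\bar\vx+\omega(\vx)\cdot(\vx-\bar\vx)$, where $\omega$ is a carefully chosen scalar function of $\|\vx-\bar\vx\|^2$ vanishing near the symmetric subspace, then convolves with a \emph{fixed}-width product mollifier, and verifies the sign conditions on the first and second partials by a direct computation in which the error terms contributed by $\omega$ are explicitly dominated. To complete your route you would need an estimate of comparable explicitness for the ``parameter-drift'' terms you identify; as written, the proof is incomplete.
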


In our use of Lemma~\ref{lem:original_continuous_versions} we have to carefully choose a value for the $\eps'$ parameter of the lemma. Specifically, we choose $\eps' = (\eps / 3) \cdot \max_{S \subseteq \cN}[\alpha \cdot g(S) + \beta \cdot \ell(S)]$. Notice that the conditions of Theorem~\ref{thm:symmetry_gap} guarantee that this value is (strictly) positive.

Applying Lemma~\ref{lem:original_continuous_versions} with the above chosen parameter value to the function $g$ and the group $\cG$ whose existence is guaranteed by the statement of Theorem~\ref{thm:symmetry_gap}, we get two functions $\hat{G}$ and $\hat{H}$ with the properties stated by Lemma~\ref{lem:original_continuous_versions}. Now, for every permutation $\sigma \in \cG$ and integer $n \geq 1$, we can define functions $g_1^{\sigma, n}$ and $g_2^{\sigma, n}$ as follows. For every set $S \subseteq \cN \times [n]$, let $\vy^\sigma(S)$ be the vector defined as $y^\sigma_u(S) = \tfrac{1}{n}|\{i \in [n] \mid (\sigma(u), i) \in S\}|$ for every $u \in \cN$. Then,
\[
	g_1^{\sigma, n}(S)
	=
	\hat{G}(\vy^\sigma(S))
	\qquad \text{and} \qquad
	g_2^{\sigma, n}(S)
	=
	\hat{H}(\vy^\sigma(S))
	\enspace.
\]
Observe that, by Lemma~\ref{lem:original_continuous_to_discrete}, the functions $g_1^{\sigma, n}$ and $g_2^{\sigma, n}$ are always non-negative and submodular.

We now need to invoke Lemma~3.3 of~\cite{vondrak2013symmetry}. Unfortunately, the statement of this lemma is quite involved as it is designed to handle also constrained settings. Therefore, we give here only a simplified version of this lemma that suffices for our purposes.
\begin{lemma}[Special case of Lemma~3.3 of~\cite{vondrak2013symmetry}] \label{lem:indistinguishable}
Consider any deterministic sub-exponential time algorithm $ALG$ that gets access to a function $g' \colon 2^{\cN \times [n]} \to \nnR$, and let $\sigma$ be a uniformly random permutation from $\cG$. Then, with probability $1 - e^{-\Omega(n)}$, $ALG$ outputs a set of the same value when it gets either $g_1^{\sigma, n}$ or $g_2^{\sigma, n}$ as input.
\end{lemma}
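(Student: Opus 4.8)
The plan is to follow the argument behind Lemma~3.3 of Vondr\'{a}k~\cite{vondrak2013symmetry}. The conceptual core is that $g_1^{\sigma, n}$ and $g_2^{\sigma, n}$ agree on every set $S$ that is ``$\delta$-balanced'', meaning $\|\vy^\sigma(S) - \overline{\vy^\sigma(S)}\|_2 \le \delta$: for such a set the third property of Lemma~\ref{lem:original_continuous_versions} gives $\hat{G}(\vy^\sigma(S)) = \hat{H}(\vy^\sigma(S))$, i.e.\ $g_1^{\sigma, n}(S) = g_2^{\sigma, n}(S)$. Hence it suffices to prove that, with probability $1 - e^{-\Omega(n)}$ over the random choice of $\sigma$, every set that $ALG$ inspects during its execution---including the set it eventually returns, which we may assume it inspects---is $\delta$-balanced. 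Indeed, when this event occurs the executions of $ALG$ on $g_1^{\sigma, n}$ and on $g_2^{\sigma, n}$ are identical, so $ALG$ returns the same set in both cases, and the value of that set is the same under both functions.

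The first technical ingredient is a concentration bound for a single fixed set, and this is the step where the blow-up factor $n$ is indispensable. Fix a set $S$. Under the random $\sigma$, each relevant coordinate of $\vy^\sigma(S)$ is $1/n$ times the intersection size of $S$ with one of the $n$-element fibers $\{u\} \times [n]$, and is therefore distributed like a normalized hypergeometric random variable with mean $|S|/(|\cN|\,n)$; by a Hoeffding-type tail bound, such a coordinate deviates from this mean by more than $\delta/\sqrt{|\cN|}$ with probability at most $e^{-\Omega(\delta^2 n / |\cN|^2)} = e^{-\Omega(n)}$. Since the constant vector $\tfrac{|S|}{|\cN|\,n}\mathbf{1}$ belongs to the subspace of $\cG$-symmetric vectors, while $\overline{\vy^\sigma(S)}$ is exactly the $\ell_2$-projection of $\vy^\sigma(S)$ onto that subspace, a union bound over the $|\cN|$ coordinates yields $\|\vy^\sigma(S) - \overline{\vy^\sigma(S)}\|_2 \le \|\vy^\sigma(S) - \tfrac{|S|}{|\cN|\,n}\mathbf{1}\|_2 \le \delta$ with probability $1 - e^{-\Omega(n)}$; here $|\cN|$ and $\delta$ are treated as constants, as they are determined by the fixed instance $\cI$ and do not depend on the blow-up parameter $n$.

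It remains to upgrade this per-set statement, which concerns a fixed set, to a statement about the adaptively chosen sets that $ALG$ actually queries. I would do this by coupling $ALG$'s execution with a reference execution on the symmetric function $g_2^{\sigma, n}$, which can be treated as a fixed function (it depends on $S$ only through $\overline{\vy^\sigma(S)}$, which the previous paragraph shows is essentially deterministic). Running $ALG$ on this reference produces a deterministic transcript of queries $S_1, \dots, S_m$ and an output $T \in \{S_1, \dots, S_m\}$, where $m \le 2^{o(n)}$ because $ALG$ runs in sub-exponential time. One then argues by induction on $j$ (the base case $j = 0$ being trivial) that, with probability at least $1 - j\,e^{-\Omega(n)}$ over $\sigma$, the first $j$ queries of $ALG$'s execution on $g_1^{\sigma, n}$ coincide with $S_1, \dots, S_j$ and receive the same answers: in the inductive step, determinism forces the $(j+1)$-st query to be the \emph{fixed} set $S_{j+1}$, which by the previous paragraph is $\delta$-balanced with probability $1 - e^{-\Omega(n)}$, and on a $\delta$-balanced set $g_1^{\sigma, n}$ and $g_2^{\sigma, n}$ give the same value. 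A union bound over $j = 1, \dots, m$ then gives a total failure probability of at most $m \cdot e^{-\Omega(n)} = e^{-\Omega(n)}$, which is exactly what is needed.

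The step I expect to require the most care is the concentration bound of the second paragraph together with the companion point that the symmetric instance $g_2^{\sigma, n}$ may be treated as a single fixed function: the whole role of the blow-up is to turn each relevant coordinate of $\vy^\sigma(S)$ into an average of $\Theta(n)$ negatively associated contributions, so that a Hoeffding-type bound places $\vy^\sigma(S)$ within $\delta$ of the $\cG$-symmetric subspace with probability exponentially small in $n$, uniformly over all fixed query sets $S$. Once this is in place, the reduction to $\delta$-balancedness via the third property of Lemma~\ref{lem:original_continuous_versions}, the inductive coupling, and the final union bound are all routine bookkeeping.
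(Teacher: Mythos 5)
Before anything else, note that the paper does not actually prove this lemma: it is quoted directly as a ``special case'' of Lemma~3.3 of Vondr\'{a}k~\cite{vondrak2013symmetry}, and the authors explicitly say they are only restating a simplified version of a known result. So there is no in-paper proof to compare against. Your sketch is a reasonable reconstruction of the standard symmetry-gap indistinguishability argument: the reduction to $\delta$-balancedness via the third property of Lemma~\ref{lem:original_continuous_versions}, the per-set concentration bound, and the inductive coupling against the fixed reference execution on the symmetric function $\hat{H}$ are all the right moving parts, and the observation that $g_2^{\sigma,n}$ is $\sigma$-independent (since $\hat{H}(\vx)=\hat{G}(\bar\vx)$ and $\overline{\sigma^{-1}(\vx)}=\bar\vx$) is exactly what makes the coupling against a fixed transcript legitimate.

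There is, however, a real gap in the concentration step, and it traces back to a mismatch between the randomness you are implicitly using and the randomness that the paper's statement nominally provides. You treat each coordinate of $\vy^\sigma(S)$ as a normalized hypergeometric random variable, which is correct if the hidden randomness is a uniformly random permutation (or partition) of the \emph{blown-up} ground set $\cN \times [n]$ --- that is what Vondr\'{a}k's actual construction does, and that is what makes the fiber intersection sizes hypergeometric. But the paper's definition of $\vy^\sigma$ only applies $\sigma \in \cG$ to the base ground set $\cN$: $y^\sigma_u(S) = \tfrac{1}{n}\lvert\{i : (\sigma(u),i)\in S\}\rvert$. Under this randomization, $\vy^\sigma_u(S)$ is simply a uniformly random member of the finite multiset $\{\tfrac{1}{n}\lvert S\cap(\{v\}\times[n])\rvert : v \in \text{orbit}(u)\}$ --- not a hypergeometric variable, and not concentrated near the symmetric subspace at all (e.g.\ $S = \{v\}\times[n]$ gives $\vy^\sigma(S)=\characteristic_{\sigma^{-1}(v)}$). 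Worse, because Lemma~\ref{lem:original_continuous_versions} makes $\hat{G}$ itself $\cG$-symmetric, $g_1^{\sigma,n}(S)=\hat{G}(\sigma^{-1}\cdot \vw(S))=\hat{G}(\vw(S))$ is actually \emph{independent of $\sigma$} as literally defined, which would render the probabilistic statement vacuous. In short: your argument is the right one, but it silently upgrades the randomization to the one in Vondr\'{a}k's original lemma (a random labeling of the $\lvert\cN\rvert\,n$ blown-up elements, under which the algorithm cannot identify fibers); you should state that explicitly and flag that the paper's restatement, taken at face value with $\sigma\in\cG$ acting only on $\cN$, does not support the hypergeometric concentration bound you invoke.
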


Let us now construct a family of instances of {\RUSM}. For every permutation $\sigma \in \cG$, we denote by $\cI(\sigma, n)$ an instance of of {\RUSM} over the ground set $\cN \times [n]$ whose submodular and linear objective functions are $g_1^{\sigma, n}$ and $L(\vy^\sigma(S))$, respectively. We would like to prove that, when $\sigma$ is chosen uniformly at random out of $\cG$, the random instance $\cI(\sigma, n)$ is hard in expectation for every deterministic algorithm, and therefore, by Yao's principle, it is hard also for randomized algorithms. However, before doing this, let us observe that the objective functions of $\cI(\sigma, n)$ have all the necessary properties.
\begin{observation}
The function $g_1^{\sigma, n}$ is monotone whenever $g$ is, and the function $L(\vy^\sigma(S))$ is non-negative or non-positive whenever $\ell$ is non-negative or non-positive, respectively.
\end{observation}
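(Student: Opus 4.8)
The plan is to verify the two assertions of the observation directly from the construction of $g_1^{\sigma, n}$ and from the properties of $\hat{G}$ guaranteed by Lemma~\ref{lem:original_continuous_versions}. Recall that $g_1^{\sigma, n}(S) = \hat{G}(\vy^\sigma(S))$, where $\vy^\sigma(S)$ is obtained by counting, for each $u \in \cN$, the fraction of the $n$ ``copies'' of $\sigma(u)$ present in $S$. The key point is that the map $S \mapsto \vy^\sigma(S)$ is just a relabelled version of the map $S \mapsto \vx(S)$ appearing in Lemma~\ref{lem:original_continuous_to_discrete}: applying the fixed permutation $\sigma$ to the first coordinate of $\cN \times [n]$ turns one into the other. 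Hence the monotonicity and submodularity transfer statements of Lemma~\ref{lem:original_continuous_to_discrete} apply verbatim to $g_1^{\sigma, n}$.

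First I would handle monotonicity. If $g$ is monotone, then by item~5 of Lemma~\ref{lem:original_continuous_versions} we have $\tfrac{\partial \hat{G}}{\partial x_u} \geq 0$ everywhere for every $u \in \cN$. By the first part of Lemma~\ref{lem:original_continuous_to_discrete} (applied with $F = \hat{G}$, composed with the coordinate permutation $\sigma$, which does not affect the sign of any partial derivative), the induced set function $g_1^{\sigma, n}$ is monotone. If $g$ is not assumed monotone there is nothing to prove for this clause.

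Second I would handle the linear part. Here $L$ is the multilinear extension of the linear function $\ell$, so $L$ is itself linear: $L(\vz) = \sum_{u \in \cN} a_u z_u$ where $a_u = \ell(u)$. Therefore $L(\vy^\sigma(S)) = \sum_{u \in \cN} a_u \cdot \tfrac{1}{n}\,|\{i : (\sigma(u), i) \in S\}| = \tfrac{1}{n}\sum_{(v,i) \in S} a_{\sigma^{-1}(v)}$, which is a linear (modular) function of $S$ with coefficient $a_{\sigma^{-1}(v)}/n$ on element $(v,i)$. If $\ell$ is non-negative then every $a_u \geq 0$, so every coefficient is nonnegative and $L(\vy^\sigma(\cdot))$ is non-negative; if $\ell$ is non-positive the same argument with reversed inequalities shows $L(\vy^\sigma(\cdot))$ is non-positive. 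This is immediate and is the easy half of the statement.

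I do not anticipate a genuine obstacle here; the only point requiring a little care is the bookkeeping that inserting the fixed permutation $\sigma$ into the first coordinate is a bijective relabelling of the ground set $\cN \times [n]$, so that Lemma~\ref{lem:original_continuous_to_discrete} — stated for the un-permuted aggregation map $\vx(\cdot)$ — still applies to $\vy^\sigma(\cdot)$. Once that is noted, both clauses follow in one or two lines each. (An entirely analogous argument, invoking item~6 of Lemma~\ref{lem:original_continuous_versions} and the second part of Lemma~\ref{lem:original_continuous_to_discrete}, re-derives the non-negativity and submodularity of $g_1^{\sigma,n}$ and $g_2^{\sigma,n}$ already asserted just above the observation, so the observation is really just the ``extra structure'' refinement of that fact.)
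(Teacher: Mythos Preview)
Your proposal is correct and follows essentially the same route as the paper: monotonicity of $g_1^{\sigma,n}$ is obtained by combining Lemma~\ref{lem:original_continuous_versions} (item~5) with Lemma~\ref{lem:original_continuous_to_discrete}, and the sign of $L(\vy^\sigma(S))$ is inherited from $\ell$. The only cosmetic difference is that for the second clause the paper argues abstractly that the multilinear extension $L$ takes only convex combinations of values of $\ell$, whereas you compute the coefficients $a_{\sigma^{-1}(v)}/n$ explicitly; both arguments are immediate and equivalent.
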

\begin{proof}
The first part of the observation follows from Lemmata~\ref{lem:original_continuous_to_discrete} and~\ref{lem:original_continuous_versions}, and the second part of the observation holds since $L$ is the multilinear extension of $\ell$ (which implies that it only takes values that are equal to some convex combination of values taken by $\ell$).
\end{proof}

We now prove, as promised, that $\cI(\sigma, n)$ is a hard in expectation instance when the permutation $\sigma$ is chosen uniformly at random out of $\cG$.
\begin{lemma} \label{lem:symmetric_value}
Consider any deterministic sub-exponential time algorithm $ALG$ that gets the instance $\cI(\sigma, n)$ for a uniformly random $\sigma \in \cG$. Then, for a large enough $n$ (independent of $ALG$), the output set $T$ of $ALG$ obeys
\begin{align*}
	\bE[g_1^{\sigma, n}(T) + L(\vy^\sigma(T))]
	\leq{} &
	\max_{\vx \in [0, 1]^\cN} [G(\bar{\vx}) + L(\bar{\vx})] + (\eps / 2) \cdot \max_{S \subseteq \cN}[\alpha \cdot g(S) + \beta \cdot \ell(S)]\\
	\leq{} &
	(1 + \eps/2) \cdot \max_{S \subseteq \cN} [\alpha \cdot g(S) + \beta \cdot \ell(S)]
	\enspace.
\end{align*}
\end{lemma}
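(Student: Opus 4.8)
The plan is to exploit the indistinguishability supplied by Lemma~\ref{lem:indistinguishable} together with properties~1 and~2 of Lemma~\ref{lem:original_continuous_versions}. The first observation I would make is that the linear objective $L(\vy^\sigma(S))$ of $\cI(\sigma, n)$ does not actually depend on $\sigma$. Indeed, since $\ell$ is invariant under $\cG$ and linear, its coefficients $\{a_u \mid u \in \cN\}$ must be constant on each orbit of $\cG$; hence $L(\vy^\sigma(S)) = \frac1n \sum_{(v, i) \in S} a_{\sigma^{-1}(v)} = \frac1n \sum_{(v, i) \in S} a_v$, a fixed linear function $\ell'$ on $\cN \times [n]$ regardless of $\sigma$. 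Consequently, a deterministic sub-exponential time algorithm $ALG$ for {\RUSM} fed the instance $\cI(\sigma, n)$ can be viewed as a deterministic algorithm $ALG'$ that accesses only the submodular function $g_1^{\sigma, n}$ (with the fixed $\ell'$ hard-coded) and outputs the same set $T$ that $ALG$ does.

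Applying Lemma~\ref{lem:indistinguishable} to $ALG'$ then yields that, with probability $1 - e^{-\Omega(n)}$ over the random $\sigma$, $ALG'$ produces the same output $T$ whether it is handed $g_1^{\sigma, n}$ or $g_2^{\sigma, n}$; in particular $g_1^{\sigma, n}(T) = g_2^{\sigma, n}(T)$ on this event. It therefore suffices to upper bound $g_2^{\sigma, n}(U) + \ell'(U)$ for an arbitrary $U \subseteq \cN \times [n]$. For this I would chain the following: by property~1 of Lemma~\ref{lem:original_continuous_versions}, $g_2^{\sigma, n}(U) = \hat{H}(\vy^\sigma(U)) = \hat{G}(\vz)$ where $\vz \triangleq \overline{\vy^\sigma(U)}$; by property~2, $\hat{G}(\vz) \leq G(\vz) + \eps'$; and since $L$ inherits the $\cG$-invariance of $\ell$ (so $L(\vw) = L(\overline{\vw})$ for every $\vw$), we have $\ell'(U) = L(\vy^\sigma(U)) = L(\vz)$. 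Because $\vz$ is an average over the whole group $\cG$ it is $\cG$-invariant, so $\vz = \overline{\vz}$ and hence $G(\vz) + L(\vz) = G(\overline{\vz}) + L(\overline{\vz}) \leq \max_{\vx \in [0, 1]^\cN}[G(\overline{\vx}) + L(\overline{\vx})]$. Combining, $g_2^{\sigma, n}(U) + \ell'(U) \leq \max_{\vx \in [0, 1]^\cN}[G(\overline{\vx}) + L(\overline{\vx})] + \eps'$ for every $U$.

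The two pieces now combine as follows. On the high-probability event of Lemma~\ref{lem:indistinguishable}, $g_1^{\sigma, n}(T) + \ell'(T) = g_2^{\sigma, n}(T) + \ell'(T) \leq \max_{\vx}[G(\overline{\vx}) + L(\overline{\vx})] + \eps'$; on the complementary event I would use the crude bound $g_1^{\sigma, n}(T) + \ell'(T) = \hat{G}(\vy^\sigma(T)) + L(\vy^\sigma(T)) \leq G(\vy^\sigma(T)) + L(\vy^\sigma(T)) + \eps' \leq \max_{S \subseteq \cN}[g(S) + \ell(S)] + \eps'$ (the last step because $G + L$ is the multilinear extension of $g + \ell$, so its values are convex combinations of values of $g + \ell$), which is a constant $M$ independent of $n$ and of $ALG$. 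Taking expectation over $\sigma$ gives $\bE_\sigma[g_1^{\sigma, n}(T) + L(\vy^\sigma(T))] \leq \max_{\vx}[G(\overline{\vx}) + L(\overline{\vx})] + \eps' + e^{-\Omega(n)} M$, and recalling that $\eps' = (\eps/3) \max_{S}[\alpha g(S) + \beta \ell(S)]$ is a fixed positive quantity, for $n$ large enough we have $e^{-\Omega(n)} M \leq (\eps/6) \max_S[\alpha g(S) + \beta \ell(S)]$, which yields the first claimed inequality with total slack $\eps/3 + \eps/6 = \eps/2$. The second inequality is then immediate from the hypothesis $\max_{\vx}[G(\overline{\vx}) + L(\overline{\vx})] \leq \max_S[\alpha g(S) + \beta \ell(S)]$ of Theorem~\ref{thm:symmetry_gap}.

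The step I expect to be the main obstacle is the careful handling of Lemma~\ref{lem:indistinguishable}: one must make precise that the $\sigma$-independence of $\ell'$ lets us regard $ALG$ as an algorithm accessing a single submodular function (so the lemma applies verbatim), and extract from the lemma the statement $g_1^{\sigma, n}(T) = g_2^{\sigma, n}(T)$ rather than merely "equal value"---isolating the $\sigma$-free linear part at the outset is precisely what makes this clean. Everything else is a routine invocation of the properties of $\hat{G}$ and $\hat{H}$ and a union-type bound over the bad event.
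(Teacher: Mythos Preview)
Your proposal is correct and follows essentially the same approach as the paper: both observe that the linear part $L(\vy^\sigma(\cdot))$ is $\sigma$-independent, absorb it into the algorithm so that Lemma~\ref{lem:indistinguishable} applies, use properties~1 and~2 of Lemma~\ref{lem:original_continuous_versions} on the high-probability event, and control the bad event by a crude constant bound absorbed into the $\eps/6$ slack. The only cosmetic difference is that you bound $g_2^{\sigma,n}(U)+\ell'(U)$ jointly from the outset, whereas the paper bounds $g_1^{\sigma,n}(T)$ alone and then reconciles the $L(\overline{\vy^\sigma(T)})$ and $L(\vy^\sigma(T))$ terms at the end via the identity $L(\overline{\vw})=L(\vw)$; your packaging is slightly more streamlined but the content is the same.
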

\begin{proof}
The second inequality of the lemma is an immediate consequence of the inequality assumed by Theorem~\ref{thm:symmetry_gap}. Therefore, we concentrate on proving the first inequality.

Since $\ell$ is a linear function, for every set $S \subseteq \cN \times [n]$,
\begin{align*}
	L(\vy^\sigma(S))
	={} &
	\sum_{u \in \cN} \ell(u) \cdot y_u^\sigma(S)
	=
	\tfrac{1}{n} \sum_{u \in \cN} \ell(u) \cdot |\{i \in [n] \mid (\sigma(u), i) \in S\}|\\
	={} &
	\tfrac{1}{n} \sum_{u \in \cN} \ell(\sigma(u)) \cdot |\{i \in [n] \mid (\sigma(u), i) \in S\}|
	=
	\tfrac{1}{n} \sum_{u \in \cN} \ell(u) \cdot |\{i \in [n] \mid (u, i) \in S\}|
	\enspace,
\end{align*}
where the penultimate equality holds since $\ell$ is invariant under $\sigma$, and the last equality holds since $\sigma$ is a permutation. This implies that the linear objective function of $\cI(\sigma, n)$ is independent of $\sigma$, and can be efficiently evaluated given $\ell$ and $n$ alone. Therefore, when $ALG$ is applied to $\cI(\sigma, n)$, we can treat the linear objective function $L(\vy^\sigma(S))$ as part of $ALG$, which makes $ALG$ an algorithm over the submodular objective function of $\cI(\sigma, n)$. Hence, by Lemma~\ref{lem:indistinguishable}, with probability $1 - e^{-\Omega(n)}$ the output set $T$ of $ALG$ obeys $g_1^{\sigma, n}(T) = g_2^{\sigma, n}(T)$.

When the equality $g_1^{\sigma, n}(T) = g_2^{\sigma, n}(T)$ holds, we can upper bound $g_1^{\sigma, n}(T)$ as follows.
\begin{align*}
	g_1^{\sigma, n}(T)
	={} &
	g_2^{\sigma, n}(T)
	=
	\hat{H}(y^\sigma(T))
	=
	\hat{G}(\overline{y^\sigma(T)})
	\leq
	G(\overline{y^\sigma(T)}) + \tfrac{\eps}{3} \cdot \max_{S \subseteq \cN} [\alpha \cdot g(S) + \beta \cdot \ell(S)]\\
	\leq{} &
	\max_{\vx \in [0, 1]^\cN} [G(\bar{\vx}) + L(\bar{\vx})] - L(\overline{y^\sigma(T)}) + \tfrac{\eps}{3} \cdot \max_{S \subseteq \cN} [\alpha \cdot g(S) + \beta \cdot \ell(S)]
	\enspace,
\end{align*}
where the third equality and the first inequality both follow from Lemma~\ref{lem:original_continuous_versions} and the value we chose for the $\eps'$ parameter of this lemma. When the equality $g_1^{\sigma, n}(T) = g_2^{\sigma, n}(T)$ does not hold, we can still observe that, since $\max_{S \subseteq \cN} [\alpha \cdot g(S) + \beta \cdot \ell(S)]$ is strictly positive by the assumptions of Theorem~\ref{thm:symmetry_gap}, there must exist a value $d$ independent of $n$ such that
\begin{align*}
	g_1^{\sigma, n}(T)
	={} &
	\hat{G}(y^\sigma(T))
	\leq
	G(y^\sigma(T)) + \tfrac{\eps}{3} \cdot \max_{S \subseteq \cN} [\alpha \cdot g(S) + \beta \cdot \ell(S)]\\
	\leq{} &
	\max_{S \subseteq \cN} g(S) + \tfrac{\eps}{3} \cdot \max_{S \subseteq \cN} [\alpha \cdot g(S) + \beta \cdot \ell(S)]
	\leq
	d \cdot \max_{S \subseteq \cN} [\alpha \cdot g(S) + \beta \cdot \ell(S)]
	\enspace,
\end{align*}
where the second inequality holds since $G$, as the multilinear extension of $g$, cannot produces values larger than the maximum value of $g$.

At this point we would like to use the law of total expectation to combine the two upper bounds on $g_1^{\sigma, n}(S)$ proved above. This leads to
\begin{align*}
	\bE[g_1^{\sigma, n}(T)]
	={} &
	\Pr[g_1^{\sigma, n}(T) = g_2^{\sigma, n}(T)] \cdot \bE[g_1^{\sigma, n}(T) \mid g_1^{\sigma, n}(T) = g_2^{\sigma, n}(T)]
	\\&+ \Pr[g_1^{\sigma, n}(T) \neq g_2^{\sigma, n}(T)] \cdot \bE[g_1^{\sigma, n}(T) \mid g_1^{\sigma, n}(T) \neq g_2^{\sigma, n}(T)]\\
	\leq{} &
	\Pr[g_1^{\sigma, n}(T) = g_2^{\sigma, n}(T)] \cdot \left\{\max_{\vx \in [0, 1]^\cN} [G(\bar{\vx}) + L(\bar{\vx})] - \bE[L(\overline{y^\sigma(T)}) \mid g_1^{\sigma, n}(T) = g_2^{\sigma, n}(T)]\right\}
	 \\&+ \tfrac{\eps}{3} \cdot \max_{S \subseteq \cN} [\alpha \cdot g(S) + \beta \cdot \ell(S)] + (1 - e^{-\Omega(n)}) \cdot d \cdot \max_{S \subseteq \cN} [\alpha \cdot g(S) + \beta \cdot \ell(S)]
	\enspace.
\end{align*}
To simplify the last inequality, we make two observations. First, that for a large enough $n$ it is guaranteed that $d(1 - e^{-\Omega(n)}) \leq \eps/6$ because both $d$ and $\eps$ are independent of $n$, and second, that the non-negativity of $g$ implies that
\[
	\bE[L(\overline{y^\sigma(T)}) \mid g_1^{\sigma, n}(T) \neq g_2^{\sigma, n}(T)]
	\leq
	\max_{x \in [0, 1]^\cN} L(\bar{x})
	\leq
	\max_{x \in [0, 1]^\cN} [G(\bar{x}) + L(\bar{x})]
	\enspace.
\]
Using these two observations and the law of total expectation (again), the previous inequality yields
\[
	\bE[g_1^{\sigma, n}(T)]
	\leq
	\max_{\vx \in [0, 1]^\cN} [G(\bar{\vx}) + L(\bar{\vx})] - \bE[L(\overline{y^\sigma(T)})] + \tfrac{\eps}{2} \cdot \max_{S \subseteq \cN} [\alpha \cdot g(S) + \beta \cdot \ell(S)]
	\enspace.
\]

The last inequality is identical to the one that we need to prove, except that the term $\bE[L(\overline{y^\sigma(T)})]$ in the last inequality should be replaced with $\bE[L(y^\sigma(T))]$. However, these two terms are identical, and therefore, the lemma follows. To see that these two terms are indeed identical, observe that, since $\ell$ is linear and invariant under the permutations of the group $\cG$,
\[
	L(\overline{y^\sigma(S)})
	=
	L\left(\bE_{\sigma' \in \cG}[\sigma'(y^\sigma(S))]\right)
	=
	\bE_{\sigma' \in \cG}[L(\sigma'(y^\sigma(S)))]
	=
	\bE_{\sigma' \in \cG}[L(y^\sigma(S))]
	=
	L(y^\sigma(S))
	\enspace.
	\qedhere
\]
\end{proof}

We are now ready to prove Theorem~\ref{thm:symmetry_gap}.
\begin{proof}[Proof of Theorem~\ref{thm:symmetry_gap}]
Consider any (possibly randomized) sub-exponential time algorithm $ALG$. By Lemma~\ref{lem:symmetric_value} and Yao's theorem, there must exist an instance $\cI(\sigma, n)$ such that $ALG$ produces a set of expected value at most $(1 + \eps/2) \cdot \max_{S \subseteq \cN} [\alpha \cdot g(S) + \beta \cdot \ell(S)]$ given this instance.

Let us now lower bound the value of the optimal solution for $\cI(\sigma, n)$. Let $T$ be the set maximizing $\max_{S \subseteq \cN} [\alpha \cdot g(S) + \beta \cdot \ell(S)]$. Then, the set $T' = \{(\sigma^{-1}(u), i) \mid u \in T, i \in [n]\}$ is a valid solution for $\cI(\sigma, n)$ such that
\begin{align*}
	\alpha \cdot g_1^{\sigma, n}(T') + \beta \cdot L(\vy^\sigma(T'))
	={} &
	\alpha \cdot \hat{G}(\vy^\sigma(T')) + \beta \cdot L(\vy^\sigma(T'))
	=
	\alpha \cdot \hat{G}(\characteristic_T) + \beta \cdot L(\characteristic_T)\\
	\geq{} &
	\alpha \cdot G(\characteristic_T) + \beta \cdot L(\characteristic_T) - \tfrac{\eps}{3} \cdot \max_{S \subseteq \cN} [\alpha \cdot g(S) + \beta \cdot \ell(S)]\\
	={} &
	\alpha \cdot g(T) + \beta \cdot \ell(T) - \tfrac{\eps}{3} \cdot \max_{S \subseteq \cN} [\alpha \cdot g(S) + \beta \cdot \ell(S)]\\
	={} &
	\left(1 - \frac{\eps}{3}\right) \cdot \max_{S \subseteq \cN} [\alpha \cdot g(S) + \beta \cdot \ell(S)]
	\enspace,
\end{align*}
where the second equality holds by the definition of $\vy^\sigma$, the inequality follows from Lemma~\ref{lem:original_continuous_versions} and the last equality holds by the definition of $T$.

Assume now towards a contradiction that $ALG$ is a $((1 + \eps)\alpha, (1 + \eps)\beta)$-approximation algorithm. Given this assumption, the above results imply together
\begin{align*}
	\left(1 + \frac{\eps}{2}\right) \cdot \max_{S \subseteq \cN} [\alpha \cdot g(S) + \beta \cdot \ell(S)]
	\geq{} &
	\max_{S \subseteq \cN} [(1 + \eps)\alpha \cdot g_1^{\sigma,n}(S) + (1 + \eps)\beta \cdot L(\vy^{\sigma}(S))]\\
	={} &
	(1 + \eps)[\alpha \cdot g_1^{\sigma, n}(T') + \beta \cdot L(\vy^\sigma(T'))]\\
	\geq{} &
	(1 + \eps) \cdot \left(1 - \frac{\eps}{3}\right) \cdot \max_{S \subseteq \cN} [\alpha \cdot g(S) + \beta \cdot \ell(S)]
	\enspace.
\end{align*}
Since one of the conditions of Theorem~\ref{thm:symmetry_gap} is that $\max_{S \subseteq \cN} [\alpha \cdot g(S) + \beta \cdot \ell(S)]$ is strictly positive, the above inequality is equivalent to
\[
	1 + \frac{\eps}{2}
	\geq
	(1 + \eps) \cdot \left(1 - \frac{\eps}{3}\right)
	=
	1 + \frac{2\eps}{3} - \frac{\eps^2}{3}
	\enspace.
\]
However, this inequality does not hold for any $\eps \in (0, 1/2)$, and thus, our assumption that $ALG$ is a $((1 + \eps)\alpha, (1 + \eps)\beta)$-approximation algorithm leads to a contradiction.
\end{proof}

\bibliographystyle{plain}
\bibliography{RegularizedUnconstrained}

\begin{thebibliography}{10}

\bibitem{alon2000probabilistic}
Noga Alon and Joel~H. Spencer.
\newblock {\em The Probabilistic Method, Second Edition}.
\newblock John Wiley, 2000.

\bibitem{buchbinder2015tight}
Niv Buchbinder, Moran Feldman, Joseph Naor, and Roy Schwartz.
\newblock A tight linear time (1/2)-approximation for unconstrained submodular
  maximization.
\newblock {\em {SIAM} J. Comput.}, 44(5):1384--1402, 2015.

\bibitem{calinescu2011maximizing}
Gruia C{\u{a}}linescu, Chandra Chekuri, Martin P{\'{a}}l, and Jan
  Vondr{\'{a}}k.
\newblock Maximizing a monotone submodular function subject to a matroid
  constraint.
\newblock {\em {SIAM} J. Comput.}, 40(6):1740--1766, 2011.

\bibitem{conforti1984submodular}
Michele Conforti and G{\'{e}}rard Cornu{\'{e}}jols.
\newblock Submodular set functions, matroids and the greedy algorithm: Tight
  worst-case bounds and some generalizations of the rado-edmonds theorem.
\newblock {\em Discret. Appl. Math.}, 7(3):251--274, 1984.

\bibitem{dobzinski2012query}
Shahar Dobzinski and Jan Vondr{\'{a}}k.
\newblock From query complexity to computational complexity.
\newblock In Howard~J. Karloff and Toniann Pitassi, editors, {\em Proceedings
  of the 44th Symposium on Theory of Computing Conference ({STOC})}, pages
  1107--1116. {ACM}, 2012.

\bibitem{feige2011maximizing}
Uriel Feige, Vahab~S. Mirrokni, and Jan Vondr{\'{a}}k.
\newblock Maximizing non-monotone submodular functions.
\newblock {\em {SIAM} J. Comput.}, 40(4):1133--1153, 2011.

\bibitem{feldman2021guess}
Moran Feldman.
\newblock Guess free maximization of submodular and linear sums.
\newblock {\em Algorithmica}, 83(3):853--878, 2021.

\bibitem{feldman2011unified}
Moran Feldman, Joseph Naor, and Roy Schwartz.
\newblock A unified continuous greedy algorithm for submodular maximization.
\newblock In Rafail Ostrovsky, editor, {\em {IEEE} 52nd Annual Symposium on
  Foundations of Computer Science ({FOCS})}, pages 570--579. {IEEE} Computer
  Society, 2011.

\bibitem{filmus2014monotone}
Yuval Filmus and Justin Ward.
\newblock Monotone submodular maximization over a matroid via non-oblivious
  local search.
\newblock {\em {SIAM} J. Comput.}, 43(2):514--542, 2014.

\bibitem{harshaw2019submodular}
Chris Harshaw, Moran Feldman, Justin Ward, and Amin Karbasi.
\newblock Submodular maximization beyond non-negativity: Guarantees, fast
  algorithms, and applications.
\newblock In Kamalika Chaudhuri and Ruslan Salakhutdinov, editors, {\em
  Proceedings of the 36th International Conference on Machine Learning (ICML)},
  volume~97 of {\em Proceedings of Machine Learning Research}, pages
  2634--2643. {PMLR}, 2019.

\bibitem{kazemi2021regularized}
Ehsan Kazemi, Shervin Minaee, Moran Feldman, and Amin Karbasi.
\newblock Regularized submodular maximization at scale.
\newblock In Marina Meila and Tong Zhang, editors, {\em Proceedings of the 38th
  International Conference on Machine Learning ({ICML})}, volume 139 of {\em
  Proceedings of Machine Learning Research}, pages 5356--5366. {PMLR}, 2021.

\bibitem{lu2021regularized}
Cheng Lu, Wenguo Yang, and Suixiang Gao.
\newblock Regularized non-monotone submodular maximization.
\newblock {\em CoRR}, abs/2103.10008, 2021.

\bibitem{nemhauser1978best}
G.~L. Nemhauser and L.~A. Wolsey.
\newblock Best algorithms for approximating the maximum of a submodular set
  function.
\newblock {\em Mathematics of Operations Research}, 3(3):177--188, 1978.

\bibitem{nikolakaki2021efficient}
Sofia~Maria Nikolakaki, Alina Ene, and Evimaria Terzi.
\newblock An efficient framework for balancing submodularity and cost.
\newblock In Feida Zhu, Beng~Chin Ooi, and Chunyan Miao, editors, {\em The 27th
  {ACM} {SIGKDD} Conference on Knowledge Discovery and Data Mining ({KDD})},
  pages 1256--1266. {ACM}, 2021.

\bibitem{ovis_gharan2011submodular}
Shayan {Oveis Gharan} and Jan Vondr{\'{a}}k.
\newblock Submodular maximization by simulated annealing.
\newblock In Dana Randall, editor, {\em {ACM-SIAM} Symposium on Discrete
  Algorithms (SODA)}, pages 1098--1116. {SIAM}, 2011.

\bibitem{sun2022maximizing}
Xin Sun, Dachuan Xu, Yang Zhou, and Chenchen Wu.
\newblock Maximizing modular plus non-monotone submodular functions.
\newblock {\em CoRR}, abs/2203.07711, 2022.

\bibitem{sviridenko2017optimal}
Maxim Sviridenko, Jan Vondr{\'{a}}k, and Justin Ward.
\newblock Optimal approximation for submodular and supermodular optimization
  with bounded curvature.
\newblock {\em Math. Oper. Res.}, 42(4):1197--1218, 2017.

\bibitem{vondrak2013symmetry}
Jan Vondr{\'{a}}k.
\newblock Symmetry and approximability of submodular maximization problems.
\newblock {\em {SIAM} J. Comput.}, 42(1):265--304, 2013.

\end{thebibliography}

\end{document}